\newtheoremstyle{mythm}{2ex}{2ex}{\itshape}{}{\color{darkgray}\normalfont\normalsize\bfseries\sffamily\slshape}{}{.5em}{\thmname{#1}\thmnumber{ #2}\thmnote{ (#3)}}
\newtheorem{prop}{Proposition}
\newtheorem{remark}{Remark}
\newcommand{\periodafter}[1]{#1.}
\titleformat{\section}{\color{RoyalBlue3}\normalfont\large\sffamily\bfseries}{\thesection}{1em}{}
\titleformat{\subsection}[runin]{\color{RoyalBlue4}\normalfont\normalsize\sffamily\bfseries}{\thesubsection}{1em}{\periodafter}
\titleformat{\subsubsection}[runin]{\color{RoyalBlue4}\normalfont\normalsize\sffamily\bfseries\itshape}{\thesubsubsection}{1em}{\periodafter}
\titleformat{\paragraph}[runin]{\color{RoyalBlue4}\normalfont\normalsize\slshape\bfseries\sffamily}{}{}{\periodafter}
\titlespacing{\section}{0pt}{3.5ex plus 1ex minus .2ex}{2.3ex plus .2ex}
\titlespacing{\subsection}{0pt}{3.5ex plus 1ex minus .2ex}{2.3ex plus .2ex}
\titlespacing{\subsubsection}{0pt}{3.25ex plus 1ex minus .2ex}{1ex plus .2ex}
\titlespacing{\paragraph}{0pt}{3ex plus 1ex minus .2ex}{1em}
\titlespacing*{\subparagraph}{1.3em}{2ex plus 1ex minus .2ex}{1em}
\newcommand{\proofstep}[1]{%
  \par
  \addvspace{\smallskipamount}
  \textit{#1\@addpunct{.}}\enspace\ignorespaces}
\definecolor{refkey}{named}{blue}
\definecolor{labelkey}{named}{blue}
\newcommand{\nes}{\hspace*{-0.6pt}}
\newcommand{\exs}{\hspace*{0.6pt}}
\newcommand{\alphad}{\dot{\alpha}}
\newcommand{\bfA} {\boldsymbol{A}}
\newcommand{\bfal}{\boldsymbol{\alpha}}
\newcommand{\bfald}{\dot{\bfal}}
\newcommand{\bfe} {\boldsymbol{e}}
\newcommand{\bfeps}{\boldsymbol{\varepsilon}}
\newcommand{\bfepsd}{\dot{\bfeps}}
\newcommand{\bfI} {\boldsymbol{I}}
\newcommand{\bfs} {\boldsymbol{s}}
\newcommand{\bfsig}{\boldsymbol{\sigma}}
\newcommand{\bfsigup}{\boldsymbol{\upsigma}}
\newcommand{\bfepsup}{\boldsymbol{\upepsilon}}
\newcommand{\bfsige}{\bfsig^{\mathrm{e}}}
\newcommand{\bfsigv}{\bfsig^{\mathrm{v}}}
\newcommand{\bfu} {\boldsymbol{u}}
\newcommand{\ii}{\text{i}}
\newcommand{\bfx} {\boldsymbol{x}}
\newcommand{\bfxi} {\boldsymbol{\xi}}
\newcommand{\bfze}{\mathbf{0}}
\newcommand{\CA}{\CS^{\star}_{\!A}}
\newcommand{\Ca}{\CS_{\alpha}}
\newcommand{\Caps}{\CS_{\alpha}^{\tiny\text{ps}}}
\newcommand{\Daps}{\DS_{\alpha}^{\tiny\text{ps}}}
\newcommand{\CaS}{\Ca\Ssup}
\newcommand{\Pcal}{\mathcal{P}}
\newcommand{\Qcal}{\mathcal{Q}}
\newcommand{\Rcal}{\mathcal{R}}
\newcommand{\Ce}{\CS_{\eps}}
\newcommand{\Ceps}{\CS_{\eps}^{\tiny\text{ps}}}
\newcommand{\Deps}{\DS_{\eps}^{\tiny\text{ps}}}
\newcommand{\pssup}{^{\tiny\text{ps}}}
\newcommand{\SSs}{\CSS_{\nes\sigma}}
\newcommand{\STs}{\CST_{\nes\sigma}}
\newcommand{\SSA}{\CSS_{\!A}}
\newcommand{\STA}{\CST_{\!A}}
\newcommand{\CeS}{\Ce\Ssup}
\newcommand{\CH}{\widehat{\CS}}
\newcommand{\CHT}{\widehat{\CS}{}\Tsup}
\newcommand{\Cm}{\CS_{\mathrm{m}}}
\newcommand{\Cms}{\CS^{\star}_{\mathrm{m}}}
\newcommand{\CmT}{\CS_{\mathrm{m}}}
\newcommand{\CmTps}{\CS_{\mathrm{m}}^{\tiny\text{ps}}}
\newcommand{\DmTps}{\DS_{\mathrm{m}}^{\tiny\text{ps}}}
\newcommand{\Cs}{\CS^{\star}_{\sigma}}
\newcommand{\CS} {\text{\boldmath $\mathcal{C}$}}
\newcommand{\CSS}{\text{\boldmath $\mathcal{S}$}}
\newcommand{\CST}{\text{\boldmath $\mathcal{T}$\!\nes}}
\newcommand{\DA}{\DS^{\star}_{\!A}}
\newcommand{\Da}{\DS_{\nes\alpha}}
\newcommand{\DaS}{\Da\Ssup}
\newcommand{\De}{\DS_{\eps}}
\newcommand{\bfLa}{\boldsymbol{\Lambda}}
\newcommand{\DeS}{\De\Ssup}
\newcommand{\del}[1][]{\partial_{#1}}
\newcommand{\demi} {\tfrac{1}{2}}
\newcommand{\der}[2]{\dfrac{\text{d}#1}{\text{d}#2}}
\newcommand{\dip} {\! :\!}
\newcommand{\Dm}{\DS_{\mathrm{m}}}
\newcommand{\DmT}{\DS_{\mathrm{m}}}
\newcommand{\Dms}{\DS^{\star}_{\mathrm{m}}}
\newcommand{\dotp}{\raisebox{2pt}{\hspace*{1pt}\scalebox{0.5}{$\bullet$}}\hspace*{1pt}}
\newcommand{\DS}{\mbox{\boldmath $\mathcal{D}$}}
\newcommand{\Ds}{\DS^{\star}_{\sigma}}
\newcommand{\dd}{\,\text{d}}
\newcommand{\dtau}{\,\text{d}\tau}
\newcommand{\dom}{\;\text{d}\omega}
\newcommand{\dth}{\;\text{d}\vartheta}
\newcommand{\ds}{\,\text{d}s}
\newcommand{\dt}{\,\text{d}t}
\newcommand{\epsd}{\dot{\eps}}
\newcommand{\eps}{\varepsilon}
\newcommand{\Fcal}{\mathcal{F}}
\newcommand{\FS}{\ensuremath{\mbox{\boldmath $\mathcal{F}$}}}
\newcommand{\inv}[1]{\dfrac{1}{#1}}
\newcommand{\IS}{\mbox{\boldmath $\mathcal{I}$}}
\newcommand{\Isub}{_{\text{\tiny I}}}
\newcommand{\JS}{\boldsymbol{\mathcal{J}}}
\newcommand{\KS}{\mbox{\boldmath $\mathcal{K}$}}
\newcommand{\Lcal}{\mathcal{L}}
\newcommand{\Lpar}{\Big(}
\newcommand{\lpar}{\big(}
\newcommand{\oo}{\omega}
\newcommand{\sign}{\text{sgn}}
\newcommand{\bdot}{\boldsymbol{\cdot}}
\newcommand{\QS}{\mbox{\boldmath $\mathcal{Q}$}}
\newcommand{\Rbb} {\mathbb{R}}
\newcommand{\Rpar}{\Big)}
\newcommand{\rpar}{\big)}
\newcommand{\kaC}{\kappa_{\textit{\nes\tiny C}}}
\newcommand{\kaS}{\kappa^{\sigma}}
\newcommand{\kaD}{\kappa_{\textit{\nes\tiny D}}}
\newcommand{\muC}{\mu_{\textit{\nes\tiny C}}}
\newcommand{\muD}{\mu_{\textit{\nes\tiny D}}}
\newcommand{\hkaS}{\hat{\kappa}{}^{\sigma}}
\newcommand{\hkaD}{\!\!\!\hat{\,\,\,\kaD}}
\newcommand{\hmuD}{\!\!\hat{\,\,\muD}}
\newcommand{\hmuC}{\!\!\hat{\,\,\muC}}
\newcommand{\hkaC}{\!\!\!\hat{\,\,\,\kaC}}
\newcommand{\hkaCS}{\hkaC^{\!\!\!\sigma}}
\newcommand{\hmuCS}{\hmuC^{\!\!\!\sigma}}
\newcommand{\kaCS}{\kaC^{\sigma}}
\newcommand{\muCS}{\muC^{\sigma}}
\newcommand{\Csub}{_{\textit{\nes\tiny C}}}
\newcommand{\bfB} {\boldsymbol{B}}
\newcommand{\bfC} {\boldsymbol{C}}
\newcommand{\bfD} {\boldsymbol{D}}
\newcommand{\bfM} {\boldsymbol{M}}
\newcommand{\bfE} {\boldsymbol{E}}
\newcommand{\bfJ} {\boldsymbol{J}}
\newcommand{\bfS} {\boldsymbol{S}}
\newcommand{\bfU} {\boldsymbol{U}}
\newcommand{\bbA} {\textbf{A}}
\newcommand{\bbB} {\textbf{B}}
\newcommand{\bfP} {\boldsymbol{P}}
\newcommand{\bfQ} {\boldsymbol{Q}}
\newcommand{\bfR} {\boldsymbol{R}}
\newcommand{\bbC} {\text{\bf C}}
\newcommand{\bbP} {\text{\bf P}}
\newcommand{\bbR} {\text{\bf R}}
\newcommand{\matp} {\mathfrak{p}}
\newcommand{\matq} {\mathfrak{q}}
\newcommand{\matr} {\mathfrak{r}}
\newcommand{\mats} {\mathfrak{s}}
\newcommand{\hh} {\hspace*{0.5pt}}
\newcommand{\shdeq}{\hspace*{-0.1em}:=\hspace*{-0.1em}}
\newcommand{\sheq}{\hspace*{-0.1em}=\hspace*{-0.1em}}
\newcommand{\shm}{\hspace*{-0.1em}-\hspace*{-0.1em}}
\newcommand{\shneq}{\hspace*{-0.1em}\not=\hspace*{-0.1em}}
\newcommand{\shp}{\hspace*{-0.1em}+\hspace*{-0.1em}}
\newcommand{\sip} {\! \cdot\!}
\newcommand{\Ssup}{^{\text{\tiny S}}}
\newcommand{\tdemi} {\tfrac{1}{2}\exs}
\newcommand{\tens}{\hspace*{-1pt}\otimes\hspace*{-1pt}}
\newcommand{\Tsup}{^{\text{\tiny T}}}
\newcommand{\psupp}{^{\sigma}}
\newcommand{\psup}{^{\text{\tiny (12)}}}
\newcommand{\rsup}{^{\text{\tiny (3)}}}
\newcommand{\psupe}{^{\text{e{\tiny (12)}}}}
\newcommand{\psupv}{^{\text{v{\tiny (12)}}}}
\newcommand{\Dcal}{\mathcal{D}}
\newcommand{\Hcal}{\mathcal{H}}
\newcommand{\Ncal}{\mathcal{N}}
\newcommand{\Ocal}{\mathcal{O}}
\newcommand{\Scal}{\mathcal{S}}
\renewcommand{\del}[1]{\partial_{#1}}
\newcommand{\hBS}{\hat{\bfB}{}^{\sigma\!}}
\newcommand{\hCS}{\hat{\bfC}{}^{\sigma\!}}
\newcommand{\hCE}{\hat{\bfE}{}^{\sigma\!}}
\newcommand{\hCJ}{\hat{\bfJ}{}^{\sigma\!}}
\newcommand{\bfCJ}{\bfJ^{\sigma\!}}
\newcommand{\tauT}{\tilde{\tau}}
\newcommand{\thetaT}{\tilde{\theta}}
\newcommand{\DOT}{\protect\scalebox{0.45}{$\bullet$}}
\title{\color{RoyalBlue3}\normalfont\Large\slshape\bfseries\sffamily
On the constitutive behavior of linear viscoelastic solids \\ under the plane stress condition}
\author[1]{Bojan B. Guzina\thanks{Corresponding author (guzin001@umn.edu)}} 
\author[2]{Marc Bonnet} 
\affil[1]{\small{Dept. of Civil, Environmental, and Geo-Engineering, University of Minnesota, Twin Cities}\vspace*{1mm}} 
\affil[2]{{\small POEMS (CNRS-INRIA-ENSTA), Dept. of Applied Mathematics, ENSTA Paris, France}}
\date{\today}
\begin{document}
\maketitle

\begin{abstract}

\noindent Motivated by the recent experimental and analytical developments enabling high-fidelity material characterization of (heterogeneous) sheet-like solid specimens, we seek to elucidate the constitutive behavior of linear viscoelastic solids under the plane stress condition. More specifically, our goal is to expose the relationship between the plane-stress viscoelastic constitutive parameters and their (native) ``bulk'' counterparts. To facilitate the sought reduction of the three-dimensional (3D) constitutive behavior, we deploy the concept of projection operators and focus on the frequency-domain behavior by resorting to the Fourier transform and the mathematical framework of tempered distributions, which extends the Fourier analysis to functions (common in linear viscoelasticity) for which Fourier integrals are not convergent. In the analysis, our primary focus is the on class of linear viscoelastic solids whose 3D rheological behavior is described by a set of constant-coefficient ordinary differential equations, each corresponding to a generic arrangement of ``springs'' and ``dashpots''. On reducing the general formulation to the isotropic case, we proceed with an in-depth investigation of viscoelastic solids whose bulk and shear modulus each derive from a suite of classical ``spring and dashpot'' configurations. To enable faithful reconstruction of the 3D constitutive parameters of natural and engineered solids via (i) thin-sheet testing and (ii) applications of the error-in-constitutive-relation approach to the inversion of (kinematic) sensory data, we also examine the reduction of thermodynamic potentials describing linear viscoelasticity under the plane stress condition. The analysis is complemented by a set of analytical and numerical examples, illustrating the effect on the plane stress condition on the behavior of isotropic and anisotropic viscoelastic solids.

\end{abstract}

\section{Introduction}

\noindent In medicine, the use of diagnostic ultrasound and magnetic resonance imaging  has paved the way toward quantitative remote sensing of \emph{interior deformation} \mbox{\cite{sigrist2017ultrasound, mariappan2010magnetic}}, which can then be used as sensory data for the constitutive characterization of soft tissues in service of pathology. The class of inverse solutions that operate on this premise are commonly known as \emph{elastography} \cite{parker2005unified,sigrist2017ultrasound}. In the context of dissipative constitutive behavior, there are multiple avenues to elastography including direct algebraic inversion \cite{oliphant2001complex,sinkus2005imaging}, adjoint state techniques \cite{tan2016gradient} and error-in-constitutive-relation (ECR) approach (see~\cite{diaz2015modified,jmps2024} and references therein), that commonly seek to reconstruct the maps of (heterogeneous) viscoelastic properties that are ``most compatible" with the full-field interior kinematic data. 

In experimental mechanics, recent advances in digital image correlation (DIC) and scanning laser Doppler vibrometry have respectively enabled the high-fidelity monitoring of (i) quasi-static deformation fields~\cite{sause2016digital,sanchez2008use}, and (ii) ultrasonic wave fields~\cite{tokmashev2013experimental,pourahmadian2018elastic}, on the surface of solid bodies. In turn, these capabilities have opened a door for the applications of elastography toward high-fidelity constitutive characterization of \emph{sheet-like} (or slab-like) laboratory specimens of either natural or engineered materials. In such an approach the in-plane kinematic data, captured across either side of the sheet, play the role of the full-field deformation maps consumed by two-dimensional (2D) elastography. Recently, it was demonstrated experimentally~\cite{pourahmadian2018elastic} that such sensory data are amenable to a constitutive interpretation within the framework of \emph{plane-stress elasticity}, provided that the in-plane length scales (e.g. the wavelength of ultrasonic motion) exceed the thickness of a specimen by a ``decade'' -- implying a factor of 4-5 or higher. 

In linear elasticity, well-known expressions relating the (native) elastic moduli of a 3D solid to their plane-stress equivalents~\cite{Kolsky,Malvern} provide a direct means for interpreting the ``raw'' results of 2D plane-stress elastography. When considering the like characterization of dissipative solids, on the other hand, the situation is far less clear. For instance, efforts have been made to infer the 2D stress field from the in-plane strain measurements in sheet-like viscoelastic specimens~\cite{yoneyama1999evaluation,taguchi2020computing}, which inherently assumes prior knowledge of the plane-stress viscoelastic moduli. In a similar vein, the authors in~\cite{comitti2024thermomechanical} pursued mechanical characterization of homogeneous viscoelastic membranes with the aid of a DIC system. In their study, membrane specimens were characterized through the prism of 2D  viscoelasticity, endowed with an orthotropic creep compliance matrix expanded in Prony series. Unfortunately, no effort was made to relate the resulting plane-stress compliances and relaxation times to the native (i.e.~``bulk") properties of a 3D viscoelastic solid.  

To help bridge the gap, our aim is to elucidate the constitutive behavior of linear viscoelastic solids under the plane stress condition. To this end we first revisit in Section~\ref{pspsx} the venerable concept of plane stress elasticity from the viewpoint of \emph{projection operators}, which caters for a compact treatment of the viscoelastic problem. For an in-depth analysis, in Section~\ref{ps:visco} we focus on the frequency-domain behavior by resorting to the Fourier transform and the mathematical framework of \emph{tempered distributions}, which extends the Fourier analysis to functions (prominent in linear viscoelasticity) for which Fourier integrals may not be convergent. In terms of the native 3D constitutive behavior, we consider the genus of viscoelastic models described by a linear constant-coefficient differential equation (whose Fourier image is hence polynomial), capable of describing (linear) rheological behavior affiliated with a generic arrangement of ``springs'' and ``dashpots''. With such premise, our analysis proceeds with the focus on isotropic viscoelastic solids -- whose bulk and shear modulus each derive from a suite of classical ``polynomial'' models. For generality, we also examine in Section~\ref{GSM} the reduction of \emph{thermodynamic potentials} describing linear viscoelasticity under the plane stress condition, paving the way for future applications of the error-in-constitutive-relation approach~\cite{jmps2024} to material characterization of sheet-like solid specimens. We consider a particular class of thermodynamic models, known as generalized standard materials (GSM), that are described by the Helmholtz free energy and dissipation potential densities. Loosely speaking, we find that the plane-stress reductions of the fourth-order elasticity and dissipation tensors (featured by the GSM description) follow the ``\emph{generalized-inverse-of-the-projected-inverse}'' pattern featured by the preceding analysis. 
Bulk memory functions are next derived in Section~\ref{A3} for the plane-stress forms of several cases of isotropic materials whose three-dimensional bulk and shear responses are described by classical scalar viscoelastic models. The study is concluded in Section~\ref{sec:numex} by a set of examples, illustrating the effect on the plane stress condition on the behavior of both isotropic and anisotropic viscoelastic solids.

\section{Plane strain and plane stress in linear elasticity}
\label{pspsx}

\subsection{Preliminaries} In what follows we let
\[
i,j,k,l=\overline{1,3}, \qquad \matp,\matq,\matr,\mats=\overline{1,2},
\]
and refer the analysis to a Cartesian
coordinate system endowed with an orthonormal basis~$\{\bfe_i\}$. On adopting the Einstein summation convention, we introduce the position vectors $\bfx=\xi_i\hh\bfe_i\!\in\!\mathbb{R}^3$, $\bfxi=\xi_\matp\hh\bfe_\matp\in\mathbb{R}^2$, and write the fourth-order elasticity tensor as $\bfC=C_{ijkl}\exs\bfe_i\otimes\bfe_j\otimes\bfe_k\otimes\bfe_l$. Letting $\bfu=u_i \hh \bfe_i$, $\bfeps\!=\!\eps_{ij} \hh\hh \bfe_i\otimes\bfe_j=\tfrac{1}{2}(\nabla_{\!\bfx}\bfu+\nabla_{\!\bfx}\Tsup\bfu)$ and $\bfsig\!=\!\sigma_{ij} \hh\hh \bfe_i\otimes\bfe_j=\bfC\dip\bfeps$ denote respectively the germane displacement field, linearized strain field, and Cauchy stress field, we recall the quintessential two-dimensional approximations
\begin{equation} \label{ps1}
\begin{aligned}
\text{Plane strain:} & \quad \bfu(\bfx) = u_\matp(\bfxi)\hh\bfe_\matp  &\implies& 
\quad \eps_{j3} = \eps_{3j} = 0 \\
\text{Plane stress:} & \quad \sigma_{j3} = \sigma_{3j} = 0 &\stackrel{(\star)}{\implies}& 
\quad\bfsig(\bfx) := \sigma_{\matp\matq}(\bfxi)\hh \bfe_\matp \otimes \bfe_\matq 
\end{aligned} 
\end{equation}
referred to the $\bfxi$-plane, where the plane-stress inference ($\star$) is a customary simplifying assumption made possible by the balance of linear momentum $\nabla_{\!\bfx}\sip\hh\bfsig = \bfze$ being reduced to $\nabla_{\!\bfxi}\sip\hh\bfsig = \bfze$. For completeness, we note that the latter argument holds equally in the presence of body forces $\boldsymbol{f}(\bfx) := f_\matp(\bfxi)\hh \bfe_\matp$.    

\subsection{Projection operators}

Plane-strain and plane-stress 2D constitutive models each relate the in-plane stress tensor and corresponding strain tensor. With this in mind,
we introduce the fourth-order tensors
\begin{equation} \label{newten1}
\begin{aligned}
\boldsymbol{I} \,=\, \delta_{ik}\delta_{jl} \, \bfe_i\otimes\bfe_j\otimes\bfe_k\otimes\bfe_l, \qquad
\bfP \,=\,  \delta_{\matp\matr}\delta_{\matq\mats} \, \bfe_\matp\otimes\bfe_\matq\otimes\bfe_\matr\otimes\bfe_\mats,  \qquad
\bfR \,=\, \boldsymbol{I} - \bfP,
\end{aligned}
\end{equation}
with~$\bfI$ being the fourth-order identity tensor. Clearly, $\bfI\dip\bfP = \bfP\dip\bfI =\bfP$, $\bfI\dip\bfR=\bfR$, and $\bfR \dip\bfI = (\bfI\!-\!\bfP):\bfI = \bfR$. We also note that~$\bfP$ and~$\bfR$ verify
\begin{equation} \label{proj1}
\begin{aligned}
  \bfP\dip\bfP \,=\, \bfP, \qquad
  \bfP\dip\bfR = \bfR\hh\dip\bfP  = \bfze, \qquad
  \bfR\dip\bfR = \bfR\dip(\bfI-\bfP) = \bfR,
\end{aligned}
\end{equation}
the first and third equalities being characteristic properties of projection operators.
The definitions in~\eqref{newten1} allow us to write
\begin{equation} \label{deco1}
\bfsig = \bfsig\psup + \bfsig\rsup, \qquad
\bfeps = \bfeps\psup + \bfeps\rsup,
\end{equation}
where
\begin{equation} \label{ps2}
\begin{aligned}
\bfsig\psup &= \bfP \dip \bfsig = \sigma_{\matp\matq}\, \bfe_\matp\otimes\bfe_\matq, &\qquad \bfsig\rsup &= \bfR \dip \bfsig = \sigma_{\matp 3}\, (\bfe_\matp\otimes\bfe_3 + \bfe_3\otimes\bfe_\matp) + \sigma_{33}\, \bfe_3\otimes\bfe_3, \\
\bfeps\psup &= \bfP \dip \bfeps =  \eps_{\matp\matq}\, \bfe_\matp\otimes\bfe_\matq, &\qquad \bfeps\rsup &= \bfR \dip \bfeps = \eps_{\matp 3}\, (\bfe_\matp\otimes\bfe_3 + \bfe_3\otimes\bfe_\matp) + \eps_{33}\, \bfe_3\otimes\bfe_3,
\end{aligned}
\end{equation}
making use of the facts that $\sigma_{\matp 3}=\sigma_{3\matp}$ and $\eps_{\matp 3}=\eps_{3\matp}$. Recalling~\eqref{ps1}, we find that
\begin{equation}\label{ps3}
\begin{aligned}
\bfeps\rsup &= \bfze, &\quad \bfeps &= \bfeps\psup = \bfP\dip\bfeps = \bfP\dip\bfeps\psup   &\quad& \text{(plane strain)}, \\
\bfsig\rsup &= \bfze, &\quad \bfsig &= \bfsig\psup = \bfP\dip\bfsig = \bfP\dip\bfsig\psup  &\quad& \text{(plane stress)}.
\end{aligned}
\end{equation}

\subsection{Plane strain}

Thanks to~\eqref{ps1}, \eqref{newten1}, \eqref{deco1} and~\eqref{ps3}, we have
\begin{equation}
\bfsig = \bfC \dip \bfeps
= \bfC \dip \bfP \dip \bfeps\psup  \qquad \Rightarrow \qquad
\bfsig\psup = \bfC^{\eps}\dip \bfeps\psup,
\end{equation}
where
\begin{equation} \label{strain1}
\bfC^{\exs\eps} \,=\, \bfP \dip\bfC \dip \bfP \,=\,
C_{\matp\matq\matr\mats} \, \bfe_\matp\otimes\bfe_\matq\otimes\bfe_\matr\otimes\bfe_\mats,
\end{equation}
which recovers the expected result in that the plain-strain moduli equal their ``bulk" counterparts, i.e. $C^{\exs\eps}_{\matp\matq\matr\mats} = C_{\matp\matq\matr\mats}$. For completeness, we also note that $\bfsig\rsup = \bfR\dip\bfC\dip\bfP\dip\bfsig\psup$.

\subsection{Plane stress} \label{psproj}

In this case, from~\eqref{ps1}--\eqref{ps3} we find that
\begin{equation}\label{stress2}
\bfP\dip\bfsig\psup = \bfsig\psup \,=\, \bfsig \,=\, \bfC \dip (\bfeps\psup + \bfeps\rsup)
 = \bfC \dip\bfeps\psup + \bfC\dip\bfR\dip\bfeps,
\end{equation}
and hence
\begin{equation}
  \bfP\dip\bfC^{-1}\dip\bfP\dip\bfsig\psup = \bfP\dip\bfeps\psup + \bfP\dip\bfR\dip\bfeps
 = \bfP\dip\bfeps\psup = \bfeps\psup.
\end{equation}
Solving for~$\bfsig\psup$, we accordingly obtain
\begin{equation}\label{stress3}
\bfsig\psup = \bfC^{\exs\sigma} \dip \bfeps\psup,
\end{equation}
where
\begin{equation}\label{stress4}
\bfC^{\exs\sigma} \,=\, \big(\bfP\dip\bfC^{-1}\dip\bfP \big)^{\dagger} \,:=\, C_{\matp\matq\matr\mats}^{\exs\sigma} \, \bfe_\matp\otimes\bfe_\matq\otimes\bfe_\matr\otimes\bfe_\mats.
\end{equation}

\begin{remark} \label{MP}
For any fourth-order tensor $\bfA$ endowed with major and minor symmetries, $\big(\bfP\dip\bfA\dip\bfP\big)^{\dagger}$ in~\eqref{stress3}--\eqref{stress4} denotes the Moore-Penrose pseudoinverse of the projection~$\bfP\dip\bfA\dip\bfP$ that accounts for (a portion of) its null space spanned by $\{\bfe_i\otimes\bfe_j\otimes\bfe_k\otimes\bfe_l\}\setminus\{\bfe_\matp\otimes\bfe_\matq\otimes\bfe_\matr\otimes\bfe_\mats\}$. More specifically, $\big(\bfP\dip\bfA\dip\bfP\big)^{\dagger}$ amounts to computing the usual inverse $\big(\bfP\dip\bfA\dip\bfP\big)^{-1}$ in the reduced basis $\{\bfe_\matp\otimes\bfe_\matq\otimes\bfe_\matr\otimes\bfe_\mats\}$, and then restating the result in $\{\bfe_i\otimes\bfe_j\otimes\bfe_k\otimes\bfe_l\}$.
Practical details on how to compute $\bfC^{\exs\sigma}$ within the framework of matrix algebra (using the so-called Voigt or contracted notation~\cite{voigt1910lehrbuch,ting:96}) are provided in Appendix~\ref{A:Voigt}. 
\end{remark}

For completeness, we recall that the plane-stress assumption generally violates the kinematic compatibility relations~\cite[e.g.][]{Malvern} in that $\nabla\!\times \bfeps(\bfx)\times\! \nabla\Tsup \!\neq\! \bfze$, where $\nabla\times$ is the curl operator. Specifically, since the second of~\eqref{ps1} implies $\bfeps(\bfx)\!=\!\bfeps(\bfxi)$, the featured compatibility relations include the requirements  
\[
\frac{\partial^2\eps_{33}}{\partial\xi_\matp \partial\xi_\matq} \,=\, C_{33\matp\matq}\, \frac{\partial^2\sigma_{\matp\matq}}{\partial\xi_\matp \partial\xi_\matq} \,=\, 0,  
\]
that are met only for particular stress distributions where the linear combination $C_{33\matp\matq}\sigma_{\matp\matq}(\bfxi)$ is an affine function of $\xi_1$ and~$\xi_2$. Such possible compatibility violations are usually acceptably ignored for solids that are ``thin'' (relative to other characteristic length scales in the problem) in the out-of-plane direction.

\section{Plane stress in linear viscoelasticity}
\label{ps:visco}

\noindent Hereon, our focus is on linear viscoelastic solids whose states have quiescent past until time $t\!=\!0$. Their response will accordingly be described via causal functions (i.e. functions $f$ such that $f(t)\!=\!0$ for~$t\!\leqslant\!0$) and causal distributions (whose evaluation on any smooth test function supported in $\Rbb_{<0}$ vanishes). On recalling the convolution product
\[
f(t)\star g(t) \,:=\, \int_{-\infty}^{\infty} f(\tau)\hh g(t-\tau) \dd\tau \,=\, \int_{-\infty}^{\infty} f(t-\tau)\hh g(\tau) \dd\tau, 
\]
for such a generic viscoelastic solid we have
\begin{equation} \label{visco1}
  \bfsig(\bfx,t) \,=\, \bfE(t) \star \dot{\bfeps}(\bfx,t), 
\end{equation}
where $\dot{f}\!=\!\partial{f}/\partial{t}$; $\bfE(t)$ is a fourth-order tensor, composed of causal \emph{stress relaxation functions} $E_{ijkl}(t)$, endowed with major and minor symmetries; and the convolution symbol ``$\star$'' implies requisite tensor contraction(s) whenever appropriate. Taking advantage of the blanket causality assumption, we obtain
\begin{equation} \label{convodef}
\bfE(t) \star \dot{\bfeps}(\bfx,t)
 = \int_{0}^{t} \bfE(t\!-\!\tau)  \hspace*{-0.4pt}: \dot{\bfeps}(\bfx,\tau)\dtau     
\end{equation}
when the time integral is finite. However, as will be demonstrated later, viscoelastic relationships often involve variables or time-dependent moduli that make the above integral ill-defined due to insufficient decay at infinity, singularities inherent to~$\bfE$, or strain discontinuities. Similarly, we will need to handle Fourier transforms whose integrals are not well defined in the ordinary sense. To help manage such impediments, we pursue the analysis within the framework of classical distribution theory for the time variable, which allows all relevant quantities and operations to be well-defined while preserving generality and maintaining conciseness. For example, convolutions such as~\eqref{visco1} are meaningful for all pairs of causal distributions, and yield causal distributions, even when their integral forms break down. Amid an abundant literature, we refer to the lecture notes~\cite{salo:13} which provide all precise definitions and statements pertaining to distribution theory, convolution and Fourier analysis that address our needs.\enlargethispage*{3ex}

In many applications, \eqref{visco1} is conveniently restated (using the identity $(u\star v)^{\DOT}=u\star \dot{v} = \dot{u}\star v$ for any pair of convolvable distributions $\{u,v\}$) as
\begin{equation} \label{visco1cm}
  \bfsig(\bfx,t) = \big( \bfE(t) \star \bfeps(\bfx,t) \big)^{\DOT} = \dot{\bfE}(t) \star \bfeps(\bfx,t) := \bfC(t) \star \bfeps(\bfx,t),
\end{equation}
where $\bfC(t)\!=\!\dot{\bfE}(t)$ is the so-called \emph{tensor memory function}, or memory tensor for short. We will use the memory tensor as a starting description of the 3D constitutive behavior of viscoelastic solids for obtaining the respective plane-stress constitutive models, and then deduce the plane-stress relaxation tensor accordingly.

The inverse form of~\eqref{visco1cm} is often formulated as
\begin{equation} \label{visco2}
  \bfeps(\bfx,t) \,=\, \bfJ(t) \star \dot{\bfsig}(\bfx,t),
\end{equation}
where $\bfJ(t)$ is a fourth-order tensor composed of causal \emph{creep compliance functions} $J_{ijkl}(t)$ that carries major and minor symmetries.  To emphasize their physical meaning, $\bfE(t)$ and $\bfJ(t)$ are illustrated in Fig.~\ref{relax1}. To facilitate the analysis, viscoelastic relationship~\eqref{visco2} can be conveniently rewritten as 
\begin{equation} \label{visco:inv}
  \bfeps(\bfx,t) \,=\, \bfD(t)\star\bfsig(\bfx,t),
\end{equation}
where the tensor function $\bfD(t):=\,\dot{\!\bfJ}(t)$ can be interpreted as the (convolutional) inverse memory tensor. Indeed, combining~\eqref{visco1cm} and~\eqref{visco:inv}, we must have
\begin{equation}
  \bfeps(\bfx,\dotp) \,=\, \bfD\star \lpar \bfC\star\bfeps(\bfx,\dotp) \rpar
   \,=\, \lpar \bfD\star\bfC \rpar \star\bfeps(\bfx,\dotp) \qquad\implies\qquad \bfD(t)\star\bfC(t)=\delta(t)\hh \IS, \label{CD=I}
\end{equation}
(the convolution operation being associative for any triple of causal distributions, their supports being a convolvable family) where $\IS=\tdemi(\delta_{ik}\delta_{jl}+\delta_{il}\delta _{jk})\,\bfe_i\!\otimes\!\bfe_j\!\otimes\!\bfe_k\!\otimes\!\bfe_l$ is the symmetric fourth-order identity tensor.\enlargethispage*{3ex}

\begin{figure}[ht]
\centering{\includegraphics[width=0.75\linewidth]{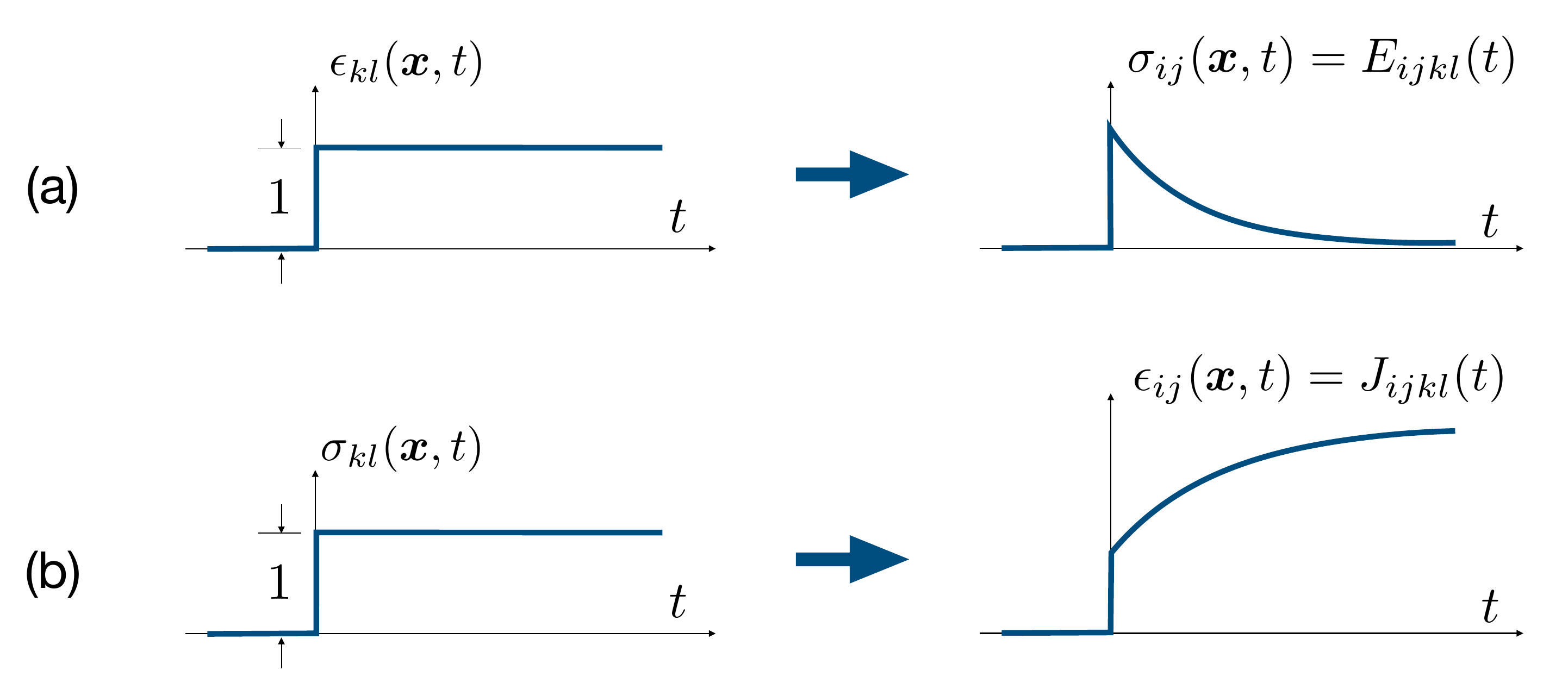}}
\caption{Schematics of (a) stress relaxation function $E_{ijkl}(t)$ assuming
$\bfeps(\bfx,t) = H(t)\hh \bfe_k\nes\otimes\nes\bfe_l$
and (b) creep function $J_{ijkl}(t)$ assuming
$\bfsig(\bfx,t) = H(t)\hh \bfe_k\!\otimes\bfe_l$.} \label{relax1}
\end{figure}

On recalling~\eqref{ps1} with an added temporal variable, namely
\begin{equation} \label{ps1bisx}
\begin{aligned}
\text{Plane strain:} & \qquad \bfu(\bfx,t) = u_\matp(\bfxi,t)\hh\bfe_\matp &\implies&
\quad   \eps_{j3}(\bdot,t) = \eps_{3j}(\bdot,t) = 0 \\
\text{Plane stress:} & \qquad \sigma_{j3}(\bdot,t) = \sigma_{3j}(\bdot,t) = 0 &\stackrel{(\star)}{\implies}&
\quad \bfsig(\bfx,t) := \sigma_{\matp\matq}(\bfxi,t)\hh \bfe_\matp \otimes \bfe_\matq  
\end{aligned}
\end{equation}
and following the analysis presented for the elastic case, from~\eqref{visco1cm} and~\eqref{visco:inv} we immediately obtain
\begin{equation} \label{ps1bis}
\begin{aligned}
  & \text{Plane strain:} \quad
  & \bfsig\psup(\bfxi,t) &= \bfC^{\exs\eps}(t) \star \bfeps\psup(\bfxi,t),
  & \quad\bfC^{\exs\eps}(t) &:=\, \bfP \dip\bfC(t) \dip \bfP \label{visco3} \\[1ex]
  & \text{Plane stress:} \quad
  & \bfeps\psup(\bfxi,t) &= \bfD^{\sigma}(t) \star \bfsig\psup(\bfxi,t),
  & \quad \bfD^{\sigma}(t) &:=\, \bfP \dip\bfD(t) \dip \bfP \label{visco4}
\end{aligned}~. 
\end{equation}
By analogy to the comment made earlier, we remark that the plane-stress inference~$(\star)$ in~\eqref{ps1bisx} is a simplifying assumption made possible by the balance of linear momentum $\nabla_{\!\bfx}\sip\hh\bfsig - \rho\hh \ddot\bfu = \bfze$  being reduced to $\nabla_{\!\bfxi}\sip\hh\bfsig\psup - \rho\hh \ddot{u}_\matp \hh \bfe_\matp= \bfze$ and  $\ddot{u}_3=0$.

\paragraph{Plane stress} From now on, our focus is on evaluating the viscoelastic plane-stress behavior in more depth. Assuming the plane stress condition at all times, we have
\[
  \bfsig\psup(\bfx,t)
 = \bfP\dip\bfsig\psup(\bfx,t)
 = \bfsig(\bfx,t)
 = \bfC(t)\star \bfeps(\bfx,t).
\]
Then, taking the time convolution of the above equality with $\bfP\dip\bfD(t)$ from the left, we find
\[
\begin{aligned}
    \lpar \bfP\dip\bfD(t)\dip\bfP \rpar \star \bfsig\psup(\bfx,t)
 &= \bfP\dip\lpar (\bfD(t)\star \bfC(t) \rpar \star\lpar \bfP\dip\bfeps(\bfx,t) + \bfR\dip\bfeps(\bfx,t) \rpar \\
 &= \bfP\dip\lpar \delta(t)\hh  \star \bfeps(\bfx,t) \rpar
 = \bfeps\psup(\bfx,t). 
\end{aligned}
\]
As a result, the plane-stress memory tensor $\bfC^{\sigma}(t)$ satisfying
\begin{equation}
  \bfsig\psup(\bfx,t) = \bfC^{\sigma}(t)\star\bfeps\psup(\bfx,t) \label{ps:time}
\end{equation}
must be the convolutional inverse of $\bfP\dip\bfD(t)\dip\bfP$ in that
\begin{equation}
  \bfC^{\sigma}(t) \star \lpar \bfP\dip\bfD(t)\dip\bfP \rpar = \delta(t)\hh \IS. \label{DC:time}
\end{equation}
In practice, exploiting~\eqref{ps:time} relies on using either Laplace or Fourier transform, either choice taking advantage of the relevant statement of the convolution theorem that reduces convolutions to multiplications. In the context of practical applications (e.g.~material characterization), one may prefer to seek the Fourier transform counterparts of~\eqref{visco1cm} and~\eqref{visco:inv} owing to their relevance to time-harmonic deformation and the notion of complex viscoelastic moduli.

\subsection{Evaluation of~$\bfC^{\sigma}$ via the Fourier transform}

The Fourier transform and its inverse are given by
\begin{equation} \label{fourier}
\hat{f}(\omega) = \Fcal[f](\omega) := \int_{\mathbb{R}} f(t) \hh e^{\ii\omega t} \dt, \qquad\quad
f(t) = \Fcal^{-1}[\hat{f}](t) := \frac{1}{2\pi} \int_{\mathbb{R}} \hat{f}(\omega) \hh e^{-\ii\omega t} \dom
\end{equation}
when the functions $f$ or $\hat{f}$ are summable. The transform pair~\eqref{fourier} has well-defined extensions to the space $\Scal'$ of \emph{tempered distributions}, namely distributions defined as continuous linear functionals on the space $\Scal$ of all $C^{\infty}(\Rbb)$ test functions with faster-than-polynomial decay at infinity. The Fourier convolution theorem for distributions~\cite[e.g.][Thm.~3.9.9]{salo:13} then yields 
\begin{equation} \label{FCT}
\Fcal[u\star v](\omega) = [\hat{u} \dotp \hat{v}](\omega) \quad\text{or}\quad 
\mathcal{F}[u \dotp v](\omega) = \frac{1}{2\pi} [\hat{u} \star \hat{v}](\omega)  
\end{equation}
for any pair $(u,v)$ of causal tempered distributions, provided $\hat{u}\!=\!\Fcal[u]$ (say) belongs to the space $\Ocal_{\nes M}$ of $C^{\infty}(\Rbb)$ functions with slow (i.e. at most polynomial) growth at infinity, in which case $u$ belongs to the space $\Ocal'_{\nes C}$ of distributions with faster-than-polynomial decay at infinity.

We can now apply~\eqref{FCT} to the respective Fourier transforms of~\eqref{visco1cm} and~\eqref{CD=I}, obtaining 
\begin{equation}
  \hat{\bfsig}(\bfx,\omega) = \hat{\bfC}(\omega)\dip\hat{\bfeps}(\bfx,\omega), \qquad
  \hat{\bfD}(\omega)\dip\hat{\bfC}(\omega) = \IS,  \label{visco2cm}
\end{equation}
where $\hat{\bfC}(\omega)$ is a fourth-order tensor composed of \emph{complex modulus functions $\hat{C}_{ijkl}(\omega)$}. The same steps applied to~\eqref{ps:time} and~\eqref{DC:time} then yield
\begin{equation}
  \hat{\bfsig}\psup(\bfx,\omega)
 = \hCS(\omega)\dip\hat{\bfeps}\psup(\bfx,\omega), \qquad
  \hCS(\omega) = \lpar \bfP\dip\hat{\bfD}(\omega)\dip\bfP \rpar^{\dagger}, \label{visco4cm}
\end{equation}
where the tensor of complex compliances, $\hat{\bfD}(\omega)$, is obtained from the tensor of complex moduli $\hat{\bfC}(\omega)$ via the second of~\eqref{visco2cm}.

\subsection{Example genus of viscoelastic models} \label{models:EDO}

We next consider a scalar and local version of the constitutive relationship~\eqref{visco1cm} featuring stress $\sigma(t)$, strain $\eps(t)$ and a memory function $C(t)$ that will both illustrate the foregoing considerations and provide building blocks for applications to specific 3D models. As examined in e.g.~\cite{makris2019frequency}, such viscoelastic relationships are often described in terms of a phenomenological differential equation
\begin{equation} \label{scalarDE}
  \Pcal\Lpar\der{}{t}\Rpar\sigma
 = \Qcal\Lpar\der{}{t}\Rpar\eps \quad\text{where}\quad
  \Pcal\Lpar\der{}{t}\Rpar = \sum_{m=0}^M p_m\frac{\text{d}^m}{\text{d}t^m}, \quad
  \Qcal\Lpar\der{}{t}\Rpar = \sum_{n=0}^N q_n\, \frac{\text{d}^n}{\text{d}t^n} \quad (p_m, q_n\!\in\Rbb)
\end{equation}
that is affiliated with a given arrangement of ``springs'' and ``dashpots'' \cite{findley2013creep,tschoegl}. 

\subsubsection{Memory function and its convolutional inverse} \label{meminv}

On applying the Fourier transform~\eqref{fourier} to~\eqref{scalarDE}, the complex modulus function and its inverse (i.e.~the complex compliance function) are obtained as
\begin{equation} \label{poles1}
  \hat{C}(\omega) = \frac{\Qcal(-\ii\omega)}{\Pcal(-\ii\omega)}
  = \frac{\sum_{n=0}^N q_n (-\ii\omega)^{n}}{\sum_{m=0}^M p_m (-\ii\omega)^m}, \qquad~
  \hat{D}(\omega) = \frac{\Pcal(-\ii\omega)}{\Qcal(-\ii\omega)}
  = \frac{\sum_{m=0}^M p_m (-\ii\omega)^m}{\sum_{n=0}^N q_n (-\ii\omega)^n}. 
\end{equation}
As discussed in~\cite[Chap.~3]{tschoegl}, the zeros~$\omega_i$ ($i\!=\!\overline{1,M}$) of the polynomial $\oo\mapsto\Pcal(-\ii\oo)$ for such models are purely imaginary with $\Im(\oo_i)<0$ (in particular $\Pcal(0)\!\neq\! 0$), and the same argument applies to $\oo\mapsto\Qcal(-\ii\oo)$ except that $\oo_i\!=\!0$ as a simple zero is also possible. In addition to their role in establishing the causality of $C(t)$ and $D(t)$ (see Section~\ref{KK}), these properties imply that the rational function $\oo\mapsto\hat{C}(\omega)$ has no poles on the real axis; it is thus $C^{\infty}(\Rbb)$ and, while often not summable over $\Rbb$, belongs to the space $\Ocal_{\nes M}$. On the other hand, $\oo\!=\!0$ may be a simple pole of $\hat{D}(\omega)$, which is then to be defined as a principal value; the Maxwell model (see Table~\ref{tab1}(b)) is an example of such behavior. Then, properties such as the distributional equality $\omega\,\text{PV}\tfrac{1}{\omega}\!=\!1$ (the second multiplier denoting the principal value of~$\omega^{-1}$) ensure that $\hat{D}(\omega)\hat{C}(\omega)\!=\!1$ holds as an equality between distributions, so that $\hat{D}(\omega)$ can be interpreted as a complex compliance.

The memory function $C(t)$ and its inverse $D(t)$ are then to be obtained by taking the inverse Fourier transform of $\hat{C}(\omega)$ and $\hat{D}(\omega)$. Irrespective of the orders $M$ and~$N$ of the differential polynomials $\Pcal$ and~$\Qcal$, integral form~\eqref{fourier} of the inverse Fourier transform will fail to be convergent for at least one of $\{C(t),D(t)\}$. Yet, the featured Fourier transforms are still well-defined as operations on distributions, and readily computable with the aid of known Fourier transform pairs that include generalized functions such as the Dirac delta function (and its derivatives) or principal values \cite{makris2019frequency,Evans2009}. In this setting, every $\hat{C}(\omega)$ and $\hat{D}(\omega)$ arising from the class~\eqref{scalarDE} of phenomenological viscoelastic models can be shown, with the help of the Kramers-Kronig relations (see Section~\ref{KK}), to be the Fourier image of a causal distribution.


\subsubsection{Relaxation and creep functions} \label{relcreep}

Thanks to the above properties, the Fourier image of the relaxation function $E(t)$ due to~\eqref{scalarDE} can be conveniently computed from the equality $\sigma(t)=(E\star\dot{\eps})(t)=(C\star\eps)(t)$ with $\eps(t)=H(t)$, which yields the distributional relation
\begin{equation} \label{Eomega}
  \hat{E}(\omega)
 = \Big(\frac{\ii}{\omega} + \pi\delta(\omega) \Big)\, \hat{C}(\omega)
 = \Big(\frac{\ii}{\omega} + \pi\delta(\omega) \Big)\,\frac{\Qcal(-\ii\omega)}{\Pcal(-\ii\omega)}
 = \frac{\ii}{\omega}\frac{\Qcal(-\ii\omega)}{\Pcal(-\ii\omega)} + \pi \frac{\Qcal(0)}{\Pcal(0)} \hh \delta(\omega)
\end{equation}
The above-discussed properties of the polynomials $\oo\mapsto\Pcal(-\ii\oo)$ and $\oo\mapsto\Qcal(-\ii\oo)$ imply that either: (i) $\hat{E}(\omega)$ has a $\omega^{-1}$ singularity at the origin (when $\Qcal(0)\shneq0$), or (ii) $\hat{E}(\omega)$ has no singularity (when $\Qcal(0)\sheq0$ with single multiplicity). In case (i), the creep function $J(t)$ can be obtained in the same way (applying the Fourier transform to the equality $\eps(t)=(J\star\dot{\sigma})(t)=(D\star\sigma)(t)$ with $\eps(t)=H(t)$), to obtain
\begin{equation}
  \hat{J}(\omega)
 = \Big(\frac{\ii}{\omega} + \pi\delta(\omega) \Big)\, \hat{D}(\omega)
 = \Big(\frac{\ii}{\omega} + \pi\delta(\omega) \Big)\,\frac{\Pcal(-\ii\omega)}{\Qcal(-\ii\omega)}
 = \frac{\ii}{\omega}\frac{\Pcal(-\ii\omega)}{\Qcal(-\ii\omega)} + \pi\frac{\Pcal(0)}{\Qcal(0)}\hh \delta(\omega)
\end{equation}
In case (ii), this argument no longer holds since $\delta(\omega)\hat{D}(\omega)$ loses meaning as a distribution due to the singularity of $\hat{D}(\omega)$, indicating that the Fourier convolution theorem is not applicable to the pair $\{D(t),\sigma(t)\}$. Instead, $J(t)$ is sought from the equality $\dot{\sigma}(t)= (\dot{C}\star\eps)(t) = \big(\dot{C}\star(J\star\dot{\sigma})\big)(t)$, to which the Fourier convolution theorem applies and yields
\begin{equation}
  1 = -\ii\omega\hh \hat{C}(\omega)\hat{J}(\omega) = -\ii\omega^2\frac{\Rcal(-\ii\omega)}{\Pcal(-\ii\omega)}\hat{J}(\omega) \qquad\text{i.e.}\quad -\ii\omega^2 \hat{J}(\omega) = \frac{\Pcal(-\ii\omega)}{\Rcal(-\ii\omega)}, 
\end{equation}
having set $\Qcal(-\ii\omega):=\omega\hh \Rcal(-\ii\omega)$ where the polynomial $\omega\mapsto\Rcal(-\ii\omega)$ has no real zeros. The solution $\hat{J}(\omega)$ of the above distributional equality that defines a causal creep function $J$ is given by
\begin{equation}
\hat{J}(\omega) = \frac{\ii}{\omega}\frac{\Pcal(-\ii\omega)}{\Qcal(-\ii\omega)} + 
\alpha\hh \delta(\omega) + \beta\hh \delta'(\omega), \qquad
\alpha = \pi\hh f'(0), \quad \beta = -\hh \pi \hh f(0), \quad f(\omega) := \frac {\Pcal(-\ii\omega)}{\Rcal\cal(-\ii\omega)} 
\label{Jhat:Dsing}
\end{equation}
where the values of constants $\alpha$ and~$\beta$ are obtained by enforcing the Kramers-Kronig relations~\eqref{visco10cm}, satisfied by every causal function, in terms of $\hat{J}(\omega)$ (see Appendix~\ref{A4} and Appendix~\ref{alfabeta} for details). The distributional equality 
\begin{equation} \label{EJCDprod}
-\omega^2 \hat{E}(\omega)\hat{J}(\omega) = 1
\end{equation}
then holds in both cases (i) and (ii), aided by the distributional property $\omega^2\hh \text{FP}\omega^{-2}=1$ of the finite part (FP) of $\omega^{-2}$ in case (ii). 


\begin{figure}[b]
\centering{\includegraphics[width=0.95\linewidth]{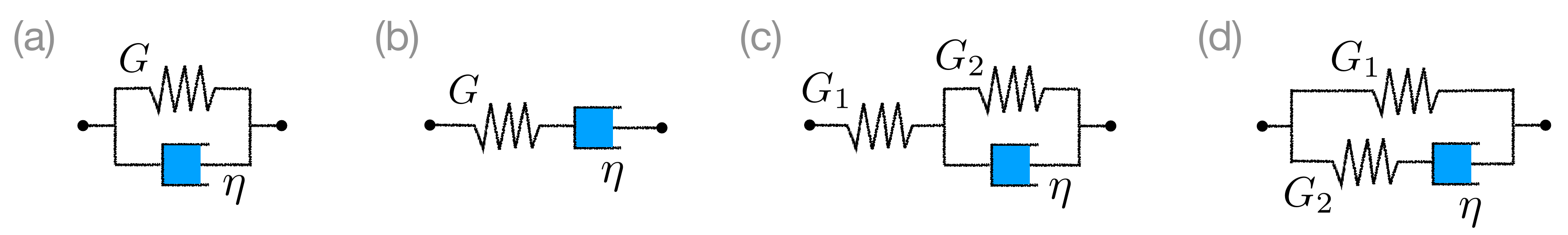}}
\caption{Elementary mechanical models of the viscoelastic behavior: (a) Kelvin-Voigt solid, (b) Maxwell material, (c) standard linear solid, and (d) Zener model. Note that (a) and (d) are characterized by $M\!<\!N$, while (b)--(c) feature $M\!=\!N$.} \label{viscomod1}
\end{figure}

To illustrate the foregoing considerations, Table~\ref{tab1} lists the Fourier transform pairs \cite{makris2019frequency} for the memory, creep and relaxation functions characterizing the viscoelastic models shown in Fig.~\ref{viscomod1}, see Appendix~\ref{A:FTP} for supporting details. In the table, $H(\dotp)$ is the step function; $\delta'(\dotp)$ is the derivative of the Dirac distribution $\delta(\dotp)$; and
\begin{equation}\label{fourx11}
\begin{aligned}
\text{(a,b):} \quad \gamma=G, \quad \tau &= \frac{\eta}{G}; \quad &
\text{(c):} \quad \gamma &= \frac{G_1 G_2}{G_1\!+G_2}, \quad  & \tau_1 &= \, \frac{\eta}{G_1\!+G_2}, \quad & \tau_2 &= \frac{\eta}{G_2}; \ \\
&&~\text{(d):} \quad \gamma &= G_1, & \tau_1  &= \,\frac{\eta}{G_2}, & \tau_2 &= \eta\hh \frac{G_1\!+G_2}{G_1 G_2}.
\end{aligned}
\end{equation}
\begin{table}[t]
\centering \renewcommand{\arraystretch}{1.35}
\begin{tabular}{|c||c|c|c|} \hline
Model & (a) & (b) & (c-d) \\ \hline\hline
$C(t)$  & $\gamma[\tau\hh \delta'(t) + \delta(t)]$ & $\gamma[\delta(t) - \frac{1}{\tau} \hh H(t) \hh e^{-t/\tau}]$ & $\gamma\hh[\frac{\tau_2}{\tau_1} \delta(t) - \frac{\tau_2-\tau_1}{\tau_1^2} \hh H(t) \hh e^{-t/\tau_1}]$ \\ \hline
$\hat{C}(\omega)$  & $\gamma [1 - \ii\tau\hh\omega]$ & $\gamma \hh \frac{\tau\hh \omega}{\ii+\tau\hh\omega}$ & $\gamma \, \frac{\ii+\tau_2\hh\omega}{\ii+\tau_1\hh\omega}$  \\ \hline\hline$J(t)$ & $\gamma^{-1}\hh H(t) [1 \!-\! e^{-t/\tau}]$ & $\gamma^{-1} H(t)[1 \!+\! t/\tau]$ & $\gamma^{-1}\hh H(t) [1 - \frac{\tau_2-\tau_1}{\tau_2} \hh e^{-t/\tau_2}]$  \\ \hline
$\hat{J}(\omega)$  & $\gamma^{-1} [\pi \delta(\omega) + \frac{\ii}{\omega} - \frac{\tau}{1-\ii\hh\tau\hh\omega}]$ &  $\gamma^{-1} [\pi \delta(\omega) + \frac{\ii}{\omega} - \frac{1}{\tau}( \frac{1}{\omega^2}+\ii\pi \delta'(\omega) )]$ & $\gamma^{-1} [\pi \delta(\omega) + \frac{\ii}{\omega} - \frac{\tau_2-\tau_1}{1-\ii\hh\tau_2\hh\omega}]$ \\ \hline\hline
$E(t)$  & $\gamma[\tau\hh \delta(t) + H(t)]$ & $\gamma H(t)\hh e^{-t/\tau}$ & $\gamma\hh H(t) [1 + \frac{\tau_2-\tau_1}{\tau_1}\hh e^{-t/\tau_1}]$  \\ \hline
$\,\hat{E}(\omega)$  & $\gamma\hh [\tau + \pi \delta(\omega) + \frac{\ii}{\omega}]$ & $\gamma\hh \frac{\tau}{1-\ii\hh\tau\hh\omega} $ & $\gamma \hh [\pi \delta(\omega) + \frac{\ii}{\omega} +\frac{\tau_2-\tau_1}{1-\ii\hh\tau_1\hh\omega}]$ \\ \hline\hline
\end{tabular} \renewcommand{\arraystretch}{1}
\caption{Memory, creep and relaxation functions: Fourier transform pairs for the viscoelastic models shown in Fig.~\ref{viscomod1}, with the elastic modulus $\gamma$ and characteristic times $\tau,\tau_1,\tau_2$ given by~\eqref{fourx11}.}
\label{tab1}
\end{table}

As can be seen from the display, only a single Fourier transform pair (namely $E(t)$ and $\hat{E}(\omega)$ for the Maxwell model) involves no generalized functions. Consistent with~\eqref{EJCDprod}, all basic models in Fig.~\ref{viscomod1} verify the equality $-\omega^2 \hh \hat{E}(\omega)\hat{J}(\omega) \nes=\nes 1$ between distributions. Recalling the discussion of $J(t)$ in Section~\ref{relcreep}, we observe that the Maxwell model is an example of case (ii), while all other models fall into category~(i).

\begin{remark} \label{char-relax}
For clarity of exposition, we hereon distinguish between two categories of timescales featured by a given viscoelastic model. Specifically, we discern between: (i) the characteristic times, defined as the timescales deriving intrinsically from the set of input model parameters, and (ii) the relaxation times, namely timescales featured by the exponentially-decaying contributions to the memory function, $C(t)$. For instance from Table~\eqref{tab1}, column (c-d) we observe that the standard linear solid (SLS) and Zener models each feature two characteristic times, $\{\tau_1,\tau_2\}$ given by~\eqref{fourx11}(c,d), and a single relaxation time equaling~$\tau_1$.
\end{remark}

\subsection{Isotropic materials}

For an isotropic viscoelastic solid, the memory tensor and its Fourier image (the tensor of complex moduli) read 
\[
  \hat{\bfC}(t) = 3\kaC(t)\JS + 2\muC(t)\KS, \qquad
  \hat{\bfC}(\omega) = 3\hkaC(\omega)\JS + 2\hmuC(\omega)\KS
\]
with the bulk $(\kaC)$ and shear $(\muC)$ memory functions assumed to be given by some chosen viscoelastic model, see Table~\ref{tab1} for examples. The fourth-order tensors $\JS$ and~$\KS$, representing respectively the volumetric and deviatoric components of an isotropic fourth-order tensor, are given by
\begin{equation} \label{JKS}
\JS = \tfrac{1}{3} \bfI_{\!2}\tens\bfI_{\!2}, \qquad \KS=\IS-\JS    
\end{equation}
where~$\bfI_{\!2}=\delta_{ij}\,\bfe_i\!\otimes\!\bfe_j$ is the second-order identity tensor. As $\JS$ and~$\KS$ verify the well-known relations
\begin{equation}\label{JKorth}
  \JS\dip\JS=\JS, \qquad \KS\dip\KS=\KS, \qquad \JS\dip\KS=\bfze,
\end{equation}
the inverse memory tensor $\bfD$ and its Fourier image must be of the form
\[
  \bfD(t) = 3\kaD(t)\JS + 2\muD(t)\KS, \qquad
  \hat{\bfD}(\omega) = 3\hkaD(\omega)\JS + 2\hmuD(\omega)\KS,
\]
where the complex compliances $\hkaD$ and $\hmuD$ are distributions required to satisfy the equalities
\[
  \hkaD(\omega) \hh \hkaC(\omega) = 1, \qquad \hmuD(\omega) \hh \hmuC(\omega) = 1.
\]
On introducing the reductions of~\eqref{JKS} given by
\begin{equation} \label{isovis4}
\begin{aligned}
\hh\JS\psupp &:= \tfrac{3}{2}\bfP\dip\JS\dip\bfP = \tfrac{1}{2}\hh \delta_{\matp\matq}\hh \delta_{\matr\mats} \, \bfe_\matp\otimes\bfe_\matq\otimes\bfe_\matr\otimes\bfe_\mats, \\
\hh\KS\psupp &:= -\tfrac{1}{3}\JS\psupp \nes+ \bfP\dip\KS\dip\bfP = -\hh\JS\psupp + \tfrac{1}{2}(\delta_{\matp\matr}\delta_{\matq\mats}+\delta_{\matp\mats}\delta_{\matq\matr}) \, \bfe_{\matp}\otimes\bfe_{\matq}\otimes\bfe_{\matr}\otimes\bfe_{\mats},
\end{aligned}
\end{equation}
the plane-stress projection of $\bfD$ is thus obtained in the frequency domain as
\[
  \bfP\dip\hat{\bfD}(\omega)\dip\bfP
  = \big[ \tfrac{2}{9}\hkaD(\omega) + \tfrac{1}{6}\hmuD(\omega) \big] \JS\psupp + \tfrac{1}{2}\hmuD(\omega)\KS\psupp.
\]
\begin{remark}
We note that the volumetric and deviatoric projections $\bfP\dip\JS\dip\bfP$ and $\bfP\dip\KS\dip\bfP$ do not enjoy the projection properties given by~\eqref{JKorth}; for instance $(\bfP\dip\JS\dip\bfP)\dip(\bfP\dip\JS\dip\bfP)\neq\bfP\dip\JS\dip\bfP$, since enforcing the equality would require the non-trivial entries of $\bfP\dip\JS\dip\bfP$ to equal $\tfrac{1}{2}$ in lieu of $\tfrac{1}{3}$. This motivates the above introduction of~$\JS\psupp$ and~$\KS\psupp$ which by design verify
\begin{equation}\label{JKorth-sig}
  \JS\psupp\dip\JS\psupp\!=\JS\psupp, \qquad \KS\psupp\dip\KS\psupp=\KS\psupp, \qquad \JS\psupp\dip\KS\psupp=\bfze.
\end{equation}
\end{remark}

Then, since
\[
\big(\alpha\hh\JS\psupp+\beta\hh\KS\psupp\big)^\dagger = \frac{1}{\alpha} \hh\JS\psupp + \frac{1}{\beta} \hh\KS\psupp
\]
due to~\eqref{JKorth-sig}, the tensor of complex moduli $\hCS$ relevant to the plane stress condition is obtained as
\begin{equation}\label{isovis7c}
  \hCS(\omega) = \lpar \bfP\dip\hat{\bfD}(\omega)\dip\bfP \rpar^{\dagger} = 2\hkaCS(\omega) \hh\JS\psupp + 2\hmuCS(\omega) \hh\KS\psupp,
\end{equation}
featuring the \emph{effective} (plane-stress) complex bulk and shear moduli
\begin{equation}\label{isovis8}
  \hkaCS(\omega) = \frac{9\hkaC(\omega)\hmuC(\omega)}{3\hkaC(\omega)+4\hmuC(\omega)}, \qquad
  \hmuCS(\omega) = \hmuC(\omega).
\end{equation}
When both $\hkaC$ and $\hmuC$ are rational functions arising from the class of phenomenological models described in Section~\ref{models:EDO}, namely $\hkaC(\oo)=\Qcal_{\kappa}(-\ii\oo)/\Pcal_{\kappa}(-\ii\oo)$ and $\hmuC(\oo)=\Qcal_{\mu}(-\ii\oo)/\Pcal_{\mu}(-\ii\oo)$, we obtain
\begin{equation} \label{hkaCS:ODE}
  \hkaCS(\omega) = \frac{9\Qcal_{\kappa}(-\ii\oo)\Qcal_{\mu}(-\ii\oo)}{3\Qcal_{\kappa}(-\ii\oo)\Pcal_{\mu}(-\ii\oo) + 4\Pcal_{\kappa}(-\ii\oo)\Qcal_{\mu}(-\ii\oo)}. 
\end{equation}
Here, the properties of the featured polynomial symbols (see Section~\ref{models:EDO}) imply that $\hkaCS(\omega)$ has no poles on the real axis, and is thus a $C^{\infty}(\Rbb)$ function with slow growth that may not be summable over~$\Rbb$. As demonstrated on several specific models in Section~\ref{A3}, the plane-stress memory function $\kaCS(t)$ is nonetheless recoverable by expanding~\eqref{hkaCS:ODE} in partial fractions and using several well-known (distributional) Fourier transform pairs. Moreover, the argument made in Section~\ref{KK} applies to~\eqref{hkaCS:ODE} and ensures the causal character of $\kaCS(t)$.

Viewed through the prism of a 1D mechanical analogue, $\hkaCS$ can be interpreted as the complex modulus of a serial ``spring-pot'' assembly comprising two spring-pot elements with respective moduli $\tfrac{9}{4}\hkaC$ and~$3\hmuC$, see Fig.~\ref{spring-pots}.  Formulas~\eqref{isovis8} also demonstrate that the plane stress condition affects (as expected) only the effective bulk modulus~$\hkaC^{\!\!\!\sigma}$, that generally features a mix of the characteristic times in bulk and shear.
\begin{figure}[h!]
\centering{\includegraphics[width=0.55\linewidth]{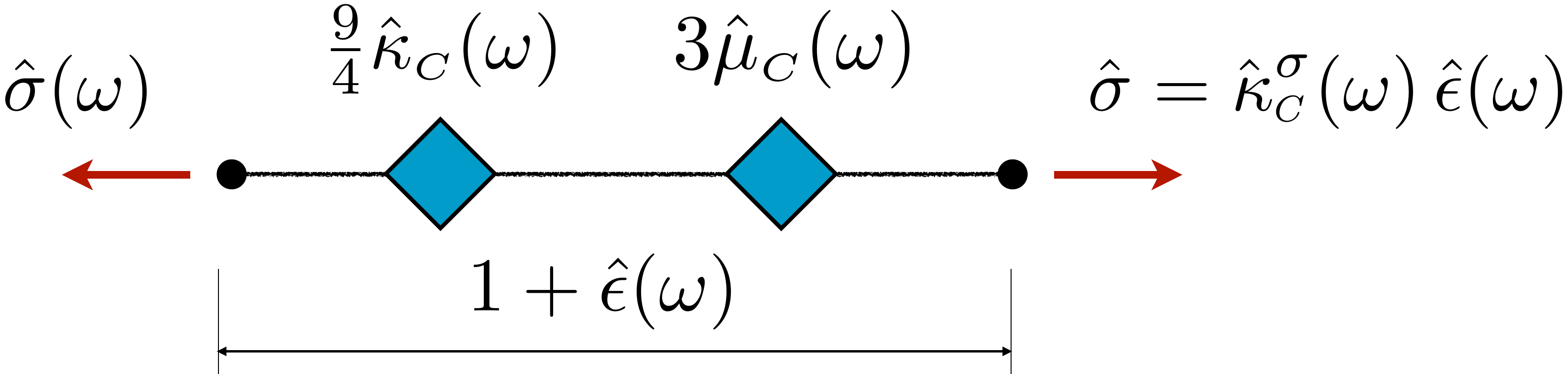}}
\caption{1D mechanical analogue depiction of the plane-stress bulk modulus, $\hkaCS(\omega)$.} 
\label{spring-pots}
\end{figure}

With~\eqref{isovis8} and~\eqref{hkaCS:ODE} in place, expression for the plane-stress memory tensor $\bfC^{\sigma}(t)$ is obtained by applying the inverse Fourier transform to~\eqref{isovis7c}, a result that features the effective (plane-stress) bulk and shear memory functions~$\kaC^\sigma(t)$ and~$\muC^\sigma(t)$.

\paragraph{Time-harmonic deformations}

For completeness, we note that the frequency-domain constitutive relationships~\eqref{visco2cm} and~\eqref{visco4cm} featuring the tensors of  complex moduli can be alternatively established in a phenomenological ``stand-alone'' fashion, without reference to the Fourier transform, by assuming time-harmonic deformations \cite[e.g.][]{findley2013creep} as
\[
\bfsig(\bfx,t) = \Re[\hat{\bfsig}(\bfx,\omega) \hh e^{-\ii\omega t}], \qquad \bfeps(\bfx,t) = \Re[\hat{\bfeps}(\bfx,\omega) \hh e^{-\ii\omega t}].
\]

\subsection{Causality considerations} \label{KK}

It is well known~\cite[e.g.][]{hu1989kramers} that the Fourier transform $\hat{f}(\oo)$ of any causal function $f(t)$ satisfies the Kramers-Kronig relations
\begin{equation} \label{visco10cm}
  \Re[\hat{f}(\omega)] = -\Hcal[\Im[\hat{f}]](\oo) \quad\text{and}\quad
  \Im[\hat{f}(\omega)] = \Hcal[\Re[\hat{f}]](\oo), \qquad\text{i.e.}\quad
   \hat{f}(\oo) = \ii\hh\Hcal[\hat{f}](\oo) ,
\end{equation}
where $\Hcal[g]$ is the Hilbert transform~\cite{king:vol12} of a function $g$ given by
\begin{equation}
  \Hcal[g](\oo)
 := \frac{1}{\pi} \Big( \text{PV}\frac{1}{\oo} \Big)\star g(\oo)
 = \frac{1}{\pi} \int_{\mathbb{R}} \frac{g(\vartheta)}{\omega-\vartheta} \dth, \label{H:def}
\end{equation}
interpreted in the Cauchy principal value sense (see Appendix~\ref{A4} for details). Conversely, satisfaction of conditions~\eqref{visco10cm} implies the causality of~$f(t)$~\cite{schwartz:62}, so that the causal character of $C(t),J(t)$ and $E(t)$ can be ascertained by verifying respectively that $\hat{C}(\omega),\hat{J}(\omega)$ and $\hat{E}(\omega)$ obey~\eqref{visco10cm}.

\begin{prop}\label{KK:prop}
Every complex modulus function~$\hat{C}(\omega)$ in~\eqref{poles1}, that derives from the phenomenological ODE~\eqref{scalarDE}, is the Fourier image of a \emph{causal memory function} $C(t)$.     
\end{prop}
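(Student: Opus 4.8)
The plan is to establish causality of $C(t)$ by exploiting the location of the poles of the rational function $\hat{C}(\omega)$ together with the equivalence between causality and analyticity in a complex half-plane. Recall from~\eqref{poles1} that $\hat{C}(\omega)=\Qcal(-\ii\omega)/\Pcal(-\ii\omega)$ is a ratio of polynomials in the variable $s:=-\ii\omega$, whose poles are precisely the zeros~$\omega_i$ of $\omega\mapsto\Pcal(-\ii\omega)$. By the properties recalled in Section~\ref{models:EDO}, these all satisfy $\Im(\omega_i)<0$ and, since $\Pcal(0)\neq0$, none lies on the real axis. Hence $\hat{C}$ is holomorphic on the closed upper half-plane $\Im(\omega)\geq0$, in fact on a strip $\Im(\omega)>-\delta$ with $\delta=\min_i|\Im(\omega_i)|>0$; on the real axis it is the $C^{\infty}(\Rbb)$, slowly growing member of $\Ocal_{\nes M}$ already identified in the text.

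First I would split off the growth at infinity by Euclidean division in~$s$, writing $\Qcal(s)=\Pi(s)\Pcal(s)+\tilde{\Qcal}(s)$ with $\deg\tilde{\Qcal}<\deg\Pcal$, so that
\begin{equation*}
\hat{C}(\omega)=\Pi(-\ii\omega)+\frac{\tilde{\Qcal}(-\ii\omega)}{\Pcal(-\ii\omega)}.
\end{equation*}
The inverse Fourier transform of the polynomial part is a finite combination $\sum_k c_k\,\delta^{(k)}(t)$ of derivatives of the Dirac distribution (using $\Fcal^{-1}[(-\ii\omega)^k]=\delta^{(k)}$), which is supported at the origin and therefore causal. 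It then remains to treat the \emph{proper} rational remainder $\hat{C}_0(\omega):=\tilde{\Qcal}(-\ii\omega)/\Pcal(-\ii\omega)$.

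For $\hat{C}_0$ I would use a partial-fraction decomposition over the poles~$\omega_i$, reducing matters to the elementary terms $(\omega-\omega_i)^{-j}$. Each such term is inverted by contour integration of the inverse transform in~\eqref{fourier}: for $t<0$ the factor $e^{-\ii\omega t}$ decays in the upper half-plane, which is pole-free, so the contribution vanishes; for $t>0$ one closes in the lower half-plane and the residue at~$\omega_i$ produces a term proportional to $t^{\,j-1}e^{-\ii\omega_i t}H(t)$. Because $\Im(\omega_i)<0$ (indeed the~$\omega_i$ are purely imaginary, giving real decaying exponentials), these contributions are bounded by a polynomial in~$t$ times a decaying exponential, hence define tempered distributions, and each is manifestly causal. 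Summing, $C(t)=\sum_k c_k\,\delta^{(k)}(t)+\sum_i(\text{polynomial in }t)\,e^{-\ii\omega_i t}H(t)$ is a causal tempered distribution whose Fourier image is $\hat{C}$, which is the claim.

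The main obstacle is the distributional justification of this inversion: since $\hat{C}$ is generally not summable over~$\Rbb$ (it need only lie in~$\Ocal_{\nes M}$), the contour/residue computation must be read as an identity between tempered distributions rather than a convergent integral. I would address this either by regularization, deforming the contour onto $\Im(\omega)=\epsilon>0$ and passing to the boundary value while using the polynomial bound on $\hat{C}_0$ to control the limit, or, more abstractly, by invoking the Paley--Wiener--Schwartz characterization: a tempered distribution is supported in $[0,\infty)$ precisely when its Fourier transform is the boundary value of a function holomorphic and polynomially bounded in the upper half-plane. The pole locations established above place $\hat{C}$ in exactly this class, which is in turn equivalent to the Kramers--Kronig relations~\eqref{visco10cm} whose converse~\cite{schwartz:62} was recalled in Section~\ref{KK}; either route closes the argument.
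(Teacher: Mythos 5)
Your argument is correct, and its decisive step takes a genuinely different route from the paper's. Both proofs begin the same way: a partial-fraction decomposition (your Euclidean division is subsumed in the paper's expansion~\eqref{aux06}), with the polynomial part dispatched identically via $(-\ii\oo)^k=\Fcal[\delta^{(k)}(t)]$. For the proper rational part, however, the paper never inverts the Fourier transform at all: it verifies the Kramers--Kronig relations~\eqref{visco10cm} directly on each elementary fraction, using the commutation of $\Hcal$ with differentiation together with the Hilbert pair~\eqref{aux03} (valid precisely because $\Im(\oo_i)<0$), and then concludes causality from the converse Kramers--Kronig theorem of Schwartz recalled in Section~\ref{KK}. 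That computation stays entirely on the real frequency axis within the distributional Hilbert-transform calculus of Appendix~\ref{A4}, so the non-summability of the simple-pole terms never raises a contour-deformation issue. You instead compute $C(t)$ explicitly by residues, which buys more than the paper's proof delivers: the manifest structure $t^{j-1}e^{-\ii\oo_i t}H(t)$ of exponentially relaxing causal terms, which the paper only exhibits later, model by model, in Section~\ref{A3}. The price is exactly the obstacle you flag --- for $j=1$ the inversion integral is only conditionally convergent and must be read distributionally --- and both of your repairs are sound: the boundary-value regularization on $\Im(\oo)=\epsilon$ works because the proper part is uniformly polynomially bounded on the closed upper half-plane (its denominator is bounded away from zero there), and the Paley--Wiener--Schwartz criterion applies for the same reason. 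Note, though, that your second repair is not really an independent route: half-plane analyticity with polynomial bounds and the Kramers--Kronig relations are equivalent characterizations of causal tempered distributions, so invoking Paley--Wiener--Schwartz reproduces the paper's argument in analytic rather than Hilbert-transform garb; the residue computation is where your proof genuinely diverges and adds value.
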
 

\begin{proof}
By virtue of their properties outlined in Section~\ref{models:EDO}, complex modulus functions in~\eqref{poles1} admit partial fraction expansions of the form
\begin{equation}
  \hat{C}(\omega)
  = \sum_{i=1}^{M} \sum_{j=1}^{m_i} \frac{\alpha_{ij}}{(\oo-\oo_i)^j} + \Rcal(-\ii\oo), 
\label{aux06}
\end{equation}
where $\alpha_{ij}\!\in\!\mathbb{C}$ are constants; $\oo_i \!=\!-\ii|\oo_i|$ are purely imaginary zeros of the polynomial $\omega\mapsto\Pcal(\ii\hh\oo)$ with respective multiplicities $m_i$, and $\Rcal(-\ii\hh\oo)$ is a polynomial of degree $N\!-\!M$ (with $\Rcal=0$ if $N<M$). Since the Hilbert transform and derivative operators are known to commute~\cite[Section~4.8]{king:vol12}, we demonstrate the claim using the property
\begin{equation}
  \Hcal\Big[\frac{1}{(\oo-\oo_i)^{j+1}}\Big]
 = \frac{(-1)^j}{j!} \Hcal\Big[\frac{\text{d}^j}{\text{d}\oo^j}\Big(\frac{1}{\oo-\oo_i}\Big)\Big]
 = \frac{(-1)^j}{j!} \frac{\text{d}^j}{\text{d}\oo^j} \Hcal\Big[\frac{1}{\oo-\oo_i}\Big], \qquad j\!=\! 0,1,2 \ldots
\end{equation}
and the known Hilbert transform pair~\cite[Vol.~2, Appendix~1, Table 1.2, eq.~(2.2)]{king:vol12}
\begin{equation}
  \Hcal\Big[\frac{1}{\oo-\oo_i}\Big] = \frac{-\ii}{\oo-\oo_i}, \label{aux03}
\end{equation}
inferring that the proper rational fraction part of~\eqref{aux06} satisfies~\eqref{visco10cm} and is thus~\cite{schwartz:62} the Fourier image of a causal distribution. Further, the polynomial part $\Rcal(-\ii\oo)$ of~\eqref{aux06} is a weighted sum of powers $(-\ii\omega)^k$, $k\!=\!\overline{0,N\!-\!M}$ that verify $(-\ii\oo)^k=\Fcal[\delta^{(k)}(t)]$ (where $\delta^{(k)} \!:=\!\dd^k \delta/\dd t^k$ in the sense of distributions) and is thus also the Fourier image of a causal distribution.    
\end{proof}

\subsection{Plane-stress relaxation and creep tensors}

On the basis of~\eqref{visco1} and~\eqref{visco1cm}, the plane-stress memory and relaxation tensors are related through 
\begin{equation}\label{aux1}
\bfC^{\sigma\!}(t) \star\bfeps\psup(\bfx,t) \,=\, \bfE^{\sigma\!}(t) \star\dot{\bfeps}\psup(\bfx,t). 
\end{equation}
Letting $\bfeps\psup(\bfx,t) \!=\! H(t) \hh \bfeps\psup(\bfx)$ due to Fig.~\ref{relax1} and recalling the Fourier transforms of~$H(t)$ and~$\delta(t)$ given in Table~\ref{tab:fourier}, application of the convolution theorem to~$\mathcal{F}[\eqref{aux1}]$ yields 
\begin{equation} \label{aux2}
\hCS(\omega)\dip \bfeps\psup(\bfx)\, \Big(\frac{\ii}{\omega} + \pi\delta(\omega) \Big) \,=\, \hCE(\omega)\dip \bfeps\psup(\bfx). 
\end{equation}
By the arbitrariness of (kinematically-admissible) $\bfeps\psup(\bfx)$, we obtain the relationship
\begin{equation} \label{E:ps}
\hCE(\omega) = \frac{\ii}{\omega}\hh \hCS(\omega) + \pi\hh \delta(\omega) \hh \hCS(0).
\end{equation}

In an analogous way, \eqref{visco1cm} and \eqref{visco2} yield $\dot{\bfsig}{}\psup =\dot{\bfC}{}^{\sigma}\!\star \bfeps\psup = (\dot{\bfC}{}^{\sigma}\!\star\bfCJ)\star\dot{\bfsig}{}\psup$ and, under Fourier transform,
\begin{equation}
  -\ii\omega\hh \hCS(\omega)\dip\hCJ(\omega) = \IS.
\end{equation}
When $\hCS(0)\not=\bfze$, the solution $\hCJ(\omega)$ of the above equation that defines a causal plane-stress creep tensor $\bfCJ(t)$ is given by
\begin{equation}
  \hCJ(\omega) = \frac{\ii}{\omega}\big(\hCS(\omega)\big)^\dagger + \pi\delta(\omega)\big(\hCS(0)\big)^\dagger
\label{J:ps}
\end{equation}
similar to the case (i) discussed in Section~\ref{relcreep}. We note that the Dirac delta terms in~\eqref{E:ps} and~\eqref{J:ps} represent the instantaneous elastic response of a viscoelastic solid under the plane stress condition. 

When $\hCS(0)=\bfze$, on the other hand, with the zero of $\hCS(\omega)$ at $\omega=0$ assumed to be of single multiplicity, we proceed similarly as in case (ii) of Section~\ref{relcreep}. Specifically, setting $\hCS(\omega)=\omega\hBS(\omega)$ (where $\hBS(\omega)$ is invertible for any real $\omega$), $\hCJ(\omega)$ is sought as the (Fourier image of the) causal solution of $-\ii\omega^2\hBS(\omega)\dip\hCJ(\omega) = \IS$. This results in the expression 
\begin{equation}
  \hCJ(\omega)
 = \frac{\ii}{\omega}\big(\hCS(\omega)\big)^\dagger
 - \pi\delta'(\omega) \big(\hBS(0)\big)^\dagger
 + \pi\delta(\omega)\bigg(\der{\hBS}{\omega}(0)\bigg)^\dagger.
\label{J:ps2}
\end{equation}

\paragraph{Evaluation of~$\bfJ^\sigma$ and~$\bfE^\sigma$ via the Laplace transform}

In the literature, the relaxation and creep functions of a linear viscoelastic model are often interrelated \cite{findley2013creep} by way of the Laplace transform pair
\begin{equation} \label{laplace}
\check{f}(s) = \Lcal[f](s) = \int_{0}^{\infty} f(t) \hh e^{-s\hh t}\hh \dt, \qquad\quad
f(t) = \Lcal^{-1}[\check{f]}(s) = \frac{1}{2\pi\ii} \lim_{\ell\to\infty} \int_{\gamma-\ii\hh\ell}^{\gamma+\ii\hh\ell} \check{f}(s) \hh e^{s\hh t} \ds,
\end{equation}
where $s\!\in\!\mathbb{C}$ and $\gamma\!>\!0$ is such all singularities of $\check{f}(s)$ in the complex $s$-plane are located ``to the left'' of the line of integration. In situations where $\,\,\check{\!\!\bfJ}(s)$ is prescribed, we have the obvious relationship
\begin{equation} \label{visco4x}
\bfJ^{\hh\sigma}\!(t) =\, \Lcal^{-1}\big[\, \bfP\dip\,\,\check{\!\!\bfJ}(s)\dip\bfP \,\big]. 
\end{equation}
On the other hand, application of the Laplace transform to~\eqref{visco1} followed by manipulation \`a la~\eqref{stress3} in the transformed domain yields
\begin{align}
  \bfsig\psup(\bfxi,t) &= \bfE^{\hh\sigma}(t)\star \dot{\bfeps}\psup(\bfxi,t), \\
  \bfE^{\hh\sigma}(t)
 &= \Lcal^{-1}\big[\, (\bfP \dip \,\,\check{\!\!\bfE}^{-1}\!(s) \dip \bfP)^{\dagger} \,\big]
 = \frac{1}{2\pi\ii} \lim_{\ell\to\infty} \int_{\gamma-\ii\hh\ell}^{\gamma+\ii\hh\ell} s^{-2}(\bfP \dip \,\,\check{\!\!\bfJ}(s) \dip \bfP)^{\dagger} \hh e^{s\hh t} \ds, \label{visco7}
\end{align}
where (i) $\Re(s)\!>\!0$ is assumed to be sufficiently large so that the integrals entailed in computing the components of $\bfE^{\hh\sigma}$ are convergent, and (ii) the second formula for~$\bfE^{\hh\sigma}$ is due to the relationship
\begin{equation} \label{visco5}
\check{\bfeps}\psup(\bfx,s) = s\hh \check{\bfJ^{\sigma}}(s)\dip\check{\bfsig}\psup(\bfx,s)  \quad \Longrightarrow \quad \check{\bfsig}\psup(\bfx,s) = s^{-1}\hh (\check{\bfJ^{\sigma}}(s))^{\dagger}\dip\check{\bfeps}\psup(\bfx,s).
\end{equation}

\section{Thermodynamic potentials describing linear viscoelastic behavior under the plane stress condition} \label{GSM}

\noindent To illustrate the effect of the plane stress environment on thermodynamic description of linear viscoelastic solids, we consider the framework of generalized standard materials (GSM) \cite{ger:nqs:suq:83,Halp75,maugin:92}. The latter describes a broad family of constitutive models for dissipative solids that are formulated in terms of two convex thermodynamic potentials, namely the Helmholtz free energy $\psi$ and dissipation potential $\varphi$. 

\subsection{Generalized standard materials}

On denoting by $\bfeps(\bfx,t)$ the linearized strain tensor (as before) and by~$\bfal(\bfx,t)$ the tensor of internal thermodynamic variables, within this framework we write $\psi=\psi(\bfeps,\bfal)$ and $\varphi=\varphi(\bfepsd,\bfald)$. By decomposing the Cauchy stress tensor $\bfsig$ into a reversible (``elastic'') component $\bfsige$ and irreversible (``viscous'')  component $\bfsigv$ as 
\begin{equation} \label{stress0}
\bfsig(\bfx,t) = \bfsige(\bfx,t) +\bfsigv(\bfx,t),   
\end{equation}
the GSM constitutive relationships yield $\bfsige$, $\bfsigv$ and the tensor of thermodynamic tensions $\bfA$ (energetically conjugate to~$\bfal$) as
\begin{equation} \label{stress}
\bfsige = \del{\eps}\psi, \qquad\quad \bfsigv=\del{\epsd}\varphi, \qquad\quad
\bfA = -\del{\alpha}\psi = \del{\alphad}\varphi. 
\end{equation}
Assuming isothermal conditions, the power per unit volume $\mathfrak{D}=\mathfrak{T\dot{S}}$ ($\mathfrak{T}=\,$temperature, $\mathfrak{S}=\,$specific entropy) dissipated by a material obeying~\eqref{stress} reads
\begin{equation} \notag
\mathfrak{D} = \bfsigv\dip\bfepsd + \bfA\dip\bfald,  \label{dissip}
\end{equation}
which exposes~$\bfsigv$ and~$\bfA$ as the generators of entropy production. 

For the present case of linear viscoelasticity, both potentials $\psi$ and $\varphi$ are quadratic functions of their arguments \cite{jmps2024}; Specifically, one has 
\begin{equation} \label{phi:psi:def}
\begin{aligned}
 \psi(\bfeps,\bfal)
 &= \demi\bigl( \bfeps\dip\Ce\dip\bfeps + 2\bfeps\dip\CmT\dip\bfal + \bfal\dip\Ca\dip\bfal \bigr), \\*[2mm]
 \varphi(\bfepsd,\bfald)
 &= \demi\bigl( \bfepsd\dip\De\dip\bfepsd + 2\bfepsd\dip\DmT\dip\bfald + \bfald\dip\Da\dip\bfald \bigr),
\end{aligned} 
\end{equation}
where (i) the second-order symmetric tensor $\bfal$ (i.e.~the ``viscous strain'' tensor) collects internal variables of the model; (ii)  $\Ce,\Ca,\De$ and~$\Da$ are fourth-order tensors, characterized by major and minor symmetries, that define non-negative quadratic forms over the second-order symmetric tensors; and (iii)  $\Cm$ and~$\Dm$ are the fourth-order tensors carrying minor symmetries and must be such that $\psi(\bfeps,\bfal)$ and $\varphi(\bfepsd,\bfald)$ are positive -- and so convex -- functions. Motivated by physical considerations, we further assume that $\Ce$ and $\Da$ are invertible (and so positive definite), and that $\Cm$ and~$\Dm$ also carry the major symmetry. 

From the last equality in~\eqref{stress} and~\eqref{phi:psi:def}, $\bfal$ can be solved for locally in terms of~$\bfeps$ as 
\begin{equation}
\bfal(\bfx,t) = -\FS[\Cm\dip\bfeps\shp\Dm\dip\bfepsd](\bfx,t), \label{alpha:expr}
\end{equation}
where the tensor-valued convolution map $\bfs\mapsto\FS[\bfs]$ is given by
\begin{equation} \label{alpheps}
\FS[\bfs](\bfx,t) = \int_{0}^{t} \exp[-\QS(t\shm\tau)]\dip\Da^{-1}\dip\bfs(\bfx,\tau) \dtau, \qquad
\QS\shdeq\Da^{-1}\dip\Ca. 
\end{equation}
By virtue of~\eqref{stress0}, \eqref{stress} and~\eqref{alpheps}, the hereditary stress-strain relationship due to~\eqref{phi:psi:def} is obtained as 
\begin{equation} \label{gsm:sigeps}
\bfsig(\bfx,t)
= \CS\Isub \dip\bfeps(\bfx,t) + \DS\Isub\dip\bfepsd(\bfx,t) - \CHT\!\dip\FS[\CH\dip\bfeps](\bfx,t), 
\end{equation}
where 
\begin{equation}
\CH:=\Cm-\Ca\dip\Da^{-1}\dip\Dm \label{alpha(t)} \label{alpha(t)}
\end{equation}
is the fourth-order tensor featured in the convolution term, and  
\begin{equation}
\CS\Isub = \Ce - \CmT\dip\Ca^{-1}\dip\Cm + \CHT\dip\Ca^{-1}\dip\CH, \qquad~ \DS\Isub = \De-\DmT\dip\Da^{-1}\dip\Dm
\label{C:inst}
\end{equation}
denote respectively the \emph{instantaneous} elasticity and viscosity tensors endowed with both major and minor symmetries \cite{jmps2024}. On rewriting~\eqref{gsm:sigeps} in the format given by~\eqref{visco1cm}, we obtain a general expression for the GSM tensor memory function as 
\begin{equation} \label{gsm:memory}
\bfC(t)
\,=\, \CS\Isub\hh \delta(t) \hh+\hh \DS\Isub\hh \delta'(t) \hh-\hh \CHT\!\dip\exp[-\QS\hh t]\dip\Da^{-1}, 
\end{equation}    
invoking the previously encountered distributions~$\delta(\dotp)$, $\delta'(\dotp)$ (see Table~\ref{tab1}) and a set of characteristic times featured by the exponent tensor~$\QS$. Before proceeding to the next section, we recall the distinction between characteristic and relaxation times made in Remark~\ref{char-relax}. 

\subsection{Limitation of the GSM description} \label{GSMltd}

On invoking the series expansion of a tensor exponential function and rearranging the terms, we observe that 
\begin{equation}
\exp[-\QS\hh t]\dip\Da^{-1} = \Da^{-1/2}\exp[-\QS'\hh t]\dip\Da^{-1/2}, \qquad 
\QS':= \Da^{1/2}\dip\QS\dip\Da^{-1/2} = \Da^{-1/2}\dip\Ca\dip\Da^{-1/2}
\end{equation}    
where~$\QS'$ is symmetric, which demonstrates that the GSM model~\eqref{gsm:memory} features 21 independent characteristic times. By virtue of the Voigt representation~\cite{voigt1910lehrbuch} of~$\QS'$ in terms of a $6\times 6$ matrix~$\bfQ'$ (see Appendix~\ref{A:Voigt}), one can show via the series expansion of~$\exp[-\bfQ\hh t]$ that these 21 \emph{characteristic times} generate up to six distinct \emph{relaxation times}, corresponding to the six eigenvalues of~$\bfQ'$. Specifically, by writing 
\[
\bfQ' = \bfU\!\bfLa\hh \bfU\Tsup, \qquad \bfU\bfU\Tsup = \bfU\Tsup\bfU = \bfI, \qquad 
\bfLa = \text{diag}[\tau_j^{-1},\; j\!=\!\overline{1,6}]
\]
where~$\tau_j\!>\!0$ and $\bfU\in\Rbb^{6\times 6}$ is an orthogonal matrix, we obtain 
\begin{multline}
  \exp[-\bfQ't] \,=\, \bfI + \sum_{n=1}^{\infty}\inv{n!} (-\bfQ't)^n
  \,=\, \bfU\bfU\Tsup + \sum_{n=1}^{\infty} \inv{n!} (-t\, \bfU\!\bfLa\hh \bfU\Tsup)^n
  \,=\, \bfU\bfU\Tsup + \sum_{n=1}^{\infty} \inv{n!}\, \bfU(-t\hh \bfLa)^n \, \bfU\Tsup \\
  \,=\, \bfU \Big\{ \bfI + \sum_{n=1}^{\infty} \inv{n!} (-t\bfLa)^n \Big\} \, \bfU\Tsup
  \,=\, \bfU \Big\{ \sum_{n=0}^{\infty} \inv{n!} \, \text{diag}[(-t/\tau_j)^n,\; j\!=\!\overline{1,6}] \Big\}\, \bfU\Tsup
  \,=\, \bfU \text{diag}[e^{-t/\tau_j},\; j\!=\!\overline{1,6}] \, \bfU\Tsup. 
\end{multline}
From this result, we find that the GSM model is not capable of describing phenomenological viscoelastic solids, e.g. those rooted in~\eqref{scalarDE}, that feature more than (i) six relaxation times for general anisotropic solids, or (ii) two relaxation times for isotropic materials, since $\bfLa$ then carries only two distinct eigenvalues.

\subsection{Plane-stress expressions for the GSM potentials}

Consistent with the analysis in Section~\ref{psproj}, we assume that all flux fields in~\eqref{stress} adhere to the plane stress condition. Specifically we postulate 
\begin{equation}\label{stress2x}
\bfsige = \bfP\dip\bfsige \hh:=\hh \bfsig\psupe, \qquad
\bfsigv = \bfP\dip\bfsigv \hh:=\hh \bfsig\psupv, \qquad
\bfA = \bfP\dip\bfA \hh:=\hh \bfA\psup,
\end{equation}
while keeping the kinematic fields unabridged, i.e. 
\begin{equation}\label{strain2x}
\bfeps = \bfP\dip\bfeps + \bfR\dip\bfeps \,:=\hh \bfeps\psup + \bfeps\rsup, \qquad
\bfal = \bfP\dip\bfal + \bfR\dip\bfal \,:=\hh \bfal\psup + \bfal\rsup. 
\end{equation}
From~\eqref{stress} and~\eqref{phi:psi:def}, on the other hand, we find the fluxes~\eqref{stress2x} and kinematic quantities~\eqref{strain2x} to be related as 
\begin{equation} \label{auxx1}
\begin{aligned}
& \text{(a)} \quad ~~\bfsige =&\hspace*{-2mm}& \Ce\dip\bfeps + \CmT\dip\bfal, &\qquad&
& \text{(b)} ~~~~\bfsigv =&\hspace*{-2mm}& \De\dip\bfepsd + \DmT\dip\bfald, \\
& \text{(c)} \quad -\bfA     =&\hspace*{-2mm}& \CmT\dip\bfeps + \Ca\dip\bfal, &\qquad&
& \text{(d)} \quad ~~\bfA    =&\hspace*{-2mm}& \DmT\dip\bfepsd + \Da\dip\bfald.
\end{aligned}
\end{equation}

To eliminate the out-of-plane kinematic fields~$\bfeps\rsup$ and~$\bfal\rsup$ from consideration, we first solve~\eqref{auxx1}(a,c) for~$\bfeps$ and~$\bfal$ and make use of~\eqref{stress2x}--\eqref{strain2x}. This yields 
\begin{equation} \label{ps-gsm-compl1}
\begin{aligned}
\bfeps\psup &= (\bfP\dip\Cs\dip\bfP)\dip \bfsig\psupe + (\bfP\dip\Cms\dip\bfP)\dip\bfA\psup, \\
-\bfal\psup &= (\bfP\dip\Cms\dip\bfP)\Tsup\dip\bfsig\psupe + (\bfP\dip\CA\dip\bfP)\dip\bfA\psup, 
\end{aligned}
\end{equation}
where $M_{ijkl}\Tsup\!=\!M_{klij}$ for a generic fourth-order tensor~$\bfM$, and  
\begin{equation} 
\Cs = (\CeS)^{-1}, \qquad \CA = (\CaS)^{-1}, \qquad 
\Cms = (\CeS)^{-1}\dip\Cm\dip\Ca^{-1}   
\end{equation}
with 
\begin{equation} \label{schur:def}
\CeS := \Ce - \CmT\dip\Ca^{-1}\dip\Cm, \qquad~~ \CaS := \Ca - \Cm\dip\Ce^{-1}\dip\CmT 
\end{equation}
denoting the respective Schur complements of $\Ce$ and $\Ca$. Here it is important to note that~$\Cms$ is not guaranteed to carry major symmetry, even though~$\Cm$ (by premise) does. By solving the projected relationships~\eqref{ps-gsm-compl1} for the flux fields, we obtain 
\begin{equation} \label{td-pots-ps1}
\begin{aligned}
\bfsig\psupe &\,=\,  \Ceps\dip\bfeps\psup + \CmTps\dip\bfal\psup, \\
-\bfA\psup   &\,=\,  (\CmTps)\Tsup\dip\bfeps\psup + \Caps\dip\bfal\psup, 
\end{aligned}    
\end{equation}
where the respective plane-stress reductions of $\Ce,\Ca$ and~$\CmT$ featured by the first of~\eqref{phi:psi:def} read 
\begin{equation} \label{C:coeff:conj}
\Ceps = (\SSs)^\dagger, \qquad \Caps = (\SSA)^\dagger, \qquad 
\CmTps = (\SSs)^\dagger\dip(\bfP\dip\Cms\dip\bfP)\Tsup \dip (\bfP\dip\CA\dip\bfP)^\dagger, 
\end{equation}
with 
\begin{equation} \label{schur:def2} 
\begin{aligned}
\SSs := \bfP\dip\Cs\dip\bfP - 
(\bfP\dip\Cms\dip\bfP) \dip (\bfP\dip\CA\dip\bfP)^\dagger \dip (\bfP\dip\Cms\dip\bfP)\Tsup, \\
 \SSA := \bfP\dip\CA\dip\bfP - 
(\bfP\dip\Cms\dip\bfP)\Tsup \dip (\bfP\dip\Cs\dip\bfP)^\dagger \dip (\bfP\dip\Cms\dip\bfP), 
\end{aligned}
\end{equation}
denoting the respective Schur complements of $\bfP\dip\Cs\dip\bfP$ and $\bfP\dip\CA\dip\bfP$.

Similarly, starting from~\eqref{auxx1}(b,d) we obtain  the respective plane-stress reductions of $\De,\Da$ and~$\DmT$ featured by the second of~\eqref{phi:psi:def} as
\begin{equation} \label{td-pots-ps2}
\begin{aligned}
\bfsig\psupv &\,=\,  \Deps\dip\bfepsd\psup + \DmTps\dip\bfald\psup, \\
\bfA\psup   &\,=\,  (\DmTps)\Tsup\dip\bfepsd\psup + \Daps\dip\bfald\psup, 
\end{aligned}    
\end{equation}
where 
\begin{equation}
\Deps = (\STs)^\dagger, \qquad \Daps = (\STA)^\dagger, \qquad 
\DmTps = (\STs)^\dagger\dip(\bfP\dip\Dms\dip\bfP)\Tsup \dip (\bfP\dip\DA\dip\bfP)^\dagger 
\end{equation}
with
\begin{equation} 
\begin{aligned}
\STs := \bfP\dip\Ds\dip\bfP - 
(\bfP\dip\Dms\dip\bfP) \dip (\bfP\dip\DA\dip\bfP)^\dagger \dip (\bfP\dip\Dms\dip\bfP)\Tsup, \\
 \STA := \bfP\dip\DA\dip\bfP - 
(\bfP\dip\Dms\dip\bfP)\Tsup \dip (\bfP\dip\Ds\dip\bfP)^\dagger \dip (\bfP\dip\Dms\dip\bfP), 
\end{aligned}
\end{equation}
and 
\begin{equation}
 \Ds = (\DeS)^{-1}, \qquad \DA = (\DaS)^{-1}, \qquad 
 \Dms = (\DeS)^{-1}\dip\Dm\dip\Da^{-1} \label{D:coeff:conj}
\end{equation}
with
\begin{equation}
\DeS := \De - \DmT\dip\Da^{-1}\dip\Dm, \qquad~~ \DaS := \Da - \Dm\dip\De^{-1}\dip\DmT. \label{schur:defD}
\end{equation}

With~\eqref{td-pots-ps1} and~\eqref{td-pots-ps2} in place, thermodynamic potentials~\eqref{phi:psi:def} describing  linear viscoelastic solid under the plane stress condition become 
\begin{equation} \label{phi:psi:def-pstress1}
\begin{aligned}
\psi &\,=\, \demi\bigl(\bfeps\psup\dip\Ceps\dip\bfeps\psup + 2\bfeps\psup\dip\CmTps\dip\bfal\psup +
\bfal\psup\dip\Caps\dip\bfal\psup \bigr), \\*[2mm]
\varphi &\,=\, \demi\bigl(\bfepsd\dip\Deps\dip\bfepsd\psup + 2\bfepsd\psup\dip\DmTps\dip\bfald\psup + \bfald\psup\dip\Daps\dip\bfald\psup \bigr).
\end{aligned} 
\end{equation}
By retaining the format of the original 3D model~\eqref{phi:psi:def} (with all material tensors replaced by their plane-stress counterparts), we find that general properties such as~\eqref{alpheps}, \eqref{gsm:sigeps} and~\eqref{gsm:memory} carry over to the plane-stress context, provided that the tensor inverse $(\bdot)^{-1}$ is superseded by its Moore-Penrose counterpart~$(\bdot)^\dagger$. On revisiting the discussion in Section~\ref{GSMltd}, we also observe that the plane-stress constitutive model~\eqref{phi:psi:def-pstress1} permits \emph{at most three} relaxation times.  

\begin{remark}
As an important practical consequence, the format similarity between~\eqref{phi:psi:def} and~\eqref{phi:psi:def-pstress1} demonstrates that the error-in-constitutive-relation framework~\cite{jmps2024} can be readily deployed toward material identification of sheet-like (dissipative) solid specimens subjected to the plane stress condition.    
\end{remark}

In the case of non-dissipative material behavior where $\CmT=\Ca=\De=\DmT=\Da=\bfze$, it is easy to show that the plane-stress reduction of~\eqref{phi:psi:def} reads 
\begin{equation} \label{phi:psi:def-pstress2}
 \psi \,=\, \demi\bfeps\psup\dip\Ceps\dip\bfeps\psup, \qquad 
 \Ceps = (\bfP\dip\Ce^{-1}\dip\bfP)^\dagger, \qquad \varphi=0
 \end{equation}
which is consistent with~\eqref{stress3}--\eqref{stress4}.

\section{Examples of the plane-stress memory functions in bulk}
\label{A3}

\noindent By the second of~\eqref{isovis8} the complex shear modulus~$\hmuCS(\omega)$ of an isotropic viscoelastic solid, and consequently its memory function in shear~$\muCS(t)$, are unaffected by the plane stress condition. This is, however, not the case with the complex bulk modulus~$\hkaCS(\omega)$ due to the first of~\eqref{isovis8}. Accordingly, it is of interest to compute (via inverse Fourier transform) the plane-stress memory function in bulk, $\kaCS(t)$, for a set of classical viscoelastic models including those in Fig.~\ref{viscomod1}. In what follows, (i) the relevant characteristic time intervals are assumed for convenience to be non-oriented in that $[\tau_1,\tau_2]:=[\min\{\tau_1,\tau_2\},\max\{\tau_1,\tau_2\}]$, and (ii) our computations rely on the set of Fourier transform pairs gathered in Table~\ref{tab:fourier} (see Appendix~\ref{A:FTP} for details).

\begin{table}[h]
\centering \renewcommand{\arraystretch}{1.35}
\begin{tabular}{|c||c|c|}\hline
 (a) & $1$ & $2\pi\delta(\omega)$ \\ & $\delta(t)$ & 1 \\ \hline
 (b) & $t$ & $-2\ii\pi\delta'(\omega)$ \\ & $\delta'(t)$ & $-\ii\omega$ \\ \hline
\end{tabular} \qquad\qquad
\begin{tabular}{|c||c|c|}\hline
 (c) & $\text{sgn}(t)$ & $2\ii/\omega$ \\ \hline
 (d) & $t\, \text{sgn}(t)$ & $-2/\omega^2$ \\ \hline
 (e) & $H(t)$ & $\pi \delta(\omega) + \ii/\omega$ \\ \hline
 (f) & $H(t) \, e^{-t/\tau}$ & $\tau/(1-\ii\tau\hh\omega)$ \\ \hline
\end{tabular}
\caption{Relevant Fourier transform pairs involving distributions.}
\label{tab:fourier}
\end{table}

\subsection{Kelvin-Voigt solid} \label{kelvin}

Letting $\gamma\!=\!G$ and $\tau\!=\!\eta/G$ according to~(\ref{fourx11}a), for the viscoelastic model described in Fig.~\ref{viscomod1}(a) one has
\begin{equation} \label{four7}
\sigma(t) = \gamma \hh \eps(t) + \eta\hh \dot{\eps}(t) ~~ \Rightarrow ~~
  \hat{C}(\omega) = \gamma\big( 1 - \ii\tau\omega \big) ~~\Rightarrow~~
  C(t) = \gamma\big[\tau\hh \delta'(t) + \delta(t)\big]
\end{equation}
by virtue of the Fourier transform pairs~(a) and~(b) in Table~\ref{tab:fourier}. This allows us to consider an isotropic Kelvin-Voigt solid, whose memory functions $\kappa\Csub(t)$ and $\mu\Csub(t)$ obey the above model with the respective elastic constants $\kappa$ and~$\gamma$ and characteristic times $\tau$ and~$\theta$. In this case the plane-stress complex bulk modulus $\hkaCS(\omega)$,
given by~\eqref{isovis8} with $\hkaC(\omega) = \kappa(1-\ii\tau\omega)$ and $\hmuC(\omega) = \gamma(1-\ii\theta\omega)$, can be computed as 
\begin{equation} \label{ksig:comp}
\hkaCS(\omega)
= \kaS\frac{(1-\ii\tau\omega )(1-\ii\theta\omega )}{1-\ii\tau^{\sigma}\omega}
  = \kaS \Big( 1-\ii\frac{\tau\theta}{\tau^{\sigma}}\omega -\alpha + \frac{\alpha}{1-\ii\tau^{\sigma}\omega} \Big), 
\end{equation}
featuring the (plane-stress) elastic bulk modulus~$\kaS$ and relaxation time~$\tau^{\sigma}$ given by 
\begin{equation} \label{ksig:elas}
\kaS := \frac{9\kappa \gamma}{3\kappa+4\gamma}, \qquad 
\tau^{\sigma} := \frac{3\kappa\hh \tau + 4\gamma\hh \theta}{3\kappa+4\gamma} 
\end{equation}
and auxiliary constant 
\begin{equation}
\alpha = \frac{(\tau^{\sigma}-\tau)(\tau^{\sigma}-\theta)}{(\tau^{\sigma})^2}
 = -\frac{12\kappa\mu(\tau-\theta)^2}{(3\kappa\hh\tau + 4\gamma\hh\theta)^2}. \label{aux01}
\end{equation}
As a result, the plane-stress memory function in bulk $\kaCS(t)$ is obtained (via the Fourier pairs~(a), (b) and~(f) in Table~\ref{tab:fourier}) as the causal distribution 
\begin{equation} \label{visco11cm}
\kaCS(t) = \kaS \Big[ \frac{\tau\theta}{\tau^{\sigma}}\hh\delta'(t) + \alpha\hh\delta(t) + \frac{1-\alpha}{\tau^{\sigma}} H(t) \, e^{-t/\tau^{\sigma}} \Big]. 
\end{equation}

From~\eqref{aux01}, we observe that $(\tau^{\sigma}-\tau)(\tau^{\sigma}-\theta)<0$, showing that $\tau^{\sigma}\in[\tau,\theta]$. It is also interesting to note that~\eqref{visco11cm} resembles the memory function of the Jeffreys model (dashpot~$\eta_1$ connected in series with a Kelvin-Voigt model (dashpot~$\eta_2$, spring~$\gamma$)) whose memory function reads
\begin{equation} \label{visco12cm}
C(t) = \frac{\eta_1}{\tau_1} \Big[\tau_2\hh\delta'(t) + \frac{\tau_1-\tau_2}{\tau_1}\hh\delta(t) -
\frac{\tau_1-\tau_2}{\tau_1^2}\hh H(t)\,e^{-t/\tau_1} \Big], \qquad \tau_1=\frac{\eta_1\!+\eta_2}{\gamma}, \quad \tau_2 = \frac{\eta_2}{\gamma}.
\end{equation}

\subsection{Maxwell material} 

With reference to Fig.~\ref{viscomod1}(b), the Maxwell viscoelastic model satisfies
\[
\tau\dot\sigma(t) + \sigma(t) = \eta\dot\eps(t)  ~~\Rightarrow~~
\hat{C}(\omega) = \gamma\hh \frac{\tau\omega}{\ii+\tau\omega} 
~~\Rightarrow~~
C(t) = \gamma\Big[\delta(t) - \frac{1}{\tau} \hh H(t) \hh e^{-t/\tau}\Big],
\]
where $\gamma\!=\!G$ and $\tau\!=\!\eta/G$ due to~(\ref{fourx11}b), and $C(t)$ is obtained via the Fourier pairs~(a) and~(f) in Table~\ref{tab:fourier}.
Let us now consider an isotropic Maxwell material whose memory functions $\kappa\Csub$ and $\mu\Csub$ obey the above model with respective elastic constants $\kappa$ and~$\gamma$ and relaxation times $\tau$ and~$\theta$. The plane-stress complex bulk modulus $\hkaCS(\omega)$, given by~\eqref{isovis8} with $\hkaC(\omega) = \kappa \hh \frac{\tau\hh \omega}{\ii+\tau\hh\omega}$ and $\hmuC(\omega) = \gamma \hh \frac{\theta\hh \omega}{\ii+\theta\hh\omega}$, is obtained after some algebra as
\begin{equation} \label{kapas-maxw}
\hkaCS(\omega) = \kaS\frac{\tau^{\sigma}\omega}{\ii+\tau^{\sigma}\omega}, 
\end{equation}
featuring the (plane-stress) elastic bulk modulus $\kaS$ due to~\eqref{ksig:elas} and relaxation time 
\begin{equation} \label{taus-maxw}
\tau^{\sigma} = \frac{(3\kappa+4\gamma)\tau\hh \theta}{3\kappa\hh\tau + 4\gamma\hh\theta}.     
\end{equation}
Again, it is easily shown that $\tau^{\sigma}\in[\tau,\theta]$. Consequently, the plane-stress memory function in bulk $\kaCS(t)$ is found to obey the Maxwell model with elastic constant $\kaS$ and relaxation time $\tau^{\sigma}$, namely 
\begin{equation} \notag
\kaCS(t) = \kaS \Big[\delta(t) - \frac{1}{\tau^{\sigma}} \hh H(t) \hh e^{-t/\tau^{\sigma}}\Big]. 
\end{equation}

\subsection{Standard linear solid} \label{secSLS}

For the standard linear solid (SLS) model shown in Fig.~\ref{viscomod1}(c), letting $\gamma,\tau_1$ and~$\tau_2$ take values according to~(\ref{fourx11}c) we obtain
\begin{multline} \label{four12}
\sigma(t) + \tau_1\dot\sigma(t)  = \gamma\big(\eps(t) + \tau_2\hh\dot\eps(t)\big), \quad \tau_1\!<\!\tau_2 ~~\Rightarrow~~ \\ 
\hat{C}(\omega) = \gamma \hh \frac{1-\ii\tau_2\hh\omega}{1-\ii\tau_1\hh\omega}
~~~\Rightarrow~~
  C(t) = \gamma\Big[\frac{\tau_2}{\tau_1} \delta(t) - \frac{\tau_2-\tau_1}{\tau_1^2} \hh H(t) e^{-t/\tau_1} \Big].
\end{multline}
In this setting, we consider an isotropic solid endowed with the memory function $\kappa\Csub$ (resp.~$\mu\Csub$) due to the SLS model featuring the elastic constant $\kappa$ (resp.~$\gamma$) and characteristic times $\tau_1\!<\!\tau_2$ (resp.~$\theta_1\!<\!\theta_2$). After some algebra, the plane-stress complex bulk modulus $\hkaCS(\omega)$, given by~\eqref{isovis8} with $\hkaC(\omega) = \kappa \hh \frac{1-\ii\tau_2\omega}{1-\ii\tau_1\omega}$ and $\hmuC(\omega) = \gamma \hh \frac{1-\ii\theta_2\omega}{1-\ii\theta_1\omega}$, can be shown to read
\begin{equation} \label{cmodsls}
\hkaCS(\omega)
\,=\, \kaS\,\frac{(1-\ii\tau_2\omega)(1-\ii\theta_2\omega)}{A(\ii\omega)^2-B(\ii\omega)+1}
\,=\, \kaS \Big( \frac{\tau_2\theta_2}{A}
 + \frac{\alpha_1}{1-\ii\tau^{\sigma}_1\omega} + \frac{\alpha_2}{1-\ii\tau^{\sigma}_2\omega} \Big), 
\end{equation}
where the (plane-stress) elastic bulk modulus $\kaS$ is again given by~\eqref{ksig:elas}; 
\begin{equation} \label{aux05}
A = \frac{3\kappa\hh\tau_2\hh\theta_1\!+4\gamma\hh\tau_1\hh\theta_2}{3\kappa+4\gamma}>0, \quad
B = \frac{3\kappa(\tau_2+\theta_1)+4\gamma(\tau_1\!+\theta_2)}{3\kappa+4\gamma}>0;
\end{equation}
and
\begin{equation} \label{aux05b}
\alpha_j
 = \frac{(\tau^{\sigma}_j-\tau_2)(\tau^{\sigma}_j-\theta_2)}{\tau^{\sigma}_j(\tau^{\sigma}_j-\tau^{\sigma}_{3-j})}, ~~~j=\overline{1,2}.
\end{equation}
In~\eqref{aux05}--\eqref{aux05b}, the plane-stress relaxation times $\tau^{\sigma}_1<\tau^{\sigma}_2$ are given by $\tau^{\sigma}_j = 1/z_j$, with  $z_1>z_2>0$ being the  roots of
\begin{equation} \label{D=0}
\mathcal{D}(z) := Az^2 - Bz + 1 = 0. 
\end{equation}
For completeness, we note that $z_{1/2}$ are necessarily (i) real-valued due to the fact that $\tau_1\!<\!\tau_2$ and $\theta_1\!<\!\theta_2$ which yields 
\begin{equation}
B^2-4A \,=\, \frac{\big(3\kappa(\tau_2-\theta_1) + 4\gamma(\tau_1-\theta_2) \big)^2 + 48\kappa\gamma(\tau_1-\tau_2)(\theta_1-\theta_2)}{(3\kappa+4\gamma)^2} > 0, \label{AB:discr}
\end{equation}
and (ii) positive thanks to the positivity of~$A$ and~$B$. As a result, the plane-stress memory function in bulk can be computed as
\begin{equation} \label{memory-sls}
\kaCS(t) = \kaS \Big[ \frac{\tau_2\theta_2}{A} \delta(t)
+ \frac{\alpha_1}{\tau^{\sigma}_1} H(t) \, e^{-t/\tau^{\sigma}_1}
+ \frac{\alpha_2}{\tau^{\sigma}_2} H(t) \, e^{-t/\tau^{\sigma}_2} \Big].
\end{equation}
To assess the nature of the featured relaxation times~$\tau^{\sigma}_{j}$ $(j\!=\!\overline{1,2})$, we note that 
\[
\mathcal{D}(\tau^{-1}_1) = \frac{3\kappa(\theta_1-\tau_1)(\tau_2-\tau_1)}{(3\kappa+4\gamma)\tau_1^2}, \qquad
\mathcal{D}(\theta^{-1}_1) = -\frac{4\gamma(\theta_1-\tau_1)(\theta_2-\theta_1)}{(3\kappa+4\gamma)\theta_1^2}.
\]
Since $\tau_2\!>\!\tau_1$ and $\theta_2\!>\!\theta_1$ thanks to~(\ref{fourx11}c), $\text{sgn}\big(\mathcal{D}(\tau^{-1}_1)\big)\!=\!\text{sgn}(\theta_1\!-\!\tau_1)$ and $\text{sgn}\big(\mathcal{D}(\theta^{-1}_1)\big)\!=\!-\text{sgn}(\theta_1\!-\!\tau_1)$. As a result, we deduce that 
\[
\tau^{\sigma}_1 = z_1^{-1} \in [\tau_1,\theta_1], \qquad 
\tau^{\sigma}_2 = z_2^{-1} > \text{max}(\tau_1,\theta_1)
\]
due to the fact that $z\mapsto \mathcal{D}(z)$ is a convex second-degree polynomial, see Fig.~\ref{convex}. In most practical situations one expects to have $\theta_1\!>\!\tau_1$, since the bulk  viscosity is typically negligible to that in shear.

\begin{figure}[ht]
\centering{\includegraphics[width=0.63\linewidth]{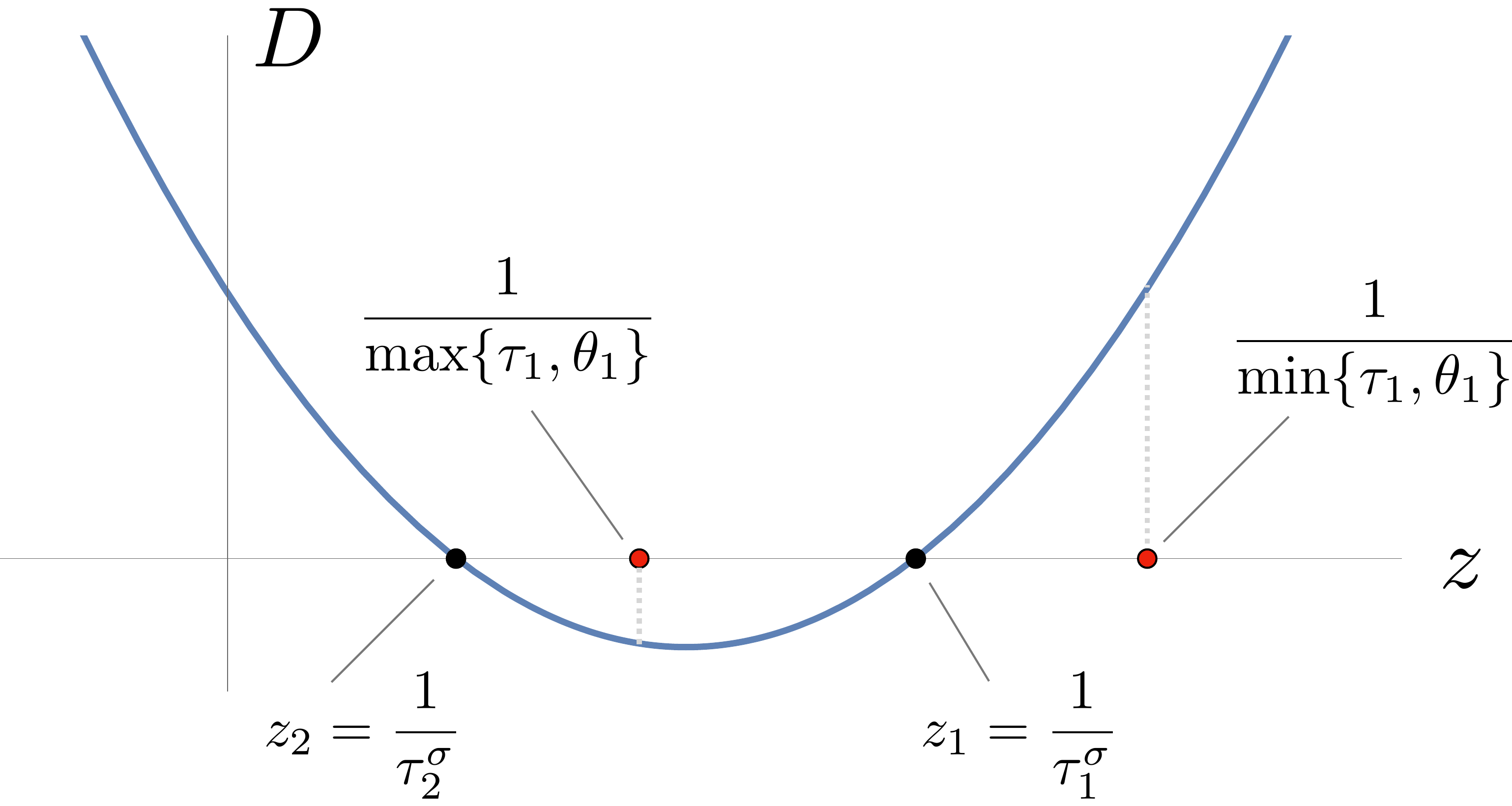}}
\caption{Relationship between $(\tau_1^\sigma,\tau_2^\sigma)$ and $(\tau_1,\theta_1)$ assuming $\tau_2\!>\!\tau_1$ and $\theta_2\!>\!\theta_1$.}
\label{convex}
\end{figure}

\subsection{Zener solid} 

For this visoelastic model, the two ``springs" ($G_1$ and $G_2$) and a single ``dashpot'' ($\eta$) are arranged as in Fig.~\ref{viscomod1}(d). On defining $\gamma,\tau_1$ and~$\tau_2$ as in~(\ref{fourx11}d), the Zener model obeys the same differential equation~\eqref{four12} as the SLS model, and all results obtained for the plane-stress behavior of the latter apply without modification to the Zener solid.

\subsection{Jeffreys material} \label{secJEF}

We next examine the Jeffreys viscoelastic model, whose three-parameter mechanical analogue is shown in Fig.~\ref{viscomod2}(a). Letting
\[
\tau_1 := \frac{\eta_1\!+\eta_2}{G}, \qquad \tau_2:= \frac{\eta_2}{G},
\]
for this model we obtain  
\begin{multline} \label{jeff1}
\sigma(t) + \tau_1\hh\dot\sigma(t) = \eta_1(\dot{\eps}(t) + \tau_2\hh\ddot{\eps}(t)), \quad \tau_1>\tau_2 ~~\Rightarrow~~ \\ 
\Rightarrow~~ \hat{C}(\omega)
= -\ii \eta_1 \omega \hh \frac{1-\ii\tau_2\hh\omega}{1-\ii\tau_1\hh\omega}
~~\Rightarrow~~
  C(t) = \eta_1\Big[\frac{\tau_2}{\tau_1} \delta'(t) + \frac{\tau_1-\tau_2}{\tau_1^2} \delta(t) - \frac{\tau_1-\tau_2}{\tau_1^3} \hh H(t) e^{-t/\tau_1} \Big].
\end{multline}
\begin{figure}[h!]
\centering{\includegraphics[width=0.67\linewidth]{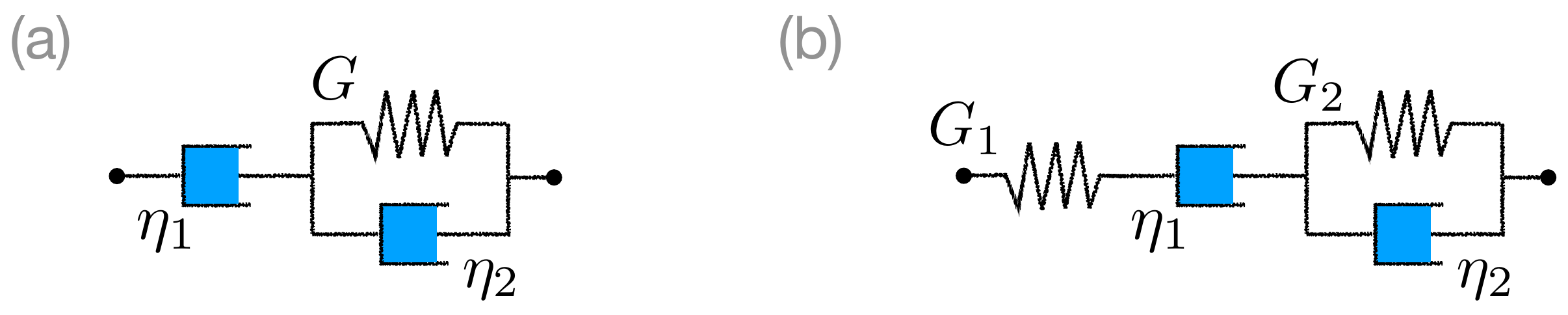}}
\caption{Extended mechanical models of the viscoelastic behavior: (a) Jeffreys material, and (b) Burgers material.} \label{viscomod2}
\end{figure}

As a result, one can write the complex bulk and shear modulus of an isotropic Jeffreys material as 
\[
\hkaC(\omega) = -\ii\eta_1\omega\hh\frac{1-\ii\tau_2\hh\omega}{1-\ii\tau_1\hh\omega}, \qquad
\hmuC(\omega) = -\ii\xi_1\omega\hh\frac{1-\ii\theta_2\hh\omega}{1-\ii\theta_1\hh\omega}, 
\]
where $\{\eta_1,\tau_1\!=\!(\eta_1\!+\eta_2)/\kappa,\tau_2\!=\!\eta_2/\kappa\}$ and~$\{\xi_1,\theta_1\!=\!(\xi_1\!+\xi_2)/G,\theta_2\!=\!\xi_2/G\}$ denote the ``outer'' damping parameter and relaxation time in bulk and shear, respectively. The plane-stress complex bulk modulus $\hkaCS(\omega)$ due to~\eqref{isovis8} is then obtained as
\[
\hkaCS(\omega)
= -\ii\eta^{\sigma}\!\omega\,\frac{(1-\ii\tau_2\omega)(1-\ii\theta_2\omega)}{A(\ii\omega)^2-B(\ii\omega)+1}
= \eta^{\sigma} \Big(\! -\frac{\tau_2\theta_2}{A}\ii\omega
+ \alpha_1 + \alpha_2
- \frac{\alpha_1}{1-\ii\tau^{\sigma}_1\omega} - \frac{\alpha_2}{1-\ii\tau^{\sigma}_2\omega} \Big), 
\]
where $A$ and~$B$ are constants given by~\eqref{aux05} with $(\kappa,\gamma)$ superseded by $(\eta_1,\xi_1)$; the plane-stress damping parameter $\eta^{\sigma}$ reads
\[
\eta^{\sigma} = \frac{9\eta_1\xi_1}{3\eta_1\!+4\xi_1};
\]
the constants $\alpha_{1/2}$ are given by
\begin{equation}  \label{alpha12J:defx}
\alpha_j
 = \frac{(\tau^{\sigma}_j-\tau_2)(\tau^{\sigma}_j-\theta_2)}{(\tau^{\sigma}_j)^2(\tau^{\sigma}_j-\tau^{\sigma}_{3-j})}, ~~~j=\overline{1,2};
\end{equation}
and the relaxation times $\tau^{\sigma}_1<\tau^{\sigma}_2$ are again given by $\tau^{\sigma}_i = z^{-1}_i$ with $z_1>z_2>0$ being the roots of~\eqref{D=0}. 
As in the SLS case, we observe that $z_{1/2}$ are necessarily (i) real-valued due to the fact that $\tau_1\!>\!\tau_2$ and $\theta_1\!>\!\theta_2$ which ensures that~\eqref{AB:discr} again holds (upon replacing $(\kappa,\gamma)$ with $(\eta_1,\xi_1)$),
and (ii) positive thanks to the positivity of~$A$ and~$B$. This yields the (causal) plane-stress memory function in bulk as
\begin{equation} \label{memory-jeff}
\kaCS(t) = \eta^{\sigma} \Big[\frac{\tau_2\theta_2}{A} \delta'(t) + (\alpha_1 + \alpha_2)\delta(t)
-\frac{\alpha_1}{\tau^{\sigma}_1} H(t) \, e^{-t/\tau^{\sigma}_1}
- \frac{\alpha_2}{\tau^{\sigma}_2} H(t) \, e^{-t/\tau^{\sigma}_2} \Big].
\end{equation}

\subsection{Burgers material} 

Finally, we consider the Burgers viscoelastic model shown in Fig.~\ref{viscomod2}(b) that is described by 
\begin{multline} \label{burg1}
G_2\hh \sigma(t) + [G_1\tau_1+G_2(\tau_1\nes+\tau_2)]\hh\dot\sigma(t) + G_2\tau_1\tau_2 \,\ddot\sigma(t) = G_1 G_2 \tau_1 \hh [\eps(t) + \tau_2\hh\dot\eps(t)] ~~ \Rightarrow \\ 
\hat{C}(\omega)
= \frac{G_1 G_2 \hh\tau_1\hh \omega \hh(1\!-\!\ii\tau_2\hh\omega)}{G_1\tau_1\hh \omega + \ii\hh G_2 (1\!-\!\ii\tau_1\hh\omega)(1-\ii\tau_2\hh\omega)} ~~\Rightarrow~~ \\
C(t) = G_1 \tau_1 \Big[\! (\alpha_1 + \alpha_2)\delta(t) 
-\frac{\alpha_1}{\tauT_1} H(t) \, e^{-t/\tauT_1}
- \frac{\alpha_2}{\tauT_2} H(t) \, e^{-t/\tauT_2} \Big], 
\end{multline}
where $\tau_j\!=\!\eta_j/G_j$ $(j\!=\!\overline{1,2})$ are the characteristic times; $\alpha_j$ are given by~\eqref{alpha12J:defx} with~$\theta_2\!=\!0$, and $\tauT_1<\tauT_2$ are given by $\tauT_i = 1/z_i$, with  $z_1>z_2>0$ being the real roots of~\eqref{D=0} with 
\begin{equation} \label{AB:BurgersK}
A = \tau_1\hh\tau_2>0, \qquad B = \frac{G_1\nes\tau_1 + G_2(\tau_1\nes+\tau_2)}{G_2}>0.
\end{equation}

In this vein, we consider an isotropic viscoelastic material with the complex bulk and shear modulus 
\begin{equation} \label{burg2}
\hkaC(\omega)= \frac{\kappa_1 \kappa_2 \hh\tau_1\hh \omega \hh(1\!-\!\ii\tau_2\hh\omega)}{\kappa_1\tau_1\hh \omega + \ii\hh \kappa_2 (1\!-\!\ii\tau_1\hh\omega)(1-\ii\tau_2\hh\omega)}, \quad 
\hmuC(\omega)= \frac{\gamma_1 \gamma_2 \hh\theta_1\hh \omega \hh(1\!-\!\ii\theta_2\hh\omega)}{\gamma_1\theta_1\hh \omega + \ii\hh \gamma_2 (1\!-\!\ii\theta_1\hh\omega)(1-\ii\theta_2\hh\omega)}, \quad 
\end{equation}
featuring the elastic moduli $\kappa_j$ (resp.~$\gamma_j)$ and characteristic times $\tau_j\!=\!\eta_j/\kappa_j$ (resp.~$\theta_j\!=\!\xi_j/\gamma_j$) in bulk (resp.~shear). By virtue of~\eqref{isovis8} and~\eqref{burg2}, the plane-stress complex bulk modulus is computed as 
\begin{equation} \label{burg3}
\hkaS(\omega) = \eta^\sigma \hh 
\frac{({3\eta_1+4\xi_1})\hh \kappa_2 \hh \gamma_2 \hh (\ii \omega) \hh \mathcal{K}_2 \hh \mathcal{M}_2}
{\kappa_2\hh\gamma_2 \hh\mathcal{K}_2 \hh \mathcal{M}_2 (4 \xi_1 \hh \mathcal{K}_1  + 3 \eta_1 \hh \mathcal{M}_1) - 
\eta_1 \xi_1 \hh(\ii \omega) \hh(3\kappa_2\hh \mathcal{K}_2 + 4\gamma_2\hh \mathcal{M}_2)}
  \,:=\, \eta^\sigma\,\frac{\mathcal{N}(\ii\omega)}{\mathcal{D}(\ii\omega)},
\end{equation}
where the (plane-stress) damping parameter $\kaS$ and affine functions $\mathcal{K}_j(\ii\oo)$ and~$\mathcal{M}_j(\ii\oo)$ are given by 
is given by
\begin{equation}\label{burg4}
\eta^\sigma := \frac{9\hh\eta_1\hh \xi_1}{3\eta_1+4\xi_1}, \qquad
\mathcal{K}_j(\ii\omega) = \ii\tau_j\hh\omega - 1, \quad \mathcal{M}_j(\ii\omega) = \ii\theta_j\hh\omega - 1 \quad (i=1,2).
\end{equation}
From~\eqref{isovis8} and~\eqref{burg2}, one may also observe that~$\mathcal{D}(z)$ is a cubic polynomial admitting the form
\[
\Dcal(z) = C_1 (1-\tau_2\hh z)(1-\thetaT_1\hh z)(1-\thetaT_2\hh z) + C_2(1-\theta_1\hh z)(1-\tauT_1\hh z)(1-\tauT_2\hh z) 
\]
for some constants~$C_1\!<\!0$ and~$C_2\!<\!0$, where $\tauT_1\!<\!\tauT_2$ and $\thetaT_1\!<\!\thetaT_2$ are the characteristic times inherent to~$\hkaC$ and~$\hmuC$, respectively. As a result, we obtain
\begin{equation}
\begin{aligned}
  \Dcal(\tauT_1^{-1})
 &= C(\tau_2\shm\tauT_1)(\tauT_1\shm\thetaT_1)(\tauT_1\shm\thetaT_2), \quad &
  \Dcal(\thetaT_1^{-1})
 &= C(\theta_2\shm\thetaT_1)(\thetaT_1\shm\tauT_1)(\thetaT_1\shm\tauT_2), \\
  \Dcal(\tauT_2^{-1}),
 &= C(\tau_2\shm\tauT_2)(\tauT_2\shm\thetaT_1)(\tauT_2\shm\thetaT_2) \quad &
  \Dcal(\thetaT_2^{-1})
 &= C(\theta_2\shm\thetaT_2)(\thetaT_2\shm\tauT_1)(\thetaT_2\shm\tauT_2),
\end{aligned} \label{aux04}
\end{equation}
where $C$ is for each evaluation a positive constant. From~\eqref{D=0} and expressions for~$A$ and~$B$ due to~\eqref{AB:BurgersK} written for the complex bulk and shear modulus, on the other hand, an examination similar to that illustrated in Fig.~\ref{convex} demonstrates that $\tauT_2>\max\{\tau_1,\tau_2\}>\min\{\tau_1,\tau_2\}>\tauT_1$ and similarly $\thetaT_2>\max\{\theta_1,\theta_2\}>\min\{\theta_1,\theta_2\}>\thetaT_1$. As a result, we have
\begin{equation}
(\tau_2\shm\tauT_1) > 0, \quad (\tau_2\shm\tauT_2) < 0, \quad
(\theta_2\shm\thetaT_1) > 0, \quad (\theta_2\shm\thetaT_2) < 0. \label{aux07}
\end{equation}
Letting $t_1,t_2,t_3,t_4$ denote the characteristic times $\tauT_1,\tauT_2,\thetaT_1,\thetaT_2$ ordered along increasing values, a careful inspection of~\eqref{aux04} for each possible ordering of $\tauT_1,\tauT_2,\thetaT_1,\thetaT_2$ constrained by $\tauT_1\!<\!\tauT_2$, $\thetaT_1\!<\!\thetaT_2$,  and~\eqref{aux07} demonstrates that
\begin{equation}
  D(t^{-1}_4) < 0, \quad~ D(t^{-1}_3) > 0, \quad~ D(t^{-1}_2) < 0, \quad~ D(t^{-1}_1) > 0.
\end{equation}
The polynomial $D(z)$ thus has a root $z_i\!\in\!\mathbb{R}$ ($i\!=\!\overline{1,3}$) in each interval $z\in(t_{i+1}^{-1},t_i^{-1})$. Since $D(z)$ is cubic, all of its roots are accounted for, and each $z_i$ (real, positive, and of single multiplicity) defines a plane-stress relaxation time $\tau^{\sigma}_i=1/z_i\in[t_i,t_{i+1}]$. In other words, the four relaxation times of the Burgers model in bulk and shear ($\tilde\tau_1,\tilde\tau_2,\tilde\theta_1$ and~$\tilde{\theta}_2$) generate the three relaxation times of the plane-stress bulk memory function. Using those roots and deploying the partial fraction expansion of $\Ncal(\ii\oo)/\Dcal(\ii\oo)$, the plane-stress bulk modulus $\hkaS(\omega)$ is finally written as
\[
\hkaCS(\omega) = \eta^\sigma \hh \Big(\alpha_1+\alpha_2+\alpha_3 - \frac{\alpha_1}{1-\ii\tau^{\sigma}_1\omega} - \frac{\alpha_2}{1-\ii\tau^{\sigma}_2\omega} - \frac{\alpha_3}{1-\ii\tau^{\sigma}_3\omega} \Big), \qquad
\alpha_i = 
\frac{(\tau^{\sigma}_i)^2 - (\tau_2\!+\theta_2)\tau^{\sigma}_i + \tau_2\hh\theta_2}
{\tau^{\sigma}_i\hh(\tau^{\sigma}_i-\tau^{\sigma}_j)(\tau^{\sigma}_k-\tau^{\sigma}_i)}
\]
where $(i,j,k)$ are circular permutations of $(1,2,3)$. As in previous cases, this $\hkaCS(\omega)$ defines a causal memory function, which is easily found via the identities from Table~\ref{tab1} to read
\begin{equation}
\kaCS(t) =\eta^\sigma \hh\Big[ 
(\alpha_1+\alpha_2+\alpha_3)\delta(t) - \frac{\alpha_1}{\tau^{\sigma}_1} H(t)\,e^{-t/\tau^{\sigma}_1}
- \frac{\alpha_2}{\tau^{\sigma}_2} H(t)\,e^{-t/\tau^{\sigma}_2} - \frac{\alpha_3}{\tau^{\sigma}_3} H(t)\,e^{-t/\tau^{\sigma}_3} \Big], 
\end{equation}
noting in particular that $\alpha_1+\alpha_2+\alpha_3\!=\!-\tau_2\hh\theta_2/(\tau^{\sigma}_1\hh \tau^{\sigma}_2\hh\tau^{\sigma}_3)$. 

\section{Illustrations}
\label{sec:numex}

\noindent From the foregoing analysis, we observe that an isotropic viscoelastic solid described by any of the elementary models shown in Fig.~\ref{viscomod1} features an \emph{effective elastic} bulk modulus~$\kappa$ and shear modulus~$\gamma$. In this vein we conveniently introduce the effective Poisson's ratio
\[
\nu := \frac{3\kappa-2\gamma}{2(3\kappa+\gamma)},
\]
attributable the Kelvin-Voigt, Maxwell, SLS, and Zener isotropic solids. For any of the forgoing models, the \emph{plane-stress elastic} bulk modulus~$\kappa^\sigma$ is given by~\eqref{ksig:elas} and can be rewritten as 
\[
\frac{\,\kaS}{\!\kappa} = \frac{9}{3\varrho+4}, \qquad \varrho := \frac{\kappa}{\gamma} = \frac{2(1\nes+\nes\nu)}{3(1\nes-\nes2\nu)}. 
\]
In support of the ensuing examples, Fig.~\ref{num1} plots the ratio~$\kaS/\kappa$ versus~$\rho$ and~$\nu$ under the premise of non-auxetic (effective) elastic behavior. As can be seen from the display, $0\!<\!\kappa^\sigma/\kappa \nes\leqslant\!1.5$ is a monotonically-decreasing function of either argument, noting in particular that $\kappa^\sigma\!\nes=\nes\kappa$ for $\kappa\nes=\nes\tfrac{5}{3}\hh\gamma$, i.e.~$\nu\nes=\nes0.25$. 
\begin{figure}[ht]
\centering{\includegraphics[width=0.85\linewidth]{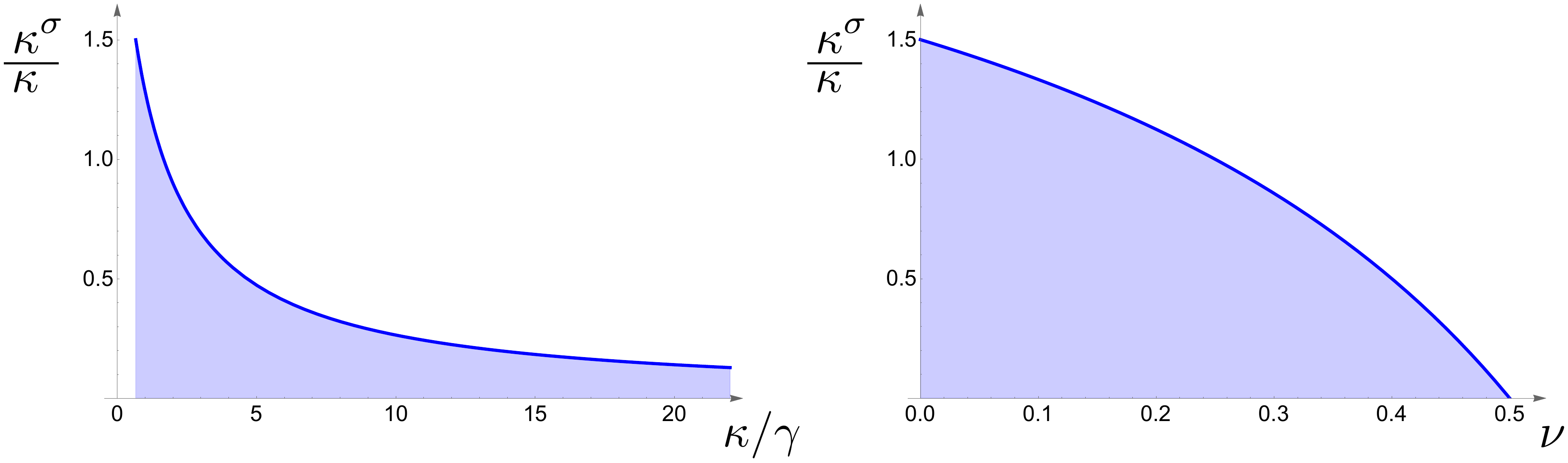}}
\caption{Variation of the plane-stress elastic bulk modulus, $\kappa^\sigma$, featured by the elementary viscoelastic models in  Fig.~\ref{viscomod1}.}
\label{num1}
\end{figure}

\begin{table}[t]
\centering \renewcommand{\arraystretch}{1.1}
\begin{tabular}{|c||c|c|c|c||c|c|c|c|}\hline
Model & \multicolumn{4}{|c||}{Bulk parameters} & \multicolumn{4}{c|}{Shear parameters} \\ \hline\hline
 & \;$K_1$\; & ~$\eta_1$~ & \;$K_2$\; & ~$\eta_2$~ & \;$G_1$\; & ~$\xi_1$~ & \;$G_2$\; & ~$\xi_2$~ \\ \hline 
Kelvin-Voigt & 2 & 0.1 & \cellcolor{gray!15} & \cellcolor{gray!15} & 1 & 0.5 & \cellcolor{gray!15} & \cellcolor{gray!15} \\ \hline 
Maxwell      & 2 & 0.1 & \cellcolor{gray!15} & \cellcolor{gray!15} & 1 &  1  & \cellcolor{gray!15} & \cellcolor{gray!15} \\ \hline 
SLS          & 2 & 0.02 &1  & \cellcolor{gray!15} & 1 & 0.5 & 0.1 & \cellcolor{gray!15} \\ \hline
Jeffreys    & 2 & \cellcolor{gray!15} & \cellcolor{gray!15}& \cellcolor{gray!15} & 0.5 & 1 & \cellcolor{gray!15} & 0.1 \\ \hline
Burgers      & 4 & \cellcolor{gray!15}& \cellcolor{gray!15}& \cellcolor{gray!15} & 3 & 1 & 0.5 & 0.1 \\ \hline
\end{tabular} \renewcommand{\arraystretch}{1}
\caption{Example parameters of an isotropic viscoelastic solid described by the constitutive models shown in Fig.~\ref{viscomod1} and Fig.~\ref{viscomod2}. In the absence of parameter~$\chi_2$ ($\chi\!\in\{K,\eta,G,\xi\}$), $\chi_1\mapsto\chi$. }
\label{tabn1}
\end{table}

\subsection{Kelvin-Voigt solid}

As examined in Section~\ref{kelvin}, the plane-stress complex bulk modulus $\hkaCS$ of an isotropic Kelvin-Voigt solid is computed via~\eqref{ksig:comp} in terms of~$\kaS$, the bulk and shear relaxation times $\tau$ and $\theta$, and the plane-stress relaxation time~$\tau^\sigma$. From~\eqref{ksig:elas}, we obtain 
\begin{equation}\label{char-kelv}
\frac{\tau^{\sigma}}{\tau} = \frac{3\varrho\hh + 4 \zeta}{3\varrho+4}, \qquad \zeta := \frac{\theta}{\tau} 
\end{equation}
which uniquely specifies the ratio~$\tau^{\sigma}/\tau$ in terms of~$\varrho$ and~$\zeta$ as shown in Fig.~\ref{num2-kelv}(a). From the display, one observes that for given bulk relaxation time~$\tau$, its plane-stress companion~$\tau^{\sigma}$ increases with (i) increasing shear relaxation time~$\theta$ and (ii) decreasing $\kappa/\gamma$, i.e. diminishing (effective) Poisson's ratio. In realistic situations where $\zeta>1$, it is also apparent from both~\eqref{char-kelv} and Fig.~\ref{num2-kelv}(a) that $\tau^{\sigma}>\tau$, a feature that is reflected in the reminder of this section. The behavior of the complex moduli $\hmuCS\nes=\nes\muC$, $\hkaC$ and~$\hkaCS$ is  illustrated in Fig.~\ref{num2-kelv}(b,c) assuming the model parameters listed in Table~\ref{tabn1}, which yield~$\kappa\!=\!2$, $\gamma\!=\!1$, $\tau\!=\!0.05$, and~$\theta\!=\!0.5$ in~\eqref{ksig:comp}--\eqref{ksig:elas}.  Beyond the increase in both real and imaginary parts of the complex bulk modulus at higher frequencies ($\omega\!>\!2$), we observe that $\hkaCS(\omega)$ does not adhere to the Kelvin-Voigt model and, in fact, to any of the phenomenological descriptions examined in this work. 
\begin{figure}[ht]
\centering{\includegraphics[width=0.92\linewidth]{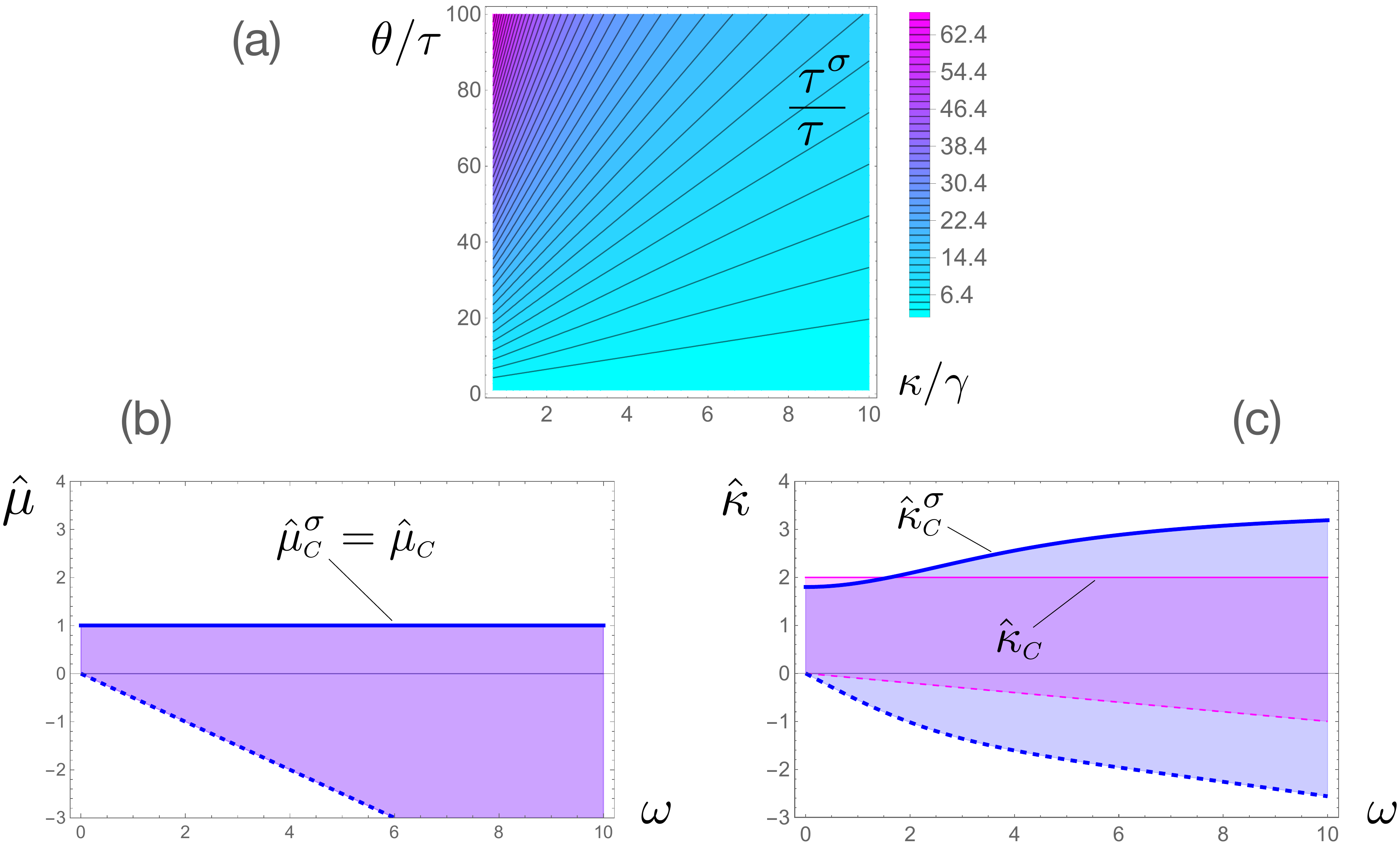}}
\caption{Kelvin-Voigt solid: plane-stress (a) relaxation time~$\tau^\sigma$, (b) complex shear modulus~$\hmuCS$, and (c) complex bulk modulus~$\hkaCS$.}
\label{num2-kelv}
\end{figure}

\subsection{Maxwell material}

From~\eqref{taus-maxw}, the plane-stress relaxation time~$\tau^\sigma$ featured by an isotropic Maxwell material can be rewritten as 
\[
\frac{\tau^{\sigma}}{\tau} = \frac{3\varrho\hh + 4 \zeta}{3\varrho+4}, \qquad \zeta := \frac{\theta}{\tau}; 
\]
a dependence that, for $\zeta\!>\!1$ and given~$\tau$, features monotonic growth of $\tau^\sigma$ with both~$\rho$ and~$\zeta$ as shown in Fig.~\ref{num3-maxw}(a). The frequency dependence of the complex moduli~$\hmuCS$ and~$\hkaCS$ is  illustrated in Fig.~\ref{num3-maxw}(b,c) assuming the model parameters listed in Table~\ref{tabn1}, which correspond to~$\kappa\!=\!2$, $\gamma\!=\!1$, $\tau\!=\!0.05$, and~$\theta\!=\!1$ in~\eqref{kapas-maxw}--\eqref{taus-maxw}. The result illustrates the fact that~$\hkaCS(\omega)$ is also described by the Maxwell model, endowed with relaxation time $\tau<\tau^\sigma<\theta$. In fact, this is the only case encountered in this study where the complex bulk modulus~$\hkaC$ and its plane-stress counterpart~$\hkaCS$ share the phenomenological description.
\begin{figure}[ht]
\centering{\includegraphics[width=0.92\linewidth]{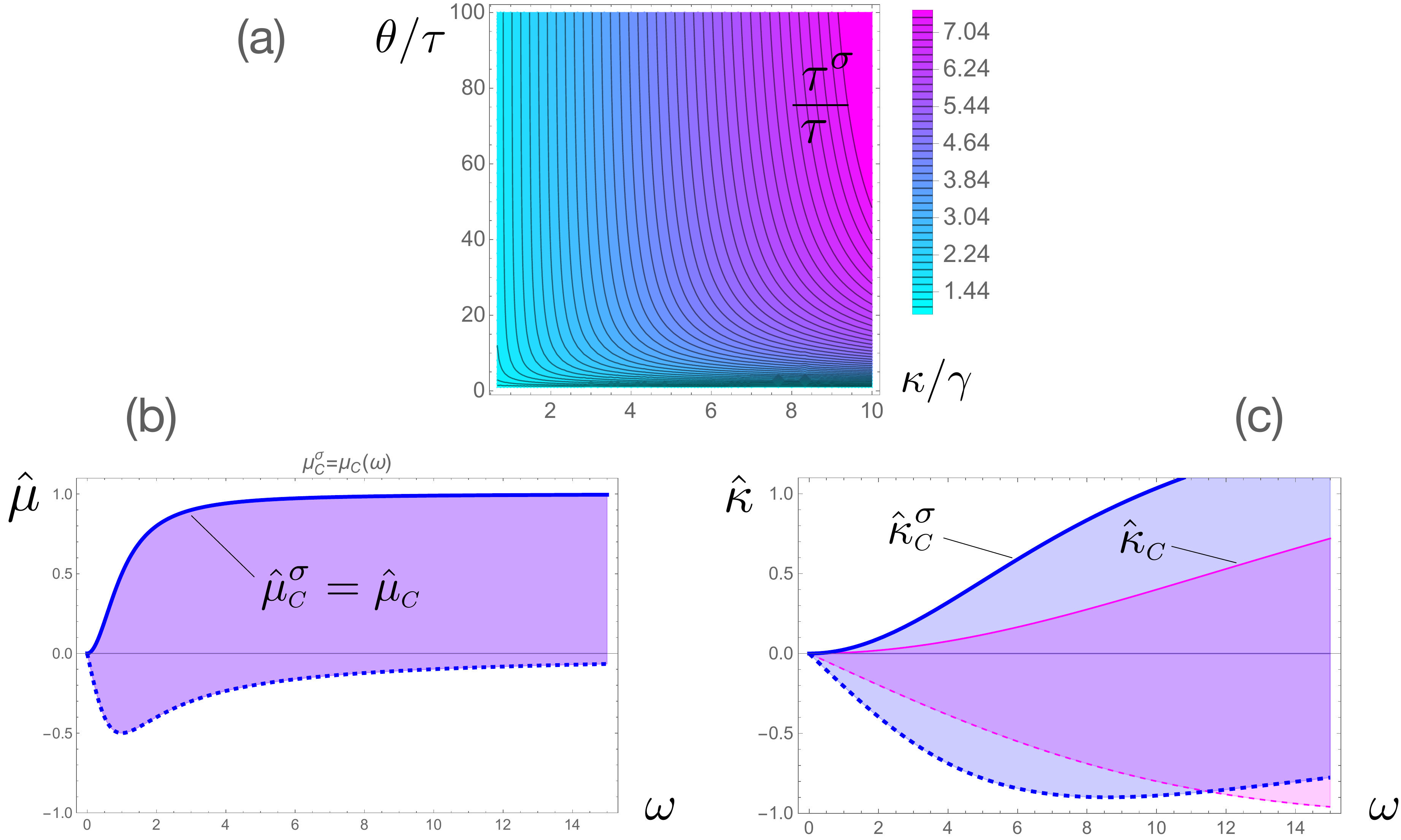}}
\caption{Maxwell material: plane-stress (a) relaxation times~$\tau^\sigma$, (b) complex shear modulus~$\hmuCS$, and (c) complex bulk modulus~$\hkaCS$  (solid lines: real parts, dashed lines: imaginary parts).}
\label{num3-maxw}
\end{figure}

\subsection{Standard linear solid}

From the analysis in Section~\ref{secSLS}, we find that the reciprocals of the plane-stress (bulk) relaxation times, $(\tau_j^\sigma)^{-1}$, solve the quadratic equation~\eqref{D=0} whose coefficients~$A$ and~$B$ are given by~\eqref{aux05}. To facilitate the illustration, we rewrite these coefficients as 
\begin{equation} \label{aux05x}
A = \tau_{\nes\theta}^2 \, \frac{3\varrho+4\tauT\hh\thetaT}{3\varrho+4}, \qquad
B = \tau_{\nes\theta} \, \frac{3\varrho\hh(\tilde{\upsilon}+\tilde{\upsilon}^{-1})+4(\tauT\hh \tilde{\upsilon}\!+\thetaT\hh \tilde{\upsilon}^{-1})}{3\varrho+4},
\end{equation}
where
\begin{equation} \label{aux05y}
\tau_{\nes\theta} := \sqrt{\tau_2\hh\hh\theta_1}, \qquad \tauT := \frac{\tau_1}{\tau_2}, \qquad \thetaT := \frac{\theta_2}{\theta_1}, \qquad \tilde{\upsilon} := \sqrt{\frac{\tau_2}{\theta_1}}. 
\end{equation}
For given~$\varrho$, this result exposes~$\tau_j^\sigma/\tau_{\theta}$ ($j\!=\!\overline{1,2}$) as a function of~$(\tauT,\thetaT,\tilde{\upsilon})$ under the premise $\tau_\theta\!>\!0$. Letting $\varrho\!=\!7.\bar{3}$ ($\nu\!\simeq\!0.435$), we illustrate this variation in Fig~\ref{num4-slss}(a) over the box 
\[
\Omega=\big\{(\tauT,\thetaT,\tilde{\upsilon}): 0<\!\tauT\!<1,~ 1<\!\thetaT\!<10,~ 0<\!\tilde{\upsilon}\!<1\big\}, 2x
\]
which guarantees that $\max\{\tau_1,\tau_2\}<\min\{\theta_1,\theta_2\}$, i.e.~that the bulk viscosity of an SLS is inferior to that in shear. The variation of the complex moduli~$\hmuCS$ and~$\hkaCS$ is exemplified in Fig.~\ref{num4-slss}(b,c) assuming the model parameters listed in Table~\ref{tabn1}, which yield~$\kappa\!=\!0.\overline{6}$, $\gamma\!=\!0.\overline{09}$ ($\varrho\!=\!7.\bar{3}$), $\tau_1\!=\!0.00\overline{6}$, $\tau_2\!=\!0.02$, $\theta_1\!=\!0.\overline{45}$ and~$\theta_2\!=\!5$ in~\eqref{cmodsls}--\eqref{D=0}. Qualitatively speaking, the behavior of~$\hkaCS(\omega)$ due to the SLS model features a mixture of trends furnished by the previous two models. Specifically, we observe that $\Re[\hkaCS]\lesseqgtrslant\Re[\hkaC]$ depending on the frequency (Kelvin-Voigt-type behavior), while $\Im[\hkaCS]$ is non-monotonic (Maxwell-type behavior).  
\begin{figure}[ht]
\centering{\includegraphics[width=0.92\linewidth]{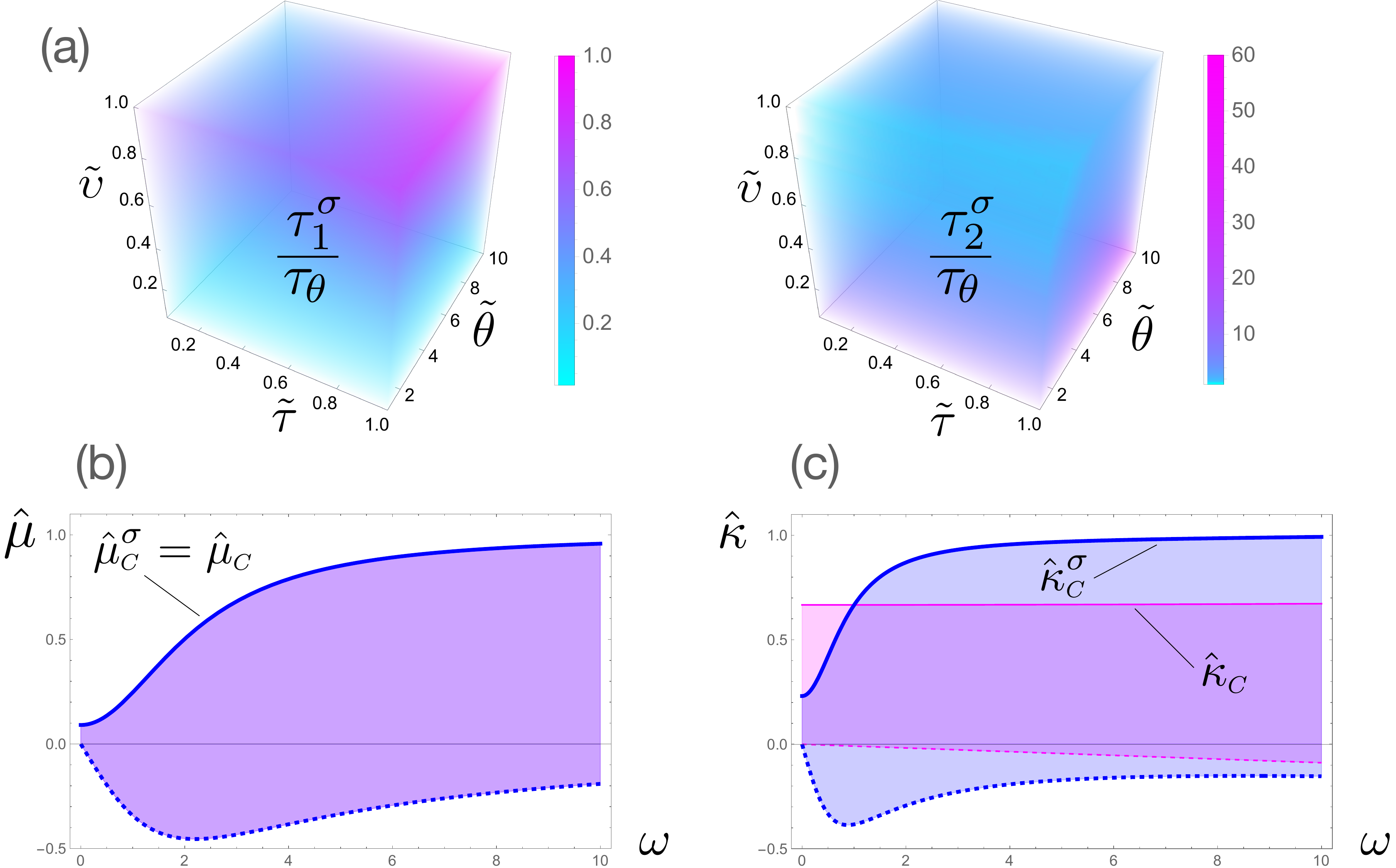}}
\caption{Standard linear solid: plane-stress (a) relaxation times~$\tau_j^\sigma$, (b) complex shear modulus~$\hmuCS$, and (c) complex bulk modulus~$\hkaCS$ (solid lines: real parts, dashed lines: imaginary parts).}
\label{num4-slss}
\end{figure}

For completeness, we note that the relaxation times~$\tau_j^\sigma$ synthesized in Fig.~\ref{num4-slss}(a) meet the constraints discussed in Section~\ref{secSLS}, see Fig.~\ref{convex}. Borrowing the values of $\kappa,\gamma,\tau_1$ and~$\theta_1$ from the last example, this is illustrated in  Fig.~\ref{convex2} which demonstrates that $\tau_1<\tau_1^\sigma<\theta_1<\tau_2^\sigma$ when $\tau_2$ and~$\theta_2$ are varied. 
\begin{figure}[ht]
\centering{\includegraphics[width=0.6\linewidth]{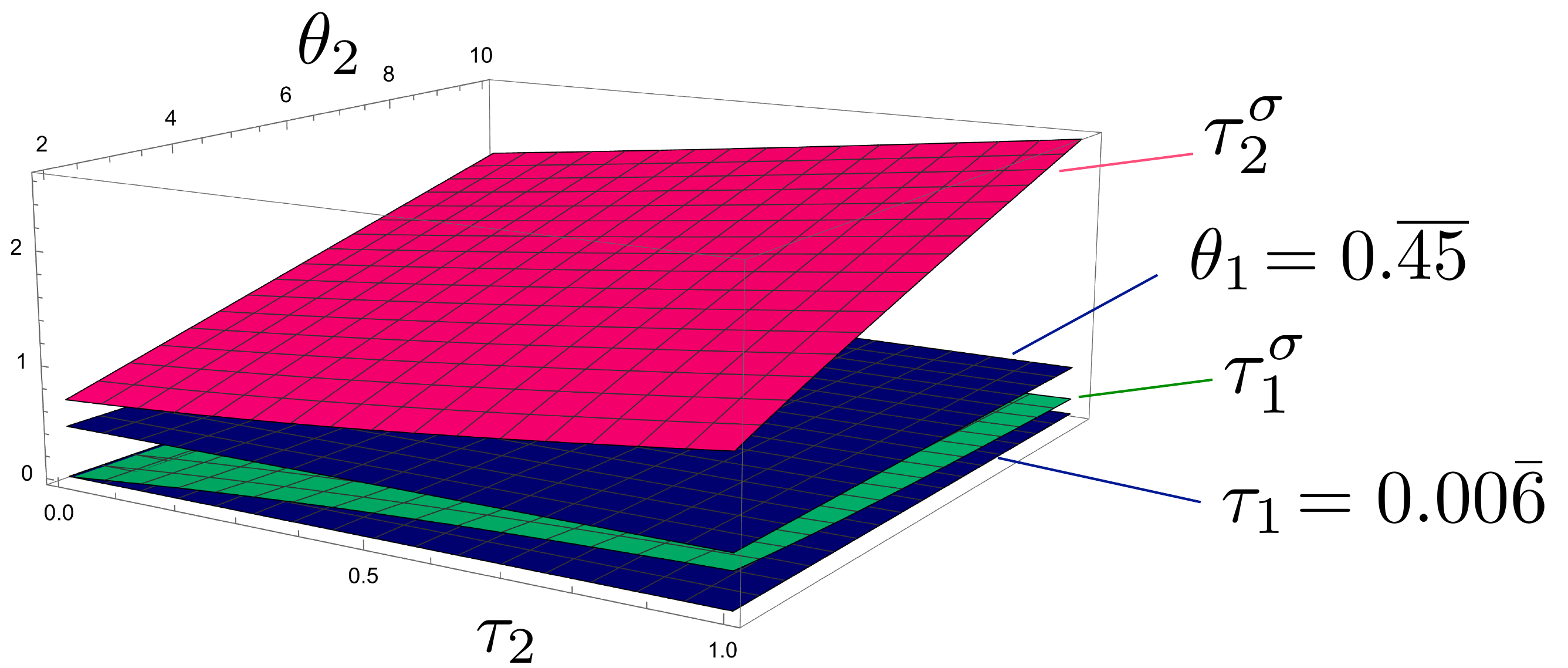}}
\caption{Standard linear solid: example behavior of $\tau_j^\sigma$ ($j\!=\!\overline{1,2}$) for fixed $\tau_1$ and~$\theta_1$.}
\label{convex2}
\end{figure}

\subsection{Jeffreys material}

With reference to Table~\ref{tabn1}, in this example we assume \emph{purely elastic} behavior in bulk and Jeffreys behavior in shear, namely
\begin{multline} \label{jeffn1}
\hkaC(\omega) = \kappa, \quad~
\hmuC(\omega) = -\ii\xi_1\omega\hh\frac{1-\ii\theta_2\hh\omega}{1-\ii\theta_1\hh\omega}  
\quad \implies \quad \\
\hkaCS(\omega) = \eta^\sigma\hh\frac{3(-\ii \omega)(1-\ii\theta_2\omega)}{A(\ii\omega)^2-B(\ii\omega)+1} \,=\,
\eta^{\sigma} \Big(\alpha_1 + \alpha_2
- \frac{\alpha_1}{1-\ii\tau^{\sigma}_1\omega} - \frac{\alpha_2}{1-\ii\tau^{\sigma}_2\omega} \Big), 
\end{multline}
where~$\kappa\nes=\nes K$; $\theta_1 = (\xi_1\!+\xi_2)/\gamma$; $\theta_2 = \xi_2/\gamma$; $\gamma\nes=\nes G$; $\eta^\sigma\nes=\nes \xi_1$; constants $\alpha_j$ are given by~\eqref{alpha12J:defx} with~$\tau_2\nes=\nes0$, and  
\begin{equation} \label{jeffn2}
A = \tau_{\xi}^2 \, \frac{4\varrho \hh \zeta}{3}, \quad~ B = \tau_{\xi}\, \frac{3\varrho(\zeta\shp 1)\shp 4}{3} \qquad \text{with}~~~ \tau_{\xi}:= \frac{\xi_1}{\kappa}, ~~~ \zeta := \frac{\xi_2}{\xi_1}.
\end{equation}
Since the plane-stress relaxation times $\tau^\sigma_j$ are given by the reciprocals of the roots of~\eqref{D=0}, 
we observe that $\tau^\sigma_j$ are necessarily real-valued because
\[
B^2-4\hh A \,=\, \tfrac{\tau_{\xi}^2}{9}\big[(3\varrho(\zeta\shp 1)\shp 4)^2 - 12\varrho\hh\zeta\big] \,>\, 0 \quad \text{for}~\varrho>0 ~\text{and}~ \zeta>0.
\]
On the basis of~\eqref{jeffn1}--\eqref{jeffn2}, the variation of the scaled relaxation times~$\tau^{\sigma}_j/\tau_{\xi}$ is plotted in Fig.~\ref{num5-jeff}(a). As seen from the display, there is a separation in scale between the two parameters in that (for given~$\varrho$ and~$\zeta$) $\tau_1^\sigma \ll \tau_2^\sigma$. Assuming the model parameters listed in Table~\ref{tabn1}, the complex moduli~$\hmuCS$ and~$\hkaCS$ due to~\eqref{jeffn1} are illustrated in Figs.~\ref{num4-slss}(b,c) which demonstrate that for an elastic bulk modulus $\hkaC(\omega) = \kappa$, its plane-stress companion~$\hkaCS(\omega)$ largely features Burgers-like behavior, inherited from the material's viscosity in shear. 

\begin{remark}
On recalling Fig.~\ref{spring-pots} and Fig.~\ref{viscomod2}, it is useful to note that the complex bulk modulus $\hkaCS(\omega)$ due to~\eqref{jeffn1} is in fact described the 1D Burgers model connecting (in series) an elastic spring with modulus $\tfrac{9}{4}\kappa$ and the Jeffreys element with complex modulus~$3\hmuC(\omega)$. In this vein, the Jeffreys model results shown in Fig.~\ref{num5-jeff} simultaneously describe the plane-stress bulk behavior of an isotropic Burgers solid whose parameters are listed in Table~\ref{tabn1}. 
\end{remark}

\begin{figure}[ht]
\centering{\includegraphics[width=0.92\linewidth]{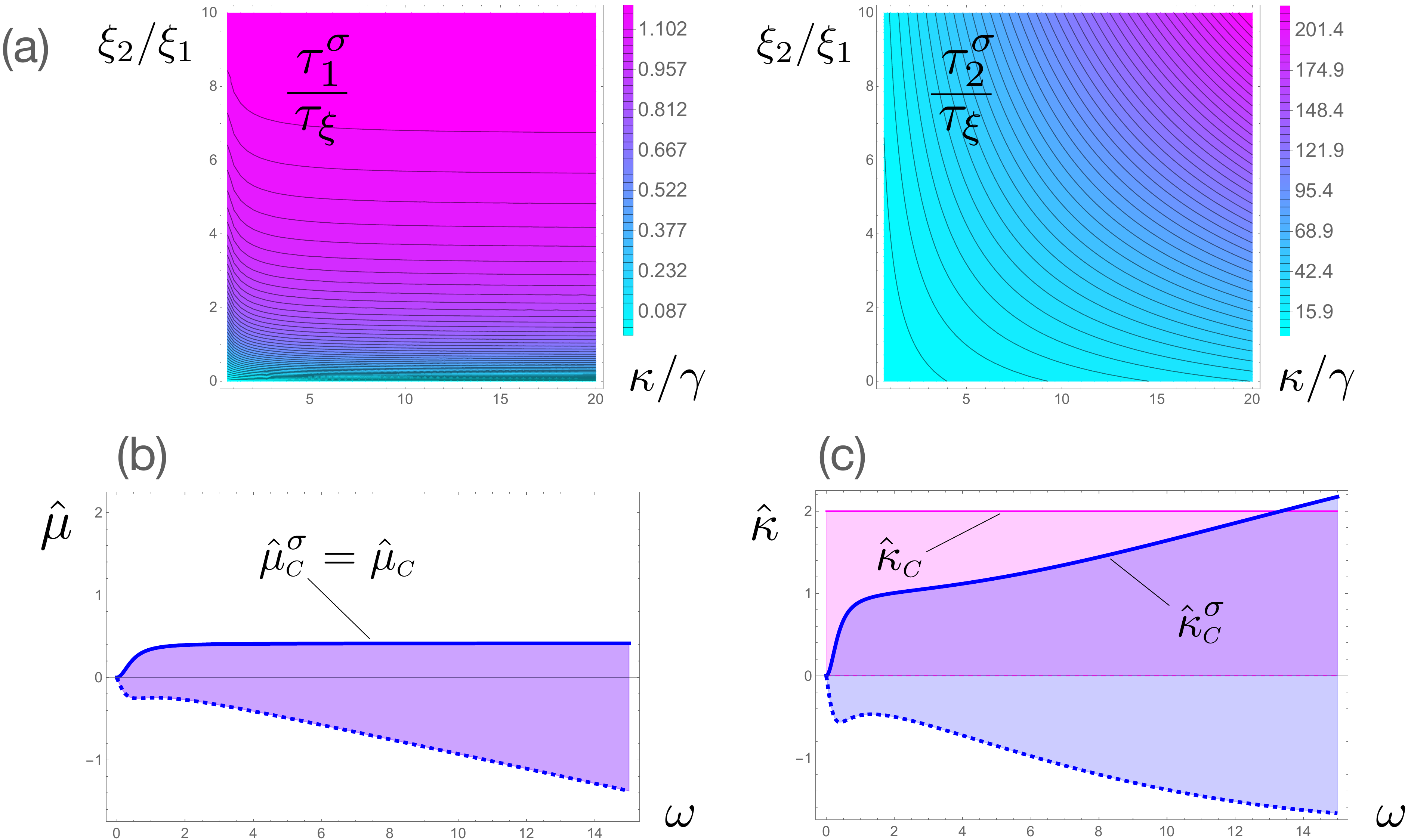}
\caption{Jeffreys material: plane-stress (a) relaxation times~$\tau_j^\sigma$, (b) complex shear modulus~$\hmuCS$, and (c) complex bulk modulus~$\hkaCS$ (solid lines: real parts, dashed lines: imaginary parts).}
\label{num5-jeff}}
\end{figure}

\subsection{Plane-stress thermodynamic potentials for a transversely isotropic viscoelastic solid}

The foregoing examples are uniformly focused on the case of material isotropy. For generality, however, it is also of interest to shed light on the anisotropic case. On recalling the analysis in Section~\ref{GSM}, we fill the gap by examining the case of a transversely anisotropic viscoelastic solid through the prism of plane-stress thermodynamic potentials featured by~\eqref{phi:psi:def-pstress1}. For clarity of discussion, we pursue the germane tensorial analysis using Voigt (matrix) notation summarized in Appendix~\ref{A:Voigt}. Specifically, for the stress tensor~$\bfsig$, strain tensor~$\bfeps$, and projection operators~$\bfP$ and~$\bfR$ we adopt the representation
\begin{equation} \label{voigtnot}
\begin{aligned}
\bfsigup &=\! \begin{array}{cccccc}
[\, \sigma_{11} & \sigma_{22} & \sigma_{33} & \sigma_{12} & \sigma_{23} & \sigma_{31} \,]\Tsup \end{array},  \qquad & \bfepsup &=\! \begin{array}{cccccc}
[\, \eps_{11} & \eps_{22} & \eps_{33} & 2\hh\eps_{12} & 2\hh\eps_{23} & 2\hh\eps_{31} \,]\Tsup \end{array}, \\
\bfsigup\psup  &=\! \begin{array}{cccccc}
[\, \sigma_{11} & \sigma_{22} & 0 & \sigma_{12} & 0 & 0 \,]\Tsup \end{array},  \qquad &\bfepsup\psup &=\! \begin{array}{cccccc}
[\, \eps_{11} & \eps_{22} & 0 & 2\hh\eps_{12} & 0 & 0 \,]\Tsup \end{array}, \\
\bbP &= \;\text{diag}\! \begin{array}{cccccc} [\, 1~~~  1~~~  0~~~  1~~~  0~~~ 0 \,] \end{array}, & 
\bbR &= \;\text{diag}\! \begin{array}{cccccc} [\, 0~~~  0~~~  1~~~  0~~~  1~~~ 1 \,] \end{array}.
\end{aligned}
\end{equation}

In what follows, we assume the constitutive behavior of a viscoelastic solid to exhibit axial symmetry with respect to the $\xi_3$-axis, and consider its plane stress reduction in the $\xi_3-\xi_1$ plane. For this type of material symmetry~\cite{ting:96}, the respective matrix counterparts $\bbC_\eps, \bbC_m$ and~$\bbC_\alpha$ of the fourth-order free energy tensors $\Ce, \CmT$ and~$\Ca$ in~\eqref{phi:psi:def} take the form
\begin{equation} \label{trans1}
\bbC_{\bullet} = \begin{bmatrix}
a_{\bullet} & b_{\bullet} & c_{\bullet} & 0           & 0           & 0  \\
b_{\bullet} & a_{\bullet} & c_{\bullet} & 0           & 0           & 0  \\
c_{\bullet} & c_{\bullet} & d_{\bullet} & 0           & 0           & 0  \\
0           & 0           & 0           & e_{\bullet} & 0           & 0  \\
0           & 0           & 0           & 0           & e_{\bullet} & 0  \\
0           & 0           & 0           & 0           & 0 & \tfrac{1}{2} (a_{\bullet}\!-b_{\bullet})
\end{bmatrix}, \quad \bullet\in\{\eps,m,\alpha\}. 
\end{equation}
By virtue of~\eqref{schur:def}, \eqref{td-pots-ps1} and~\eqref{schur:def2}, we obtain
\begin{equation} \label{trans2}
\bbC_{\bullet}\pssup = \left\lceil \begin{bmatrix}
a_\bullet + x_{\bullet} & b_\bullet + x_{\bullet}  & 0 \\
b_\bullet + x_{\bullet}  & a_\bullet + x_{\bullet}  & 0 \\
0 & 0 & e_{\bullet}
\end{bmatrix} \right\rceil, \quad \bullet\in\{\eps,\alpha\}, \qquad 
x_\bullet = \frac{c_m^2\hh d_\bullet + c_\bullet^2\hh d_\bullet' - 2 c_\bullet \hh c_m \hh d_m}{d_m^2-d_\alpha\hh d_\epsilon}, \quad
\left\{\begin{aligned}
 d_\epsilon' = d_\alpha \\
 d_\alpha' = d_\epsilon
\end{aligned} \right.
\end{equation}
and 
\begin{equation} \label{trans3}
\bbC_m\pssup = \left\lceil \begin{bmatrix}
a_m + y_m & b_m + y_m  & 0 \\
b_m + y_m  & a_m + y_m  & 0 \\
0 & 0 & e_m
\end{bmatrix} \right\rceil, \qquad 
y_m = \frac{c_m(c_\alpha\hh d_\eps + d_\alpha\hh c_\eps) - d_m(c_m^2\hh +c_\alpha\hh c_\eps)}
{d_m^2-d_\alpha\hh d_\epsilon}, 
\end{equation}
where for a generic $3\times 3$ matrix $\textbf{M}$, $\lceil \textbf{M} \rceil$ denotes its $6\times 6$ ``inflation'' generated by the insertion of zero rows and columns at locations (3,5,6) corresponding to the zeros of the 6-vectors $\bfsigup\psup$ and $\bfepsup\psup$ in~\eqref{voigtnot} (see Appendix~\ref{A:Voigt} for details). In the isotropic case, the result simplifies in that $d_\bullet\!=\!a_\bullet$, $c_\bullet\nes=\nes b_\bullet$, and $e_\bullet \nes=\! \tfrac{1}{2}(a_\bullet\!-\! b_\bullet)$ for $\bullet\!\in\!\{\eps,m,\alpha\}$ (two independent moduli per matrix). The corresponding matrices describing the plane-stress dissipation potential in~\eqref{phi:psi:def-pstress1} are obtained by replacing the symbol ``$\textbf{C}$'' by~``$\textbf{D}$'' in~\eqref{trans1}, \eqref{trans2}, and~\eqref{trans3}. 

\section{Summary}  

\noindent In this study, we expose the relationship between the ``bulk'' constitutive parameters of a viscoelastic solid and their plane-stress counterparts; a classical topic that has so far eluded an in-depth scrutiny. In doing so, we focus on the frequency-domain behavior of linear viscoelastic materials whose 3D rheological behavior is described by a set of constant-coefficient ordinary differential equations. For the special case of material isotropy, we provide an in-depth analysis of the complex moduli and memory functions of solids whose 3D bulk and shear modulus each draw from a suite of classical ``spring and dashpot'' rheological models. To facilitate high-fidelity viscoelastic characterization of natural and engineered solids via (i) thin-sheet testing and (ii) applications of the error-in-constitutive-relation approach to the inversion of (kinematic) sensory data, we also examine the reduction of thermodynamic potentials describing linear viscoelasticity under the plane stress condition. Here, we find that the germane reductions of the fourth-order  elasticity and dissipation tensors (specifying the thermodynamic model) follow the ``generalized-inverse-of-the-projected-inverse'' pattern featured by the preceding developments. The primary analysis is accompanied by a set of ancillary results pertaining to the issues of causality, Voigt representation of the constitutive description, and limitations of the featured (example) thermodynamic model in the context of viscoelasticity. The study is concluded by a set of examples, illustrating the effect on the plane stress condition on the behavior of both isotropic and anisotropic viscoelastic solids.

\paragraph{Acknowledgment} This work was supported as part of the \emph{Center on Geo-processes in Mineral Carbon Storage}, an Energy Frontier Research Center funded by the U.S. Department of Energy, Office of Science, Basic Energy Sciences at the University of Minnesota under award \# DE-SC0023429. Thanks are extended to MTS Systems Corporation for providing an opportunity for M. Bonnet to visit the University of Minnesota as an MTS Visiting Professor of Geomechanics. 

\bibliography{marcbibs}

\begin{thebibliography}{10}

\bibitem{jmps2024}
M.~Bonnet, P.~Salasiya, and B.B. Guzina.
\newblock Modified error-in-constitutive-relation ({MECR}) framework for the
  characterization of linear viscoelastic solids.
\newblock {\em J. Mech. Phys. Solids}, 190:105746, 2024.

\bibitem{comitti2024thermomechanical}
A.~Comitti and F.~Bosi.
\newblock Thermomechanical characterisation and plane stress linear
  viscoelastic modelling of ethylene-tetra-fluoroethylene foils.
\newblock {\em Mech. Time-Depend. Mat.}, 28:2041--2068, 2024.

\bibitem{diaz2015modified}
M.~I. Diaz, W.~Aquino, and M.~Bonnet.
\newblock A modified error in constitutive equation approach for
  frequency-domain viscoelasticity imaging using interior data.
\newblock {\em Comput. Meth. Appl. Mech. Engrg.}, 296:129--149, 2015.

\bibitem{Evans2009}
R.~M.~L. Evans, M.~Tassieri, D.~Auhl, and T.~A. Waigh.
\newblock Direct conversion of rheological compliance measurements into storage
  and loss moduli.
\newblock {\em Phys. Rev. E}, 80:012501, 2009.

\bibitem{findley2013creep}
W.~N. Findley, J.~S. Lai, and F.~A. Davis.
\newblock {\em Creep and relaxation of nonlinear viscoelastic materials}.
\newblock Dover Publications, 2013.

\bibitem{ger:nqs:suq:83}
P.~Germain, Q.~S. Nguyen, and P.~Suquet.
\newblock Continuum thermodynamics.
\newblock {\em ASME J. Appl. Mech.}, 50:1010--1020, 1983.

\bibitem{gra}
I.~S. Gradshteyn and I.~M. Ryzhik.
\newblock {\em Tables of integrals, series and products (seventh edition)}.
\newblock Elsevier, 2007.

\bibitem{Halp75}
B.~Halphen and Q.~S. Nguyen.
\newblock Sur les mat\'eriaux standards g\'en\'eralis\'es.
\newblock {\em J. M\'ecanique}, 14:39--63, 1975.

\bibitem{hu1989kramers}
B.~Y.-K. Hu.
\newblock Kramers--{K}ronig in two lines.
\newblock {\em Am. J. Phys.}, 57:821--821, 1989.

\bibitem{king:vol12}
F.~W. King.
\newblock {\em Hilbert transforms. Vols. 1 and 2}.
\newblock Cambridge University Press, 2009.

\bibitem{Kolsky}
H.~Kolsky.
\newblock {\em Stress Waves in Solids}.
\newblock Dover, 1963.

\bibitem{makris2019frequency}
N.~Makris.
\newblock The frequency response function of the creep compliance.
\newblock {\em Meccanica}, 54:19--31, 2019.

\bibitem{Malvern}
L.~E. Malvern.
\newblock {\em Introduction to the mechanics of a continuous medium}.
\newblock Prentice Hall, 1969.

\bibitem{mariappan2010magnetic}
Y.~K. Mariappan, K.~J. Glaser, and R.~L. Ehman.
\newblock Magnetic resonance elastography: a review.
\newblock {\em Clinical Anatomy}, 23:497--511, 2010.

\bibitem{maugin:92}
G.~A. Maugin.
\newblock {\em The thermomechanics of plasticity and fracture}.
\newblock Cambridge University Press, 1992.

\bibitem{oliphant2001complex}
T.~E. Oliphant, A.~Manduca, R.~L. Ehman, and J.~F. Greenleaf.
\newblock Complex-valued stiffness reconstruction for magnetic resonance
  elastography by algebraic inversion of the differential equation.
\newblock {\em Magnet. Reson. Med.}, 45:299--310, 2001.

\bibitem{parker2005unified}
K.~J. Parker, L.~S. Taylor, S.~Gracewski, and D.~J. Rubens.
\newblock A unified view of imaging the elastic properties of tissue.
\newblock {\em J. Acoust. Soc. Am.}, 117:2705--2712, 2005.

\bibitem{pourahmadian2018elastic}
F.~Pourahmadian and B.~B. Guzina.
\newblock On the elastic anatomy of heterogeneous fractures in rock.
\newblock {\em Int. J. Rock Mech. \& Mining Sci.}, 106:259--268, 2018.

\bibitem{salo:13}
M.~Salo.
\newblock Fourier analysis and distribution theory.
\newblock Lecture notes, Jyvaskyla university, Finland, 2013.

\bibitem{sanchez2008use}
F.~M. S{\'a}nchez-Ar{\'e}valo and G.~Pulos.
\newblock Use of digital image correlation to determine the mechanical behavior
  of materials.
\newblock {\em Materials characterization}, 59:1572--1579, 2008.

\bibitem{sause2016digital}
M.~G.~R. Sause.
\newblock Digital image correlation.
\newblock In {\em In Situ Monitoring of Fiber-Reinforced Composites: Theory,
  Basic Concepts, Methods, and Applications}, pages 57--129. Springer, 2016.

\bibitem{schwartz:62}
L.~Schwartz.
\newblock Causalit\'e et analyticit\'e.
\newblock {\em An. Acad. Brasil. Ci.}, 34:13--21, 1962.

\bibitem{sigrist2017ultrasound}
R.~M.~S. Sigrist, J.~Liau, A.~El~Kaffas, M.~C. Chammas, and J.~K. Willmann.
\newblock Ultrasound elastography: review of techniques and clinical
  applications.
\newblock {\em Theranostics}, 7(5):1303, 2017.

\bibitem{sinkus2005imaging}
R.~Sinkus, M.~Tanter, S.~Catheline, J.~Lorenzen, C.~Kuhl, E.~Sondermann, and
  M.~Fink.
\newblock Imaging anisotropic and viscous properties of breast tissue by
  magnetic resonance-elastography.
\newblock {\em Magnet. Reson. Med.}, 53:372--387, 2005.

\bibitem{taguchi2020computing}
S.~Taguchi, K.~Takeo, and S.~Yoneyama.
\newblock Computing stresses from measured in-plane strains in viscoelastic
  body under plane stress condition.
\newblock {\em Adv. Exp. Mech.}, 5:135--140, 2020.

\bibitem{tan2016gradient}
L.~Tan, M.~D.~J. McGarry, E.~E.~W. Van~Houten, M.~Ji, L.~Solamen, J.~B. Weaver,
  and K.~D. Paulsen.
\newblock Gradient-based optimization for poroelastic and viscoelastic mr
  elastography.
\newblock {\em IEEE T. Med. Imaging}, 36:236--250, 2016.

\bibitem{ting:96}
T.~C.~T. Ting.
\newblock {\em Anisotropic elasticity. Theory and applications}.
\newblock Oxford, 1996.

\bibitem{tokmashev2013experimental}
R.~Tokmashev, A.~Tixier, and B.~B. Guzina.
\newblock Experimental validation of the topological sensitivity approach to
  elastic-wave imaging.
\newblock {\em Inverse Problems}, 29:125005, 2013.

\bibitem{tschoegl}
N.~W. Tschoegl.
\newblock {\em The phenomenological theory of linear viscoelastic behavior. An
  introduction}.
\newblock Springer-Verlag, 1989.

\bibitem{voigt1910lehrbuch}
W.~Voigt.
\newblock {\em Lehrbuch der Kristallphysik}.
\newblock BG Teubner, 1910.

\bibitem{yoneyama1999evaluation}
S.~Yoneyama, K.~Ogawa, A.~Misawa, and M.~Takashi.
\newblock Evaluation of time-dependent fracture mechanics parameters of a
  moving crack in a viscoelastic strip.
\newblock {\em JSME Int. J. A-Solid M.}, 42:624--630, 1999.

\end{thebibliography}

\appendix

\section{Appendix}

\subsection{Matrix formulation}\label{A:Voigt}

To cater for numerical implementation, we arrange the independent components of~$\bfsig$ and~$\bfeps$ as 6-vectors, which then allows $\bfC$ to be written as a $6\times 6$ matrix. Specifically, using Voigt notation~\cite{voigt1910lehrbuch,ting:96} we write 
\begin{equation}
\begin{aligned}
\bfsigup &= \begin{array}{cccccc}
[\, \sigma_{11} & \sigma_{22} & \sigma_{33} & \sigma_{12} & \sigma_{23} & \sigma_{31} \,]\Tsup \end{array},  \qquad & \bfepsup &= \begin{array}{cccccc}
[\, \eps_{11} & \eps_{22} & \eps_{33} & 2\hh\eps_{12} & 2\hh\eps_{23} & 2\hh\eps_{31} \,]\Tsup \end{array}, \\
\bfsigup\psup  &= \begin{array}{cccccc}
[\, \sigma_{11} & \sigma_{22} & 0 & \sigma_{12} & 0 & 0 \,]\Tsup \end{array},  \qquad &\bfepsup\psup &= \begin{array}{cccccc}
[\, \eps_{11} & \eps_{22} & 0 & 2\hh\eps_{12} & 0 & 0 \,]\Tsup \end{array}, \\
\bfsigup\rsup &= \begin{array}{cccccc}
[\, 0 & 0& \sigma_{33} & 0 & \sigma_{23} & \sigma_{31} \,]\Tsup
\end{array} ,  \qquad &\bfepsup\rsup &= \begin{array}{cccccc}
[\, 0 & 0 & \eps_{33} & 0 & 2\hh\eps_{23} & 2\hh\eps_{31} \,]\Tsup \end{array} . \\
\end{aligned}
\end{equation}
As a result, the linear elastic constitutive relationship takes the matrix form $\bfsigup=\bbC\hh\bfepsup$, with $\bbC$ being given in terms of the entries of the elasticity tensor $\bfC$ by
\begin{equation}
\bbC =   \begin{bmatrix}
C_{1111}& C_{1122}& C_{1133} & C_{1112} & C_{1123} & C_{1131}\\
C_{2211}& C_{2222}& C_{2233} & C_{2212} & C_{2223} & C_{2231}\\
\vdots&\vdots&\vdots &\vdots &\vdots &\vdots \\
C_{3111}& C_{3122}& C_{3133} & C_{3112} & C_{3123} & C_{3131}
\end{bmatrix} = \bbC\Tsup.
\end{equation}
Likewise, in the matrix form the projections $\bfP$ and $\bfR$ introduced in~\eqref{newten1} read 
\begin{equation}
  \bbP = \text{diag}\! \begin{array}{cccccc} [\, 1 & 1 & 0 & 1 & 0 & 0 \,], \end{array} \qquad
  \bbR = \text{diag}\! \begin{array}{cccccc} [\, 0 & 0 & 1 & 0 & 1 & 1 \,]. \end{array}
\end{equation}
On introducing the short-hand notation
\begin{equation}
\bbA = \begin{bmatrix}
A_{11} & A_{12} & 0 & A_{13} & 0 & 0 \\
A_{21} & A_{22} & 0 & A_{23} & 0 & 0 \\
0 & 0 & 0 & 0 & 0 & 0 \\
A_{31} & A_{32} & 0 & A_{33} & 0 & 0 \\
0 & 0 & 0 & 0 & 0 & 0 \\
0 & 0 & 0 & 0 & 0 & 0
\end{bmatrix}, \quad
\bbB = \begin{bmatrix}
A_{11} & A_{12} & A_{13} \\
A_{21} & A_{22} & A_{23} \\
A_{31} & A_{32} & A_{33}
\end{bmatrix}
\qquad
\Longrightarrow \qquad
\lfloor\bbA\rfloor := \bbB, \quad \lceil\bbB\rceil := \bbA 
\end{equation}
and recalling the expressions~\eqref{strain1} and~\eqref{stress4} from Section~\ref{pspsx}, one obtains $\bbC^\eps$ (matrix version of the \emph{plane strain} elasticity tensor~$\bfC^\eps$) and $\bbC^\sigma$ (matrix version of the \emph{plane stress} elasticity tensor~$\bfC^\sigma$) respectively as
\begin{equation}
\bbC^\eps = \bbP \, \bbC \, \bbP =
\left\lceil\begin{bmatrix}
C_{1111} & C_{1122} & C_{1112}  \\
C_{2211} & C_{2222} & C_{2212} \\
C_{1211} & C_{1222} & C_{1212}
\end{bmatrix}\right\rceil
\end{equation}
and
\begin{equation}
\bbC^\sigma = (\bbP \, \bbC^{-1} \bbP)^\dagger = \lceil\,\lfloor\bbP \, \bbC^{-1} \bbP\rfloor^{-1}\,\rceil =
\left\lceil\rule{0em}{8mm}\right.\hspace*{-3pt}
\begin{bmatrix}
S_{1111} & S_{1122} & S_{1112} \\
S_{2211} & S_{2222} & S_{2212} \\
S_{1211} & S_{1222} & S_{1212}
\end{bmatrix}^{-1}
\left\rceil\rule{0em}{8mm}\right.
\end{equation}
according to Remark~\ref{MP}, where
\[
\bfC^{-1} := \bfS = S_{ijkl} \, \bfe_i\otimes\bfe_j\otimes\bfe_k\otimes\bfe_l
\] 
is the germane compliance tensor satisfying $\bfeps = \bfS\dip\bfsig$.

\paragraph{Reduced constitutive relationships}

On denoting by~$\lfloor\bfsigup\psup\rfloor$ and~$\lfloor\bfepsup\psup\rfloor$ the ``pruned'' versions of~$\bfsigup\psup$ and~$\bfepsup\psup$ that omit trivial entries, we obtain the 2D constitutive relationships
\begin{equation}
\begin{aligned}
\text{Plane strain}: \quad & \lfloor\bfsigup\psup\rfloor = \begin{bmatrix}
C_{1111} & C_{1122} & C_{1112}  \\
C_{2211} & C_{2222} & C_{2212} \\
C_{1211} & C_{1222} & C_{1212}
\end{bmatrix} \lfloor\bfepsup\psup\rfloor,
\end{aligned}\end{equation}
and
\begin{equation}
\begin{aligned}
\text{Plane stress}: \quad & \lfloor\bfsigup\psup\rfloor = \begin{bmatrix}
S_{1111} & S_{1122} & S_{1112} \\
S_{2211} & S_{2222} & S_{2212} \\
S_{1211} & S_{1222} & S_{1212}
\end{bmatrix}^{-1}  \lfloor\bfepsup\psup\rfloor.
\end{aligned}\end{equation}

\subsubsection*{Isotropic elasticity}
In the isotropic case, we have
\begin{equation}
\lfloor\bbC^\eps\rfloor = \lfloor \bbP \, \bbC \, \bbP \rfloor  = \frac{E}{1-\nu-2\nu^2}
\begin{bmatrix}
  1-\nu & \nu & 0 \\ \nu & 1-\nu & 0 \\ 0 & 0 & 1-2\nu
\end{bmatrix}
\end{equation}
for the plane strain problem, and
\begin{equation} \label{ps-matrix1}
\lfloor\bbP \, \bbC^{-1} \bbP\rfloor \,=\, \frac{1}{E}
\begin{bmatrix}
 1 & -\nu & 0 \\
-\nu & 1 & 0 \\
0 &  & 1\!+\!\nu
\end{bmatrix} \qquad\Longrightarrow\qquad
\lfloor\bbC^\sigma\rfloor \,=\, \frac{E}{1-\nu^2}
\begin{bmatrix}
 1 & \nu & 0 \\ \nu & 1 & 0 \\ 0 & 0 & 1-\nu
\end{bmatrix}
\end{equation}
for the plane stress configuration, where~$E$ and~$\nu$ are Young's modulus and Poisson's ratio. On formally rewriting~$\lfloor\bbC^\sigma\rfloor$ in the form of~$\lfloor\bbC^\eps\rfloor$, we recover the well-known expressions for the \emph{effective} plane-stress Young's modulus~$E'$ and Poisson's ratio~$\nu'$ \cite{Kolsky,Malvern} in that
\begin{equation} \label{epvp}
\lfloor\bbC^\sigma\rfloor \,=\, \frac{E'}{1-\nu'-2\nu'^2}
\begin{bmatrix}
  1-\nu' & \nu' & 0 \\ \nu' & 1-\nu' & 0 \\ 0 & 0 & 1-2\nu'
\end{bmatrix}
\qquad \text{where} \quad
\left\{ \begin{aligned}
E'  &= E\hh \frac{1\nes+\nes2\nu}{(1\nes+\nes\nu)^2} \\
\nu' &= \frac{\nu}{1\nes+\nes\nu}
\end{aligned} \right.~~.
\end{equation}

\begin{remark}
From~\eqref{ps-matrix1}, we obtain the conditions for positive definiteness of the plane-stress isotropic elasticity tensor $\lfloor\bbC^\sigma\rfloor$ as
\begin{equation}
E>0, ~~ -1<\nu< 1.
\end{equation}
One may further observe that (i) the transition from ``regular'' to auxetic material under the plane stress condition occurs at $\nu\!=\!\nu'\!=\!0$, and (ii) (macroscopically) isotropic bulk metamaterials  with $\tfrac{1}{2}\!<\!\nu\!<\!1$ -- and so negative bulk modulus -- give rise to a positive-definite elasticity tensor $\lfloor\bbC^\sigma\rfloor$.
\end{remark}

\subsection{Kramers-Kronig relationship}\label{A4}

Applying the Fourier convolution theorem given by the second of~\eqref{FCT} to definition~\eqref{H:def} of the Hilbert transform and invoking the Fourier pair~(c) in Table~\ref{tab:fourier}, we obtain $\Fcal[f\,\text{sgn}(t)](\omega)=\ii\Hcal[\hat{f}](\omega)$ and consequently 
\[
\ii\hh\Fcal[f - f\,\text{sgn}(t)](\omega) = \ii\hh\hat{f}(\omega) + \Hcal[\hat{f}](\omega). 
\]
Every causal function verifies $f - f\,\text{sgn}(t)\!=\!0$, while the Fourier transform is an $\Scal'\to\Scal'$ isomorphism. As a result, temporal function $f\!\in\!\Scal'$ is causal if and only~\cite{schwartz:62} its Fourier transform $\hat{f}$ satisfies the Kramers-Kronig relations, namely 
\begin{equation} \label{a:causality}
f\ \text{causal} \ \iff \ \Fcal[f - f\,\text{sgn}(t)] = 0  \ \iff \
\hat{f} = \ii\hh\Hcal[\hat{f}] \ \iff \
  \left\{ \begin{aligned} \Re[\hat{f}] &= -\Hcal[\Im[\hat{f}]] \\
    \Im[\hat{f}] &= \Hcal[\Re[\hat{f}]] \end{aligned} \right., 
\end{equation}
where $\Hcal[\hat{f}]$ is given by~\eqref{H:def}, understood in the Cauchy principal value sense.\enlargethispage{1ex}

\subsection{Application of the Kramers-Kronig relationship toward computing~$\alpha$ and~$\beta$ in~\eqref{Jhat:Dsing}} \label{alfabeta}

In the case (ii) of Section~\ref{relcreep}, by~\eqref{Jhat:Dsing} the Fourier image of $J(t)$ has the form
\begin{equation}
\hat{J}(\omega)
 = \ii \hh\omega^{-2}\hh f(\omega)  + \alpha\hh \delta(\omega) + \beta\hh \delta'(\omega), \label{aux11}
\end{equation}
where $f(\omega) := \Pcal(-\ii\omega)/\Rcal\cal(-\ii\omega)$ is a smooth rational function. We then may write $f(\omega)=f(0)+\omega f'(0)+\omega^2 g(\omega)$, where $g(\omega)$ is also a smooth rational function whose poles are those of $f$ and are so known (see Section~\ref{models:EDO}) to be purely imaginary. Inserting the above expansion of $f$ into~\eqref{aux11}, the Kramers-Kronig relations~\eqref{visco10cm} to be satisfied by $\hat{J}(\omega)$ reads
\begin{equation} \label{aux12}
  \ii\hh\omega^{-2}\hh f(0) + \ii\hh\omega^{-1}\hh f'(0) + \alpha\hh \delta(\omega) + \beta\hh \delta'(\omega)
 \,=\, \ii\hh\Hcal\big[ \ii\hh \omega^{-2}\hh f(0) + \ii\hh\omega^{-1}\hh f'(0) + \alpha\hh \delta(\omega) + \beta\hh \delta'(\omega) \big], 
\end{equation}
having used the fact that $g(\omega)$ itself verifies~\eqref{visco10cm} by its aforementioned properties and the proof of Proposition~\ref{KK:prop}. We then recall the known Hilbert distributional transform pairs~\cite[Vol.~2, Appendix~1, Table 1.10]{king:vol12}
\begin{equation} \label{aux12x}
  \Hcal[\delta(\omega)] = \frac{1}{\pi\omega}, \qquad
  \Hcal[\delta'(\omega)] = -\frac{1}{\pi\omega^2}, \qquad
  \Hcal[\omega^{-1}] = -\pi\delta(\omega), \qquad
  \Hcal[\omega^{-2}] = \pi\delta'(\omega),
\end{equation}
using the short-hand notations $\text{PV}\omega^{-1}\mapsto\omega^{-1}$ and $\text{FP}\omega^{-2}\mapsto\omega^{-2}$. By virtue of ~\eqref{aux12} and~\eqref{aux12x}, we obtain the distributional equality
\begin{equation}  \label{aux13x}
\big(\alpha - \pi f'(0)\big)\,
\Big(\delta(\omega)-\frac{\ii}{\pi\omega} \Big) + 
\big(\beta + \pi f(0)\big)\, 
\Big(\delta'(\omega) +\frac{\ii}{\pi\omega^2} \Big) = 0.
\end{equation}
The last causality requirement is satisfied by letting $\alpha=\pi f'(0)$ and $\beta=-\pi f(0)$, which completes the proof of~\eqref{Jhat:Dsing}.

\subsection{Fourier transform pairs} \label{A:FTP} 

We consider the Fourier transform pair
\begin{equation} \label{four1}
\hat{f}(\omega) = \mathcal{F}[f(t)](\omega) = \int_{\mathbb{R}} f(t) \hh e^{\ii\omega t} \dt, \qquad\quad
f(t) = \mathcal{F}^{-1}[\hat{f}(\omega)](t) = \frac{1}{2\pi} \int_{\mathbb{R}} \hat{f}(\omega) \hh e^{-\ii\omega t} \dom.
\end{equation}
Since $\mathcal{F}^{-1}[\delta(\omega)](t)=(2\pi)^{-1}$ and so $\mathcal{F}[1](\omega)=2\pi\hh \delta(\omega)$, by differentiating we obtain
\begin{equation} \label{four2}
\frac{\partial}{\partial \omega}\Big[
2\pi\hh \delta(\omega) = \int_{\mathbb{R}}  e^{\ii\omega t} \dt \Big] \quad \Rightarrow \quad
2\pi\hh \delta'(\omega) =\int_{\mathbb{R}}  \ii t \hh e^{\ii\omega t} \dt,
\end{equation}
where $\delta'(\omega) = \dd \delta(\omega)/\dom$ which carries he property that
\begin{equation} \label{deltaprime}
\int_{\mathbb{R}} \delta'(\omega) \hh f(\omega) \dom = - \int_{\mathbb{R}} \delta(\omega) \hh f'(\omega) \dom = -f'(0).
\end{equation}
for any $f\!\in\!C^1(\mathbb{R})$ that vanishes at~$\pm\infty$. As a result, we have
\begin{equation} \label{four3}
\mathcal{F}[t](\omega) = -2\pi\ii \hh \delta'(\omega),  \qquad
\mathcal{F}^{-1}[\delta'(\omega)](t) = \frac{\ii\hh t}{2\pi}.
\end{equation}
From~\cite{gra}, on the other hand, we recall two additional pairs
\begin{equation} \label{four4}
\mathcal{F}[\sign(t)](\omega) = \frac{2\ii}{\omega}, \qquad
\mathcal{F}^{-1}[\omega^{-1}](t) = - \frac{\ii}{2} \sign(t)
\end{equation}
(due to $\mathcal{F}[t^{-1}](\omega)=\ii\pi\hh \sign(\omega)$) and
\begin{equation} \label{four5}
\mathcal{F}[t\, \sign(t)](\omega) = -\frac{2}{\omega^2}, \qquad
\mathcal{F}^{-1}[\omega^{-2}](t) = - \frac{1}{2} t\, \sign(t).
\end{equation} 

\end{document}